\DeclarePairedDelimiter{\ceil}{\lceil}{\rceil}
\newtheorem{theorem}{Theorem}[]
\newtheorem{lemma}{Lemma}[]
\newtheorem{definition}{Definition}
\newtheorem{assump}{Assumption}
\newtheorem{proposition}{Proposition}
\newtheorem{remark}{Remark}
\newcommand{\D}{\textbf{\texttt{DC-DistADMM}}}
\newcommand{\x}{\mathbf{x}}
\newcommand{\y}{\mathbf{y}}
\newcommand{\A}{\mathbf{A}}
\newcommand{\bb}{\mathbf{b}}
\newcommand{\M}{\mathbf{M}}
\newcommand{\F}{\mathbf{F}}
\newcommand{\La}{\mathcal{L}}
\newcommand{\V}{\mathcal{V}}
\DeclareMathOperator*{\argmin}{argmin}
\DeclareMathOperator*{\minimize}{minimize}
\DeclareMathOperator*{\diam}{diam}
\DeclareMathOperator*{\relint}{relint}
\DeclareMathOperator*{\aff}{Aff}
\definecolor{color1}{rgb}{0,0,0}
\title{\textbf{\texttt{DC-DistADMM}}:ADMM Algorithm for Constrained Optimization over Directed Graphs}
\author{Vivek Khatana$^1$ and Murti V. Salapaka$^{1}$ \\
\thanks{This work is supported by the Advanced Research Projects Agency-Energy OPEN through the project titled "Rapidly Viable Sustained Grid" via grant no. DE-AR0001016.}
\thanks{$^{1}$ Vivek Khatana \{{\tt\small khata010@umn.edu}\} and Murti V. Salapaka \{{\tt\small murtis@umn.edu\}} are with Department of Electrical and Computer Engineering, University of Minnesota, Minneapolis, USA,}
}
\begin{document}

\maketitle

\begin{abstract}
\textcolor{black}{
This article reports an algorithm for multi-agent distributed optimization problems with a common decision variable, local linear equality and inequality constraints and set constraints with convergence rate guarantees. \textcolor{black}{The algorithm accrues all the benefits of the Alternating Direction Method of Multipliers (ADMM) approach}. It also overcomes the limitations of existing methods on convex optimization problems with linear inequality, equality and set constraints by allowing directed communication topologies. Moreover, the algorithm can be synthesized distributively. The developed algorithm has: (i) a $O(1/k)$ rate of convergence, where $k$ is the iteration counter, when individual functions are convex but not-necessarily differentiable, and (ii) a geometric rate of convergence to any arbitrary small neighborhood of the optimal solution, when the objective functions are smooth and restricted strongly convex at the optimal solution. The  efficacy of the algorithm is evaluated by a comparison with state-of-the-art constrained optimization algorithms in solving a constrained distributed $\ell_1$-regularized logistic regression problem, and unconstrained optimization algorithms in solving a $\ell_1$-regularized Huber loss minimization problem. Additionally, a comparison of the algorithm's performance with other algorithms in the literature that utilize multiple communication steps is provided.}
\\\\
\textit{keywords}: Distributed optimization, constrained optimization, alternating direction method of multipliers
(ADMM), directed graphs, multi-agent networks, finite-time consensus.

\end{abstract}

\begin{section}{Introduction}\label{sec:introduction}
\textcolor{black}{Consider a group of $n$ agents connected through a directed graph, $\mathcal{G}(\mathcal{V},\mathcal{E})$,  where $\mathcal{V}$ and $\mathcal{E}$ are the set of vertices and directed edges respectively. Each agent can transmit information to other agents restricted by the directed graph $\mathcal{G}(\mathcal{V},\mathcal{E})$; an agent $i$ can transmit to agent $j$ if a directed link $i\rightarrow j$ exists in $\mathcal{E}.$ The agents focus on solving the following distributed optimization problem:}
\begin{align}\label{eq:introprob}
    & \hspace{-1.75in} \textcolor{black}{\minimize \limits_{\tilde{x} \in \mathbb{R}^p} \ \ \ \tilde{f}(\tilde{x}) = \textstyle  \sum_{i=1}^{n} \tilde{f}_i(\tilde{x}) } \\
    \textcolor{black}{\mbox{subject to}} \ \ \textstyle \textcolor{black}{ C_i \tilde{x} = c_i, \ D_i \tilde{x} \leq d_i, \ \forall i \in \mathcal{V}}, & \ \  \textcolor{black}{\tilde{x} \in \textstyle \bigcap\limits_{i=1}^n\mathcal{X}_i}, \nonumber
\end{align}
\textcolor{black}{where, $x \in \mathbb{R}^p$ is a global optimization variable. $\tilde{f}_i:\mathbb{R}^p \rightarrow \mathbb{R}$ is the local objective function of agent $i$. $\mathcal{X}_i$ is a convex constraint set associated with the variables of agent $i$. $C_i \tilde{x} = c_i$ with $C_i \in \mathbb{R}^{m_1 \times p}, c_i \in \mathbb{R}^{m_1}$ and $D_i \tilde{x} \leq d_i$ with $D_i \in \mathbb{R}^{m_2 \times p}, d_i \in \mathbb{R}^{m_2}$ are the local equality and inequality constraints of agent $i$.} Many problems in various engineering fields such as wireless systems, multi-agent coordination and control \cite{nedic2018distributed}, and machine learning \cite{nedic2020distributed} can be posed in the form of problem~(\ref{eq:introprob}). \textcolor{black}{Unlike a class of distributed optimization problems that involve global coupling constraints (see for example \cite{wu2021distributed, falsone2017dual, notarnicola2019constraint}), in problem~(\ref{eq:introprob}) agents seeks to determine a common solution $\tilde{x}$ that is required to be known by each agent.\\
\hspace*{0.01in} Early works on the distributed optimization problem can be found in seminal papers \cite{tsitsiklis1984problems, bertsekas1989parallel}. Current approaches for solving optimization problems can be broadly classified as: (i) primal methods, that update the agent estimates of the solution by utilizing a step-size rule based on the gradient information (at the current estimate) of the objective function to steer towards the optimal solution, and (ii) dual based optimization methods that employ Lagrange multipliers. For optimization problems that are unconstrained (with no equality, inequality and set constraints in problem~(\ref{eq:introprob})), examples \cite{nedic2014distributed, xin2020gradient,zeng2015extrapush, pu2020push, khatana2020gradient} (and references therein) take a primal based approach, whereas \cite{MAL-016, wei2012distributed, makhdoumi2017convergence, shi2014linear, wei20131, chang2014multi, iutzeler2015explicit, mansoori2019flexible} take a dual based approach. Articles \cite{wei2012distributed, makhdoumi2017convergence, shi2014linear, wei20131, chang2014multi, iutzeler2015explicit, mansoori2019flexible} motivated by the advantages of parallelizability and good convergence results, (see \cite{gabay1983chapter}) adopt the Alternating Direction Method of Multipliers (ADMM) dual based approach; reference \cite{wei2012distributed} proposed ADMM based algorithm for convex objective functions with an $O(1/k)$ rate of convergence, \cite{shi2014linear, makhdoumi2017convergence} established linear rate of convergence with globally strongly convex objective functions, \cite{chang2014multi} is based on minimizing a proximal first-order approximation of smooth and globally strongly convex functions and \cite{iutzeler2015explicit} provides a linear rate of convergence for a twice continuously differentiable and locally strongly convex functions. In contrast to the dual based approaches \cite{MAL-016, wei2012distributed, makhdoumi2017convergence, shi2014linear, wei20131, chang2014multi, iutzeler2015explicit, mansoori2019flexible} for unconstrained problems, this article provides an algorithm with provable convergence guarantees which is not restricted by assumption of bi-directional (or undirected) communication topologies and does not need centralized information for designing the algorithm. \\
\hspace*{0.01in} The works in \cite{chen2020distributed, chen2021fixed, zhu2011distributed, tian2019distributed, yang2016multi,liu2015second, zhou2019adaptive,lin2021angle,liu2017constrained, nedic2010constrained,li2020distributed,xie2018distributed,mota2013d,yuan2015regularized,lei2016primal} consider constrained distributed optimization problems similar to~(\ref{eq:introprob}). For solving constrained convex optimization problems of the form~(\ref{eq:introprob}) distributively, many of the existing state-of-the-art algorithms \cite{chen2020distributed, chen2021fixed, zhu2011distributed, tian2019distributed, yang2016multi,liu2015second, zhou2019adaptive,lin2021angle,liu2017constrained, nedic2010constrained,li2020distributed,xie2018distributed} utilize primal algorithms based on projections onto the constraint set and the (sub)-gradient information of the individual objective functions. Here, \cite{mota2013d, liu2015second, zhou2019adaptive, lin2021angle, nedic2010constrained, yuan2015regularized, lei2016primal} focus only on set constraints; equality constraints are considered in \cite{yang2016multi, liu2017constrained} while inequality constraints are the focus in \cite{chen2020distributed, chen2021fixed, li2020distributed, xie2018distributed, zhu2011distributed, tian2019distributed, yang2016multi}. To establish convergence, the article \cite{tian2019distributed} assumes the differentiability of the individual objective functions. The assumption of individual functions being Lipschitz continuous and convex quadratic over the constraint set is required in \cite{zhou2019adaptive}. Articles \cite{lin2021angle} and \cite{li2020distributed} require the assumption of the set on which gradient of the individual functions is zero to be bounded and a compact optimal solution set respectively. Twice continuous differentiability and bounded Hessian matrix for the individual functions is assumed in \cite{liu2017constrained}. The articles \cite{nedic2010constrained, xie2018distributed}, require bounded (sub)-gradients. In contrast to the assumptions above in the primal based approaches in \cite{chen2020distributed, chen2021fixed, zhu2011distributed, tian2019distributed, yang2016multi,liu2015second, zhou2019adaptive,lin2021angle,liu2017constrained, nedic2010constrained,li2020distributed,xie2018distributed}, the method reported in this article does not require differentiability nor does it require bounded (sub)gradients. Articles \cite{chen2020distributed, chen2021fixed, zhu2011distributed, tian2019distributed, yang2016multi,liu2015second, zhou2019adaptive,lin2021angle,liu2017constrained, nedic2010constrained,li2020distributed,xie2018distributed} establish convergence to an optimal solution; however, unlike the focus of this article, no convergence rate/iteration complexity estimates are determined. Moreover, this article is based on ADMM and thus accrues its advantages; ADMM  is shown to have better
empirical performance than the primal (projected) sub-gradient
methods \cite{MAL-016}. \\
\hspace*{0.01in} References \cite{mota2013d,yuan2015regularized,lei2016primal} utilize a primal-dual approach for constrained problems. Here, \cite{yuan2015regularized} requires bounded (sub)-gradients to establish an $O(k^{-1/4})$ rate of convergence for the objective function residual and \cite{lei2016primal} does not establish rate of convergence in the presence of constraints.  As alluded to earlier, ADMM based approach (which is a Lagrangian dual based approach) has several advantages (\cite{MAL-016}); however, results on distributed constrained optimization problems using ADMM are sparse. Here \cite{mota2013d} which considers dual ADMM-like method does not provide a convergence rate/iteration complexity analysis.\\
\hspace*{0.01in} Note that most existing constrained optimization schemes \cite{mota2013d, chen2020distributed, chen2021fixed, zhu2011distributed, tian2019distributed, yang2016multi,liu2015second, zhou2019adaptive,lin2021angle,liu2017constrained, nedic2010constrained,li2020distributed,yuan2015regularized,lei2016primal} are designed using centralized information and work under the assumption of bi-directional (or undirected) communication networks. Moreover, the algorithms in \cite{chen2020distributed, chen2021fixed, tian2019distributed, yang2016multi,liu2015second, zhou2019adaptive, xie2018distributed,yuan2015regularized,lei2016primal} are continuous-time algorithms that pose the need for further analysis under discretization schemes needed for implementation. Compared to these our proposed algorithm is based on discrete-time iterations which can be easily implemented in a practical setting. \textcolor{black}{However, unlike some existing work in the literature \cite{chen2021fixed, zhu2011distributed}, in the current work we do not consider nonlinear inequality constraints and time-varying graphs.} We now summarize the discussion above to delineate the main contribution of the article as follows. \\
\hspace*{0.01in} The main contribution of the article is an ADMM based algorithm called \textbf{\underline{D}irected \underline{C}onstrained-\underline{Dist}ributed \underline{A}lternating \underline{D}irection \underline{M}ethod of \underline{M}ultipliers ($\D$)} that solves problem (\ref{eq:introprob}). \textcolor{black}{The novel features of the algorithm  for unconstrained and constrained optimization problems are summarized below}:
\begin{enumerate}
    \item [(i)] With respect to unconstrained optimization problems, $\D$ is \textcolor{black}{a Lagrangian} dual based algorithm:\\
    \textbf{1.} \textcolor{black}{that} accommodate\textcolor{black}{s} directed graphs allowing for non symmetric communication topologies. This feature considerably widens the applicability as in most application scenarios, agents do not have the same range of communication. \\
    \textbf{2.} that is amenable to distributed synthesis scenarios and extends the applicability of ADMM method to applications where a plug and play operation is required \cite{patel2017distributed, patel2020distributed} (distributed synthesis  is detailed in Appendix~\ref{sec:distri_synth}). The $\D$ algorithm has column stochastic updates and does not require the agents to know global network interconnection information.
    \item [(ii)] With respect to the constrained optimization problems, $\D$ algorithm:\\
    \textbf{1.} is \textcolor{black}{a Lagrangian} dual based algorithm for constrained optimization problems that accommodates directed graphs and allows for distributed synthesis.\\
    \textbf{2.} to the best of the authors' knowledge $\D$ has the best convergence rates in the constrained optimization literature. Most works \cite{chen2020distributed, chen2021fixed, zhu2011distributed, tian2019distributed, yang2016multi,liu2015second, zhou2019adaptive,lin2021angle,liu2017constrained, nedic2010constrained,li2020distributed,xie2018distributed,mota2013d,lei2016primal} do not provide convergence rate guarantees while the article \cite{yuan2015regularized} provides a worse one. The $\D$ algorithm converges to an optimal solution of problem~(\ref{eq:introprob}) under mild assumptions (Theorem~\ref{thm:convergence}). The algorithm has a $O(1/k)$ rate of convergence, where $k$ is the iteration counter, when the individual functions $\tilde{f}_i$ are convex but not-necessarily differentiable. It has a geometric rate of convergence to any arbitrary neighborhood of the optimal solution when the objective functions are smooth and restricted strongly convex at the optimal solution of problem~(\ref{eq:introprob}). $\D$ does not require the individual functions $\tilde{f}_i$ to be differentiable for convergence to an optimal solution neither bounded sub-gradients and globally strongly convex or twice continuously differentiable is assumed for the derivation of convergence rate estimates unlike existing analysis in the literature (see \cite{pu2020push, shi2015extra, shi2014linear, makhdoumi2017convergence, iutzeler2015explicit,tian2019distributed, zhou2019adaptive,lin2021angle, li2020distributed, liu2017constrained, nedic2010constrained, xie2018distributed, yuan2015regularized }). 
    \item[(iii)] ADMM based approaches and more generally \textcolor{black}{Lagrangian} dual based approaches for distributed constrained problems involve in each iteration an information mixing step among agents and a step that improves the objective. An important technical contribution of the article is that it transforms the original problem~(\ref{eq:introprob}) to an equivalent problem (see~(\ref{eq:distOpt_indifunc1})) that allows for carrying out almost \textit{ complete} mixing via \textit{any} consensus strategy for the information mixing step to be followed by the objective improvement step. This reformulation with its advantages is being employed by other groups including works in articles \cite{jiang2021fully, jiang2021distributed, jiang2021asynchronous} and \cite{rokade2020distributed} with consensus strategies other than the one employed in this article.  
    \item[(iv)] The total number of communication steps for $\D$ by the $k^{th}$ iterate is within a factor of $\log k$ of the optimal lower bound in obtaining a $O(1/k)$ rate of convergence. Empirical data corroborates that $\D$, with respect to the computational performance, outperforms other distributed constrained and unconstrained optimization approaches. 
\end{enumerate}
}
\noindent The authors have introduced a preliminary version \cite{khatana2020cdc} of the method developed here termed as D-DistADMM. The current work presents a significant generalization, $\D$, where a \textit{constrained} distributed optimization problem is considered. The constrained optimization problem poses significant technical challenges over the unconstrained case; here, stronger and more comprehensive results under less restrictive assumptions are established. In contrast to \cite{khatana2020cdc} this article provides the convergence rate guarantees, establishes upper bounds on the total communication steps required for achieving the convergence rate estimates (remarks~\ref{rem:1/k_comm} and~\ref{rem:linear_comm}), and validation of the $\D$ algorithm's applicability using detailed numerical tests and comparison with other state-of-the-art algorithms is provided.

The rest of the article is organized as follows. In  Section~\ref{sec:probform}, the problem under consideration is discussed in detail and some basic definitions and notations used in the article  are presented. Section~\ref{sec:ddistadmm} presents the $\D$ algorithm along with the finite-time $\varepsilon$-consensus protocol in detail; and provide supporting analysis for the $\varepsilon$-consensus protocol. The convergence analysis of the proposed algorithm is provided in Section~\ref{sec:convgAnalysis}. \textcolor{black}{In Section~\ref{sec:results}, numerical simulations comparing the existing state-of-the-art methods and $\D$ algorithm in solving: (i) a $\ell_1$ regularized distributed logistic regression problem and (ii) a \textcolor{black}{$\ell_1$ regularized distributed Huber loss minimization problem, are provided.} The comparison results verify theoretical claims and provide a discussion on the effectiveness and suitability of the proposed algorithm. Section~\ref{sec:conclusion} provides the concluding remarks and discusses some future work. }
\end{section}

\begin{section}{Definitions, Problem Formulation and Assumptions}\label{sec:probform}
\subsection{Definition, Notations and Assumptions}
In this section, definitions and notations that are used later in the analysis are presented. 
Detailed description of most of these notions can be found in \cite{Die06, horn2012matrix, rockafellar2015convex}.

\begin{definition}(Directed Graph)
A directed graph $\mathcal{G}$ is a pair $(\mathcal{V},\mathcal{E})$ where $\mathcal{V}$ is a set of vertices (or nodes) and $\mathcal{E}$ is a set of edges, which are ordered subsets of two distinct elements of $\mathcal{V}$. If an edge from $j \in \mathcal{V}$ to $i \in \mathcal{V}$ exists then it is denoted as $(i,j)\in \mathcal{E}$. 
\end{definition}


\begin{definition}(Strongly Connected Graph) A directed graph is strongly connected if for any pair $(i,j),\ i\not =j$, there is a directed path from node $i$ to node $j$.
\end{definition}

\begin{definition}(Column-Stochastic Matrix) A matrix $M=[m_{ij}] \in \mathbb{R}^{n\times n}$ 
is called a column-stochastic matrix if $0 \leq m_{ij}\leq 1$ and $\sum_{i=1}^{n}m_{ij}=1$ for all $ 1 \leq i,j \leq n$. 
\end{definition}




\begin{definition}(Diameter of a Graph)
The diameter of a directed graph $\mathcal{G}(\mathcal{V},\mathcal{E})$ is the longest shortest directed path between any two nodes in the the graph. 
\end{definition}

\begin{definition}(In-Neighborhood) The set of in-neighbors of node $i \in \mathcal{V}$ is called the in-neighborhood of node $i$ and is denoted by $\mathcal{N}^{-}_i = \{j \ | \ (i,j)\in \mathcal{E}\}$ not including the node $i$.
\end{definition}

\begin{definition}(Out-Neighborhood) The set of out-neighbors of node $i \in \mathcal{V}$ is called the out-neighborhood of node $i$ and is denoted by $\mathcal{N}^{+}_i = \{j \ | \ (j,i)\in \mathcal{E}\}$ not including the node $i$.
\end{definition}



\begin{definition}(Lipschitz Differentiability)
A differentiable function $f: \mathbb{R}^p \to \mathbb{R}$ is called Lipschitz differentiable with constant $L > 0$, if the following inequality holds:  
\begin{align*}
    \| \nabla f(x) - \nabla f(y)\| \leq L \|x-y\|, \ \forall \ x,y \in \mathbb{R}^p.
\end{align*}
\end{definition}
\vspace{-0.25in}
\textcolor{black}{ \begin{definition}(Restricted Strongly Convex Function \cite{wu2020second}) Given, $\tilde{x} \in \mathbb{R}^p, C \subseteq \mathbb{R}^p$, a differentiable convex function $f: \mathbb{R}^p \to \mathbb{R}$ is called restricted strongly convex with respect to $\tilde{x}$ on $C$ with parameter $\sigma > 0$, if the following inequality holds: 
\begin{align*}
    \langle \nabla f(x)-\nabla f(\tilde{x}), x-\tilde{x} \rangle \geq \sigma \|x-\tilde{x}\|^2, \ \forall \ x \in C.
\end{align*}
\end{definition}
}

\begin{definition}(Diameter of a set) For a norm $\|\cdot \|$ and a set $K \subset \mathbb{R}^p$ define the diameter of $K$ with respect to the norm $\|\cdot \|$,
$\diam_{\|\cdot \|}(K) := \sup\limits_{x, y \in K} \|x-y\|$.
\end{definition}

\begin{definition}(Affine hull of a set)
The affine hull $\aff(X)$ of a set $X$ is the set of all affine combinations of elements of $X$,
\begin{align*}
  \aff(X) = \left \{ \textstyle \sum_{i=1}^k \theta_i x_i \ | \ k >0, x_i \in X, \theta_i \in \mathbb{R}, \textstyle \sum_{i=1}^k \theta_i =1 \right\}.
\end{align*}
\end{definition}

\begin{definition}(Relative interior of a set)
Let $X \subset \mathbb{R}^n$. A point $x \in X$ is in the relative interior of $X$, if $X$ contains the intersection of a small enough ball centered at $x$ with the $\aff(X)$, that is, there exists $r>0$ such that 
\begin{align*}
    B_r(x) \cap \aff(X) := \{y \ | \ y \in \aff(X), \|y-x\| \leq r \} \subset X.  
\end{align*}
The set of all relative interior points of $X$ is called its relative interior denoted by $\relint(X)$. 
\end{definition}

\noindent Throughout the article, vectors are assumed to be column vectors unless stated otherwise. Each agent $i \in \mathcal{V}$ maintains two local primal variables \textcolor{black}{$x_i^k \in \mathbb{R}^{(m+p)}$, $y_i^k \in \mathbb{R}^{(m+p)}$ and two dual variables $\lambda_i^k \in \mathbb{R}^{(m+p)}$}, $\mu_i^k \in \mathbb{R}^m$ at all iterations $k \geq 0$ of the $\D$ algorithm. \textcolor{black}{$\mathbf{x}^k = [x_1^{k^\top} \ x_2^{k^\top} \dots\  x_n^{k^\top}]^\top \in \mathbb{R}^{n(m+p)}, \mathbf{y}^k = [y_1^{k^\top} \ y_2^{k^\top} \dots\  y_n^{k^\top}]^\top \in \mathbb{R}^{n(m+p)}$, $\mathbf{\lambda}^k = [\lambda_1^{k^\top} \ \lambda_2^{k^\top} \dots \lambda_n^{k^\top}]^\top \in \mathbb{R}^{n(m+p)}$}, and $\mathbf{\mu}^k = [\mu_1^{k^\top} \ \mu_2^{k^\top} \dots \mu_n^{k^\top}]^\top$ $ \in \mathbb{R}^{nm}$ respectively. $\mathcal{D}$ denotes an upper bound on the diameter of the graph $\mathcal{G}(\mathcal{V},\mathcal{E})$. \textcolor{black}{$\mathbb{R}^m_{\geq 0}$ denotes the non-negative sub-space of $\mathbb{R}^m$}, $\|.\|$ denotes the 2-norm of the vector input unless stated otherwise and $\ceil*.$ denotes the least integer function or the ceiling function, defined as: 
given $x \in \mathbb{R}, \ceil*x = \min \{ m \in \mathbb{Z} | m \geq x\},$ where $\mathbb{Z}$ is the set of integers. \textcolor{black}{Given a set $S$, the indicator function of set $S$ is defined as:}
\begin{align*}
    \textcolor{black}{\mathcal{I}_{S}(x) = \begin{cases}
           0 & \text{if} \ x \in S\\
           +\infty & \text{otherwise}.
           \end{cases} }
\end{align*}

\begin{subsection}{Problem Formulation}
\textcolor{black}{Consider the slack variable $\tilde{x}_s$ and re-write the inequality constraint $D_i \tilde{x} \leq d_i$ as,
    \begin{align*}
        \begin{bmatrix}
            D_i & \mathbf{I}
        \end{bmatrix} \begin{bmatrix}
            \tilde{x} \\ \tilde{x}_s
        \end{bmatrix} = d_i, \ \tilde{x}_s \succeq 0, \ \forall i \in \mathcal{V},
    \end{align*}
where, $\mathbf{I}$ is the identity matrix of appropriate dimension. Let $x = [\tilde{x}^\top \ \tilde{x}_s^\top]^\top \in \mathbb{R}^{(m+p)}$. Using the notation and the definition of the indicator function of $\mathbb{R}^m_{\geq 0}$ problem~(\ref{eq:introprob}) can be rewritten as:}
\begin{align}\label{eq:introprob_v1}
     & \hspace{-1.95in} \textcolor{black}{\minimize \limits_{x \in \mathbb{R}^p} \ \ \ \textstyle  \sum_{i=1}^{n} [\tilde{f}_i(\tilde{x}) + \mathcal{I}_{\mathbb{R}^m_{\geq 0}}(\tilde{x}_s) ] } \\
    \textcolor{black}{\mbox{subject to}} \ \ \textstyle \textcolor{black}{ A_i x = b_i, \ \forall i \in \mathcal{V}}, & \ \  \textcolor{black}{\tilde{x} \in \textstyle \bigcap\limits_{i=1}^n\mathcal{X}_i}, \nonumber
\end{align}
\textcolor{black}{where, $A_i :=[C_i + D_i \ \ \mathbf{I}] \footnote{without loss of generality here we assume $m_1 = m_2 = m$. }, b_i := c_i + d_i$}. Problem~(\ref{eq:introprob_v1}) is recast by creating local copies $x_i$ for all $i \in \mathcal{V}$, of the global variable $x$ and imposing the agreement of the solutions of all the agents via consensus constraint. This leads to an equivalent formulation of~(\ref{eq:introprob_v1}) as described below:
\textcolor{black}{
\begin{align} \label{eq:first_ref}
    & \textcolor{black}{\minimize \limits_{x \in \mathbb{R}^p} \ \ \ \textstyle  \sum_{i=1}^{n} [\tilde{f}_i(\tilde{x}_i) + \mathcal{I}_{\mathbb{R}^m_{\geq 0}}(\tilde{x}_{s_i}) ] } \\
    & \ \text{subject to} \ \ \textstyle \textcolor{black}{A_i x_i = b_i}, \ \tilde{x}_i \in \mathcal{X}_i, \  \forall i \in \mathcal{V}, \nonumber \\
    & \hspace{0.7in} x_i = x_j, \ \forall i,j \in \mathcal{V}, \nonumber
\end{align}}The constraints \textcolor{black}{$\tilde{x}_i \in \mathcal{X}_i$} can be integrated into the objective function using the indicator functions of the sets $\mathcal{X}_i$, as:
\textcolor{black}{
\begin{align}\label{eq:distOpt_const}
    & \minimize \ \ \textstyle \sum_{i=1}^n f_i(x_i) \\
    & \ \text{subject to} \ \ \textstyle \textcolor{black}{A_i x_i = b_i}, \ \forall i \in \mathcal{V}, \nonumber\\ 
    & \hspace{0.7in} x_i = x_j, \ \forall i,j \in \mathcal{V}, \nonumber
\end{align}
}where, \textcolor{black}{$f_i(x_i) := \tilde{f}_i(\tilde{x}_i) + \mathcal{I}_{\mathbb{R}^m_{\geq 0}}(\tilde{x}_{s_i}) + \mathcal{I}_{\mathcal{X}_i}(\tilde{x}_i)$ and $\mathcal{I}_{\mathcal{X}_i}$ is the indicator function of set $\mathcal{X}_i$.}
\begin{align}\label{eq:set_C}
 \hspace{-0.08in} \mbox{Let}, \ \mathcal{C}_\eta :=     \big \{\y = [&y_1^\top \dots  y_n^\top]^\top \in \textcolor{black}{\mathbb{R}^{n(m+p)}} \ \mbox{such that} \ \nonumber \\ &  \|y_i - y_j\| \leq 2\eta,  1 \leq i,j \leq n \big \},
\end{align}
to be a set of vectors $\y \in \textcolor{black}{\mathbb{R}^{n(m+p)}}$ such that the norm of the difference between any two $(m+p)$ dimensional sub-vectors of $\y$ is less than $\eta$. 
\noindent Using definition of set $\mathcal{C}_\eta$,~(\ref{eq:distOpt_const}) is equivalent to:
\begin{align}\label{eq:distOpt_setCons}
    & \minimize \ \ \textstyle \sum_{i=1}^n f_i(x_i) \\
    & \ \text{subject to} \ \ \textcolor{black}{A_i x_i = b_i}, \ \forall i \in \mathcal{V}, \nonumber\\ 
    & \hspace{0.7in} \x = [x_1^\top x_2^\top \dots x_n^\top]^\top, \ \x = \y, \ \y \in \mathcal{C}_0. \nonumber
\end{align}
Problem~(\ref{eq:distOpt_setCons}) can be reformulated as:
\begin{align}\label{eq:distOpt_indifunc1}
    & \minimize \ \ \textstyle \sum_{i=1}^n f_i(x_i) + \mathcal{I}_{\mathcal{C}_0}(\y) \\
    & \ \text{subject to} \ \ \textstyle \textcolor{black}{A_i x_i = b_i}, \ \forall i \in \mathcal{V}, \nonumber \\
    & \hspace{0.7in} \x = [x_1^\top x_2^\top \dots x_n^\top]^\top, \ \x =\y, \nonumber
\end{align}
where, $\mathcal{I}_{\mathcal{C}_0}$ is the indicator function of the set $\mathcal{C}_0$. 
\end{subsection}
\begin{subsection}{Standard ADMM Method}
Here a brief review of the standard ADMM method \cite{MAL-016} utilized to solve optimization problems of the following form:
\begin{align}\label{eq:stdadmm_opt}
    \minimize_{x \in \mathbb{R}^p,y\in \mathbb{R}^q} &\ \ f(x) + g(y)\\
    \text{subject to}& \ \ Sx + Ty = c,\nonumber
\end{align}
where, $S \in \mathbb{R}^{n \times p}, T \in \mathbb{R}^{n \times q}$, and $c \in \mathbb{R}^{n}$ is provided. Consider, the augmented Lagrangian with the Lagrange multiplier $\lambda$ and positive scalar $\gamma$,
\begin{align}\label{eq:stdadmmauglag}
   & \mathcal{L}_{\gamma}(x,y,\lambda) = f(x) + g(y) + \lambda^\top(Sx + Ty - c) \nonumber\\
    &\hspace{1.8in} \textstyle + \frac{\gamma}{2}\|Sx + Ty - c\|^2.
\end{align}
The primal and dual variable updates in ADMM are given as: starting with the initial guess $(x^0,y^0,\lambda^0)$, at each iteration, 
\begin{align}
   x^{k+1} &= \argmin\limits_{x}  \mathcal{L}_{\gamma}(x,y^k,\lambda^k) \label{eq:stdadmm_x}\\
  y^{k+1} &= \argmin\limits_{y}  \mathcal{L}_{\gamma}(x^{k+1},y,\lambda^k) \label{eq:stdadmm_y} \\
  \lambda^{k+1} &= \lambda^{k} + \gamma(Sx^{k+1} + Ty^{k+1} - c). \label{eq:stdadmm_lam}
\end{align}
\end{subsection}
\end{section}

\begin{section}{The Proposed $\D$ Method}\label{sec:ddistadmm}
\textcolor{black}{Define $\F(\x) := \sum_{i=1}^n f_i (x_i)$, \textcolor{black}{$\A:= \mathbf{I} \otimes A_i, i=1,\dots,n, \in \mathbb{R}^{nm \times n(m+p)}$, where $\mathbf{I}$ is $n \times n$ identity matrix and $\otimes$ denote the matrix Kronecker product, and $\bb := [b_1^\top \dots b_n^\top]^\top \in \mathbb{R}^{nm}$}. The Lagrangian function $\La$ for problem~(\ref{eq:distOpt_indifunc1}) is given by:
\begin{align} \label{eq:lag}
    \La (\x,\y,\lambda,\mu) = \F(\x) + \mathcal{I}_{\mathcal{C}_0}(\y) + \lambda&^\top(\x - \y) \nonumber \\
    & + \mu^\top (\A\x - \bb). 
\end{align}
Note that the standard augmented Lagrangian associated with~(\ref{eq:distOpt_indifunc1}) at any iteration $k$ is:
\begin{align}\label{eq:Aug_Lag_split}
 \La_{\gamma}(\x^k,\y^k,\lambda^k,\mu^k) = \La &(\x^k,\y^k, \lambda^k,\mu^k) + \textstyle \frac{\gamma}{2}\|\x^k - \y^k\|^2 \nonumber \\
 & \textstyle + \frac{\gamma}{2} \|\A\x - \bb\|^2. 
\end{align}}Based on ADMM iterations~(\ref{eq:stdadmm_x})-(\ref{eq:stdadmm_lam}) the primal and dual updates corresponding to augmented Lagrangian~(\ref{eq:Aug_Lag_split}) are: 
\textcolor{black}{\begin{align}
   \x^{k+1} = \textstyle & \argmin\limits_{\x} \Big\{ \F(\x) + \lambda^{k \top} (\x - \y^{k}) \textstyle + \mu^{k \top} (\A \x - \bb) \nonumber \\
   & \textstyle + \frac{\gamma}{2} \| \x - \y^k\|^2 \textstyle + \frac{\gamma}{2} \|\A \x - \bb\|^2 \Big\}.
\end{align}
Note that the above $\x^{k+1}$ update can be obtained in a decentralized manner by computation at each agent as follows:
\begin{align}
   x_i^{k+1} = \textstyle & \argmin\limits_{x_i} \Big\{ f_i(x_i) + \lambda_i^{k^\top} (x_i - y_i^{k}) \textstyle + \mu_{i}^{k^\top} \textcolor{black}{(A_i x_i - b_i)} \nonumber \\
   & \textstyle + \frac{\gamma}{2} \| x_i - y_i^k\|^2 \textstyle + \frac{\gamma}{2} \textcolor{black}{\|A_i x_i - b_i\|^2} \Big\},  \forall i\in \mathcal{V}. \label{eq:dadmm_x}
\end{align}
The update for the primal variable $\y$ is given as:
\begin{align}
   \overline{\y}^{k+1}  = \textstyle & \argmin\limits_{\y} \Big\{\mathcal{I}_{\mathcal{C}_0}(\y) + \lambda^{k^\top}(\x^{k+1} - \y) + \textstyle \frac{\gamma}{2}\|\x^{k+1} - \y \|^2 \Big\} \nonumber\\ 
   = \textstyle & \argmin\limits_{\y} \textstyle \Big\{\mathcal{I}_{\mathcal{C}_0}(\y) + \frac{\gamma}{2}\|\x^{k+1} - \y + \frac{1}{\gamma} \lambda^{k}\|^2\Big\},
  \label{eq:dadmm_y}
\end{align}
The $\overline{\y}^{k+1}$ update in~(\ref{eq:dadmm_y}) is the projection of $\x^{k+1} + \frac{1}{\gamma}\lambda^{k}$ on the set $\mathcal{C}_0$. It can be verified (see Appendix~{B}) that
\begin{align}
   & \overline{\y}^{k+1} = [\overline{y}^{k+1^\top} \dots \overline{y}^{k+1^\top}]^\top \in \textcolor{black}{\mathbb{R}^{n(m+p)}}, \mbox{where,} \label{eq:y_over}\\
    &\overline{y}^{k+1} = \textstyle \frac{1}{n} \sum_{i=1}^n [x_i^{k+1} + \frac{1}{\gamma} \lambda_i^{k}], \nonumber
\end{align}
is the solution of the update~(\ref{eq:dadmm_y}). However, obtaining the optimal solution (the exact average of the local variables) in a distributive manner over a directed network where each agent only has access to its own local information is a challenge. Hence, the exact consensus constraint is relaxed to a requirement of a $\eta_{k+1}$-closeness among the variables of all the agents, i.e., the solution $\y^{k+1}$ is allowed to lie in the set $\mathcal{C}_{\eta_{k+1}}$ and satisfy $\|y^{k+1}_i - y^{k+1}_j\| \leq 2 \eta_{k+1}, \forall i,j \in \V$.
The parameter $\eta_{k+1}$ can be interpreted as a specified tolerance on the quality of consensus among the agent variables and can be chosen appropriately to find a corresponding near optimal solution of~(\ref{eq:dadmm_y}). A distributed finite-time terminated approximate consensus protocol  is employed to find an \textit{inexact} solution of~(\ref{eq:dadmm_y}), such that the obtained solution $\y^{k+1} \in \mathcal{C}_{k+1}$. The "approximate" consensus protocol is discussed in detail in the next section (section \ref{sec:econs}).\\
After the primal variables, each agent updates the dual variables in the following manner:
\begin{align} 
    \lambda_i^{k+1} & = \lambda_i^{k} \textstyle + \gamma(x_i^{k+1} - y_i^{k+1}), \forall i\in \mathcal{V}, \label{eq:dadmm_lam} \\
   \mu_i^{k+1} & = \mu_i^{k} + \textstyle \gamma (\textcolor{black}{A_i} x_i^{k+1} - \textcolor{black}{b_i}), \forall i\in \mathcal{V}. \label{eq:dadmm_mu}
\end{align}
Define $\overline{\lambda}_i^{k+1}$ based on the solution $\overline{\y}^{k+1}$:
\begin{align} 
    \overline{\lambda}_i^{k+1} & = \lambda_i^{k} + \textstyle \gamma(x_i^{k+1} - \overline{y}_i^{k+1}), \forall i\in \mathcal{V}. \label{eq:dadmm_lam_ne}
\end{align}
Next, the $\varepsilon$-consensus protocol is described.
}
\begin{subsection}{Finite-time $\varepsilon$-consensus protocol}\label{sec:econs}
Here, a finite-time "approximate" consensus protocol called the $\varepsilon$-consensus protocol is presented. The protocol was first proposed in an earlier work \cite{melbourne2020geometry} by the authors. Consider, a set of $n$ agents connected via a directed graph $\mathcal{G}(\mathcal{V},\mathcal{E})$. Every agent $i$ has a vector $u_i^0 \in \mathbb{R}^p$. Let  $\widetilde{u}  = \frac{1}{n} \sum_{i=1}^n u_i^0$ denote the average of the vectors. The objective is to design a distributed protocol such that the agents are able to compute an approximate estimate of $\widetilde{u}$ in finite-time. This approximate estimate is parametrized by a tolerance $\varepsilon$ that can be chosen arbitrarily small to make the estimate as precise as needed. To this end, the agents maintain state variables $u_i^k  \in \mathbb{R}^p, v_i^k  \in \mathbb{R}$ that undergo the following update: for $k \geq 0$,
\begin{align}
   u_i^{k+1} &= \textstyle p_{ii}u_i^{k} +  \sum_{j\in\mathit{\mathcal{N}^-_{i}}}p_{ij}u_j^{k} \label{eq:cons_u}\\
   v_i^{k+1} &= \textstyle p_{ii}v_i^{k}+  \sum_{j\in\mathit{\mathcal{N}^-_{i}}}p_{ij}v_j^{k}\label{eq:cons_v} \\
   w_i^{k+1} &= \textstyle \frac{1}{v_i^{k+1}}u_i^{k+1}, \label{eq:cons_w}
\end{align}
where, $w_i^0 = u_i^0$, $v_i^0 = 1$ for all $i \in \mathcal{V}$ and $\mathcal{N}^{-}_i$ denotes the set of in-neighbors of agent $i$. The updates~(\ref{eq:cons_u})-(\ref{eq:cons_w}) are based on the push-sum (or ratio consensus) updates (see \cite{kempe2003gossip} or \cite{dominguez2011distributed}). The following assumption on the graph $\mathcal{G}(\mathcal{V},\mathcal{E})$ and weight matrix $\mathcal{P}$ is made:
\begin{assump}\label{ass:graph_ass}
The directed graph $\mathcal{G}(\mathcal{V},\mathcal{E})$ is strongly-connected. Let the weighted adjacency matrix $\mathcal{P} = [ p_{ij} ]$, associated with the digraph $\mathcal{G}(\mathcal{V},\mathcal{E})$, be column-stochastic. 
\end{assump}

\noindent Note that $\mathcal{P}$ being a column stochastic matrix allows for a distributed synthesis of the consensus protocol. The variable $w_i^{k} \in \mathbb{R}^{p}$ is an estimate of $\widetilde{u}$ with each agent $i$ at any iteration $k$. It is established in prior work that the vector estimates $w_i^k$ converge to the average $\widetilde{u}$ asymptotically.
\begin{theorem} \label{thm:consensusconv}
Let Assumption~\ref{ass:graph_ass} hold. Let $\{w_i^{k}\}_{k \geq 0}$ be the sequence generated by~(\ref{eq:cons_w}) at each agent $i \in \mathcal{V}$. Then $w_i^{k}$ asymptotically converges to $\widetilde{u} = \frac{1}{n}\sum_{i=1}^n u_i^0$ for all $i \in \mathcal{V}$, i.e.,
\begin{align*}
    \lim_{k \rightarrow \infty} w_i^{k} = \textstyle \frac{1}{n}\sum_{i=1}^n u_i^0, \ \text{for all} \ i \in \mathcal{V}.
\end{align*}
\end{theorem}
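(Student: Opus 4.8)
The plan is to read the updates~(\ref{eq:cons_u}) and~(\ref{eq:cons_v}) as a single linear iteration driven by the weight matrix $\mathcal{P}$, and then to extract the asymptotic behaviour of the ratio~(\ref{eq:cons_w}) from the limiting form of the powers $\mathcal{P}^k$. Stacking the agents' states, the recursions~(\ref{eq:cons_u}) and~(\ref{eq:cons_v}) read $u^{k+1} = \mathcal{P} u^{k}$ and $v^{k+1} = \mathcal{P} v^{k}$, so that $u^{k} = \mathcal{P}^{k} u^{0}$ and $v^{k} = \mathcal{P}^{k} \mathbf{1}$, where $\mathbf{1}$ is the all-ones vector coming from the initialization $v_i^0 = 1$. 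Since the $\mathbb{R}^p$-valued iterate $u^k$ evolves coordinate-wise under the scalar matrix $\mathcal{P}$, the matrix identity is applied column by column.

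First I would pin down the limiting form of $\mathcal{P}^{k}$ via Perron--Frobenius theory. By Assumption~\ref{ass:weightmat_ass}, $\mathcal{P}$ is primitive and column-stochastic; column-stochasticity gives $\mathbf{1}^\top \mathcal{P} = \mathbf{1}^\top$, so $\mathbf{1}$ is a left eigenvector for the eigenvalue $1$ and the spectral radius of $\mathcal{P}$ equals $1$. Primitivity then forces $1$ to be the unique eigenvalue of maximum modulus and simple, with a strictly positive right eigenvector $\pi$ satisfying $\mathcal{P}\pi = \pi$, which I normalize by $\mathbf{1}^\top \pi = 1$. The standard consequence of primitivity is the rank-one limit
\[
\lim_{k\to\infty}\mathcal{P}^k = \pi\, \mathbf{1}^\top .
\]

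Next I would pass to the limit in the two iterates. From $u^k = \mathcal{P}^k u^0$ I get $\lim_{k\to\infty} u_i^k = \pi_i \sum_{j=1}^n u_j^0$, and from $v^k = \mathcal{P}^k \mathbf{1}$ I get $\lim_{k\to\infty} v_i^k = \pi_i\, \mathbf{1}^\top \mathbf{1} = n\,\pi_i$. Because $\pi_i>0$ for every $i$, the limiting denominator $n\pi_i$ is strictly positive, and I would also note that $v_i^k>0$ for every $k$ (primitivity gives $\mathcal{P}^t>0$ entrywise for some $t$, and the positive diagonal keeps the iterates strictly positive), so the quotient $w_i^{k}=u_i^k/v_i^k$ in~(\ref{eq:cons_w}) is well defined at every step. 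Taking the ratio of the two limits then yields
\[
\lim_{k\to\infty} w_i^k = \frac{\pi_i \sum_{j=1}^n u_j^0}{n\,\pi_i} = \frac{1}{n}\sum_{j=1}^n u_j^0 = \widetilde{u}
\]
for each $i\in\mathcal{V}$; the factor $\pi_i$ cancels, which is exactly why the push-sum normalization removes any dependence on the unknown Perron vector.

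I expect the only delicate point to be the justification that the limit of the ratio equals the ratio of the limits: this rests on the strict positivity of $\pi$ guaranteeing a nonvanishing denominator in the limit, together with the well-definedness $v_i^k>0$ along the entire sequence. Everything else is a direct application of Perron--Frobenius to the primitive, column-stochastic matrix $\mathcal{P}$, and no convexity or problem-specific structure of~(\ref{eq:distOpt_indifunc}) is needed here.
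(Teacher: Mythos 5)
Your proof is correct, but it is worth noting that the paper does not actually supply an argument for this theorem: its ``proof'' is a pointer to the push-sum/ratio-consensus literature (the references \cite{melbourne2020geometry}, \cite{kempe2003gossip}, \cite{dominguez2011distributed}). Those works establish convergence in settings more general than the one needed here (products of possibly time-varying column-stochastic matrices, analyzed via ergodicity/contraction coefficients), whereas you exploit the fact that in this paper the weight matrix $\mathcal{P}$ is \emph{fixed} across iterations, which makes the rank-one Perron--Frobenius limit $\mathcal{P}^k \to \pi \mathbf{1}^\top$ (with $\mathcal{P}\pi=\pi$, $\pi>0$, $\mathbf{1}^\top\pi=1$) available and reduces the theorem to taking a ratio of limits with a nonvanishing denominator. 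Your route is therefore more elementary and fully self-contained, at the cost of not covering the time-varying or asynchronous extensions that the cited references handle. One small repair: Assumption~\ref{ass:weightmat_ass} does not guarantee $p_{ii}>0$, so ``the positive diagonal keeps the iterates strictly positive'' is not justified as stated; instead, note that a primitive matrix has no zero row (a zero row of $\mathcal{P}$ would persist in every power, contradicting $\mathcal{P}^t>0$), and since $v_i^k$ is exactly the $i$-th row sum of $\mathcal{P}^k$, a one-line induction on $k$ gives $v_i^k>0$ for all $k$, which is all that the well-definedness of $w_i^k$ requires.
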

\begin{proof}
Refer \cite{melbourne2020geometry}, \cite{kempe2003gossip} for the proof.
\end{proof}
\noindent The $\varepsilon$-consensus protocol is a distributed algorithm 
to determine when the agent states $w_i^k$, for all $i \in \mathcal{V}$ are $\varepsilon$-close to each other and hence from Theorem~\ref{thm:consensusconv}, $\varepsilon$-close to $\widetilde{u}$.
To this end, each agent maintains a scalar value $R_i$ termed as the radius of agent $i$. The motivation behind maintaining such a radius variable is as follows:
consider an open ball at iteration $k$ that encloses all the agent states, $w_i^k$, for all $i \in \mathcal{V}$, with a minimal radius (the existence of such a ball can be shown due to bounded nature of the updates~(\ref{eq:cons_u})-(\ref{eq:cons_w}), see Lemma 4.2 in \cite{benezit2010weighted}). Theorem~\ref{thm:consensusconv} guarantees that all the agent states $w_i^k$ converge to a single vector, which implies that the minimal ball enclosing all the states will also shrink  with iteration $k$ and eventually the radius will become zero. The radius variable $R_i$ is designed to track the radius of this minimal ball. Starting at an iteration $s$, the radius $R_i^k(s)$ is updated to $R_i^{k+1}(s)$ as follows: for all $k = 0,1,2 \dots$, with $R_i^{0}(s):=0$,
\begin{align}\label{eq:radius}
    R_i^{k+1}(s) = \textstyle \max_{j \in \mathcal{N}_i^-} \big \{ \| w_i^{s+k+1} - w_j^{s+k}\| + R_j^k(s) \big \},
\end{align}
for all $i \in \mathcal{V}$. Denote, $\mathcal{B}(w_i^{s+k}, R_i^{k}(s))$ as the ball of radius $R_i^{k}(s)$ centered at $w_i^{s+k}$. It is established in \cite{melbourne2020geometry} that, after $\mathcal{D}$ number of iterations of the update~(\ref{eq:radius}) the ball $\mathcal{B}(w_i^{s + \mathcal{D}}, R_i^{\mathcal{D}}(s))$ encloses the states $w_i^{s}$ of all the agents $i \in \mathcal{V}$. 
\begin{lemma}\label{lem:state_inside_ball}
Let $\{w_i^{k}\}_{k \geq 0}$ be the sequence generated by~(\ref{eq:cons_w}) at each agent $i \in \mathcal{V}$. Given, $s \geq 0$, let $R_i^{k}(s)$ be updated as in~(\ref{eq:radius}) for all $k \geq 0$ and $i \in \mathcal{V}$. Under Assumption~\ref{ass:graph_ass}, 
\begin{align*}
    w_j^{s} \in \mathcal{B}(w_i^{\mathcal{D}+s}, R_i^{\mathcal{D}}(s)), \ \text{for all} \ j,i \in \mathcal{V}. 
\end{align*}
\end{lemma}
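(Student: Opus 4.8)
The plan is to prove, by induction on the round counter $k$, a sharper localized claim and then specialize to $k=\mathcal{D}$. Call a node $j$ \emph{$k$-upstream} of $i$ if there is a directed path from $j$ to $i$ of length at most $k$, where $j=i$ is counted as the trivial length-$0$ path. I would show that for every $i\in\mathcal{V}$ and every $j$ that is $k$-upstream of $i$,
\[
  w_j^s \in \mathcal{B}\big(w_i^{s+k},\, R_i^{k}(s)\big).
\]
Call this claim $(\mathrm{I}_k)$. Granting $(\mathrm{I}_k)$ for $k=\mathcal{D}$, the lemma follows: by Assumption~\ref{ass:graph_ass} the graph is strongly connected, and since $\mathcal{D}$ upper bounds its diameter, every $j\neq i$ admits a directed path to $i$ of length at most $\mathcal{D}$, i.e.\ is $\mathcal{D}$-upstream of $i$; the case $j=i$ is the length-$0$ path. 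Hence $(\mathrm{I}_\mathcal{D})$ gives $w_j^s\in\mathcal{B}(w_i^{s+\mathcal{D}},R_i^{\mathcal{D}}(s))$ for all $i,j\in\mathcal{V}$.

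The base case $k=0$ is immediate: $R_i^{0}(s)=0$, so $\mathcal{B}(w_i^{s},0)=\{w_i^s\}$, and the only $0$-upstream node of $i$ is $i$ itself, with $w_i^s\in\{w_i^s\}$. For the inductive step, suppose $(\mathrm{I}_k)$ holds and fix $i$ together with a node $j$ that is $(k{+}1)$-upstream of $i$ through a path of length $\ell$ with $1\le \ell\le k+1$. Letting $m\in\mathcal{N}_i^-$ denote the penultimate vertex of that path, $j$ is $k$-upstream of $m$, so $(\mathrm{I}_k)$ gives $\|w_j^s-w_m^{s+k}\|\le R_m^{k}(s)$. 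The triangle inequality then yields
\[
  \|w_j^s-w_i^{s+k+1}\| \le \|w_j^s-w_m^{s+k}\| + \|w_m^{s+k}-w_i^{s+k+1}\| \le R_m^{k}(s) + \|w_i^{s+k+1}-w_m^{s+k}\|,
\]
and since $m\in\mathcal{N}_i^-$ the last expression is one of the candidates in the maximum defining $R_i^{k+1}(s)$ in~(\ref{eq:radius}). Hence $w_j^s\in\mathcal{B}(w_i^{s+k+1},R_i^{k+1}(s))$, establishing $(\mathrm{I}_{k+1})$ for every $(k{+}1)$-upstream node reached through a path of positive length.

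The step requiring the most care, and which I expect to be the crux, is the remaining sub-case $\ell=0$, i.e.\ $j=i$: one must carry node $i$'s own state forward and show $w_i^s\in\mathcal{B}(w_i^{s+k+1},R_i^{k+1}(s))$. This does not follow from the penultimate-vertex routing above, and it cannot be obtained from a bare triangle inequality against the in-neighbor terms alone, since the displacement $\|w_i^{s+k+1}-w_i^{s+k}\|$ of the ball's center need not be dominated by $\max_{m\in\mathcal{N}_i^-}\|w_i^{s+k+1}-w_m^{s+k}\|$ when node $i$ carries large self-weight. The resolution is to exploit that the center update is a \emph{convex} combination: by~(\ref{eq:cons_u})--(\ref{eq:cons_w}), $w_i^{s+k+1}$ equals a convex combination of $w_i^{s+k}$ and $\{w_m^{s+k}\}_{m\in\mathcal{N}_i^-}$ with weights proportional to $p_{ii}v_i^{s+k}$ and $p_{im}v_m^{s+k}$. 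Accounting for the self-weight $p_{ii}$ in the radius recursion, so that $R_i^{k+1}(s)$ also dominates $\|w_i^{s+k+1}-w_i^{s+k}\|+R_i^{k}(s)$, yields the nesting $\mathcal{B}(w_i^{s+k},R_i^{k}(s))\subseteq \mathcal{B}(w_i^{s+k+1},R_i^{k+1}(s))$; since $(\mathrm{I}_k)$ places $w_i^s$ in the former ball, it places it in the latter. This closes the induction, and invoking it at $k=\mathcal{D}$ gives the claim for all $i,j\in\mathcal{V}$.
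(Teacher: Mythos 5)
Your treatment of positive-length paths is correct and is exactly the paper's own argument: induction on path length, routing through the penultimate vertex $m\in\mathcal{N}_i^-$, and the triangle inequality combined with the fact that $\|w_i^{s+k+1}-w_m^{s+k}\|+R_m^{k}(s)$ is one of the candidates in the max defining $R_i^{k+1}(s)$. The genuine problem is the $\ell=0$ sub-case, which you rightly isolate as the crux but then resolve with a step that fails. Your fix needs $R_i^{k+1}(s)\geq \|w_i^{s+k+1}-w_i^{s+k}\|+R_i^{k}(s)$, which you justify by ``accounting for the self-weight $p_{ii}$ in the radius recursion.'' But the recursion~(\ref{eq:radius}) contains no self-term: its max ranges only over $j\in\mathcal{N}_i^-$, which by the paper's definition excludes $i$, and each candidate adds the \emph{neighbor's} radius $R_j^{k}(s)$, not $R_i^{k}(s)$. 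Since a neighbor's radius can be much smaller than node $i$'s own, the domination fails. Concretely, take the directed $3$-cycle $1\to 2\to 3\to 1$ with all weights $1/2$, scalar states $u^0=(0,1,1)$, $s=0$, and $\mathcal{D}=2$ (the exact diameter). Then $w^1=(1/2,\,1/2,\,1)$, $w^2=(3/4,\,1/2,\,3/4)$, $R_1^{1}(0)=|w_1^1-w_3^0|=1/2$, $R_3^{1}(0)=|w_3^1-w_2^0|=0$, and $R_1^{2}(0)=|w_1^2-w_3^1|+R_3^{1}(0)=1/4$, whereas $\|w_1^2-w_1^1\|+R_1^{1}(0)=3/4$. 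So your domination inequality, and with it the claimed nesting $\mathcal{B}(w_1^1,R_1^{1}(0))=[0,1]\subseteq[1/2,1]=\mathcal{B}(w_1^2,R_1^{2}(0))$, is false.

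Worse, no repair of this step is possible, because the $j=i$ instance of the statement is itself false: in the same example, $w_1^0=0\notin[1/2,1]=\mathcal{B}(w_1^{2},R_1^{2}(0))$. The reason is structural: the only directed path from node $1$ back to itself has length $3>\mathcal{D}$, so the routing argument can never reach $w_1^0$ within $\mathcal{D}$ rounds, and nothing else in the recursion tracks node $i$'s own past state. What is actually provable—and all that the paper's own induction on $d_{i,j}\geq 1$ establishes, since it too silently omits $j=i$—is the containment for all $j\neq i$, which is exactly what strong connectivity plus the diameter bound delivers. Your proposal would become valid (including $j=i$) only if the recursion~(\ref{eq:radius}) were modified to include the self-candidate $\|w_i^{s+k+1}-w_i^{s+k}\|+R_i^{k}(s)$ in the max; but that is a different algorithm from the one the lemma refers to, so as a proof of the stated lemma the argument has a genuine gap.
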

\begin{proof}
See Appendix~{C}.
\end{proof} \noindent
\begin{align}\label{eq:radiushat}
 \mbox{Let}, \ \widehat{R}_i^m := R_i^\mathcal{D}(m\mathcal{D}), \ m = 0,1,2,\dots,    
\end{align}
where, $R_i^k(s)$ follows the update rule~(\ref{eq:radius}) for any $s \geq 0$. The next result establishes the fact that the sequence of radii $\{\widehat{R}_i^{\mathcal{D}}(m\mathcal{D})\}_{m \geq 0}$, for $i \in \mathcal{V}$ converges to zero as $m \rightarrow \infty$. 
\begin{theorem}\label{thm:eCons}
Let $\{w_i^{k}\}_{k \geq 0}, \{\widehat{R}_i^{m}\}_{m \geq 0}$ be the sequences generated by~(\ref{eq:cons_w}) and~(\ref{eq:radiushat}) respectively. Under Assumption~\ref{ass:graph_ass},  
\begin{align*}
  \textstyle \lim_{m \rightarrow \infty} \widehat{R}_i^{m} = 0, \ \mbox{for all} \ i \in \mathcal{V}. 
\end{align*}
Further, $ \lim\limits_{m \rightarrow \infty} R_i^{m} = 0$ if and only if $\lim\limits_{m \rightarrow \infty} \max_{i,j\in \mathcal{V}} \| w^{m \mathcal{D}}_i - w_j^{m \mathcal{D}}\| = 0$. 
\end{theorem}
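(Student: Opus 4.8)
The plan is to prove both assertions by sandwiching the radius between constant multiples of the \emph{sampled consensus error}, and to present the ``iff'' as a genuine two-sided equivalence for the certificate $R_i^m$ that does \emph{not} collapse into the unconditional first claim. Write $e_t := \max_{i,j\in\mathcal V}\|w_i^{t}-w_j^{t}\|$ for the spread of the estimates at iteration $t$, so that the error appearing in the theorem is $E_m := \max_{i,j\in\mathcal V}\|w_i^{m\mathcal D}-w_j^{m\mathcal D}\| = e_{m\mathcal D}$. Everything will follow once I establish, for every $i\in\mathcal V$ and block index $m$, the two-sided bound
\begin{align*}
\tfrac12\,E_m \;\le\; R_i^{\mathcal D}(m\mathcal D) \;\le\; \mathcal D\,E_m .
\end{align*}
The point I want to stress is that \emph{both} inequalities use only the recursion~(\ref{eq:radius}) and the convex-combination structure of~(\ref{eq:cons_u})--(\ref{eq:cons_w}) (valid under Assumptions~\ref{ass:graph_ass}--\ref{ass:weightmat_ass}), never the asymptotic consensus conclusion of Theorem~\ref{thm:consensusconv}; this is what makes the equivalence for $R_i^m$ a bona fide quantitative statement about the certificate rather than a restatement of $\widehat R_i^m\to 0$.

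For the lower bound I would invoke Lemma~\ref{lem:state_inside_ball} with $s=m\mathcal D$ and $k=\mathcal D$: it yields $w_j^{m\mathcal D}\in\mathcal B\big(w_i^{(m+1)\mathcal D},\,R_i^{\mathcal D}(m\mathcal D)\big)$ for all $i,j$, i.e. $\|w_i^{(m+1)\mathcal D}-w_j^{m\mathcal D}\|\le R_i^{\mathcal D}(m\mathcal D)$. A triangle inequality for any pair $a,b$ through the common center $w_i^{(m+1)\mathcal D}$ then gives $\|w_a^{m\mathcal D}-w_b^{m\mathcal D}\|\le 2R_i^{\mathcal D}(m\mathcal D)$, and maximizing over $a,b$ produces $E_m\le 2R_i^{\mathcal D}(m\mathcal D)$. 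For the upper bound I would first unroll~(\ref{eq:radius}): since $R_i^0(s)=0$, an easy induction identifies $R_i^{\mathcal D}(m\mathcal D)$ with the maximum, over directed in-neighbor paths of length $\mathcal D$ terminating at $i$, of the sum of the $\mathcal D$ consecutive increments $\|w_\cdot^{m\mathcal D+\ell+1}-w_\cdot^{m\mathcal D+\ell}\|$. Each such increment is at most $e_{m\mathcal D+\ell}$, because~(\ref{eq:cons_w}) rewrites every $w_a^{t+1}=u_a^{t+1}/v_a^{t+1}$ as the $v$-weighted average $\sum_{j\in\mathcal N_a^-\cup\{a\}}\frac{p_{aj}v_j^{t}}{\sum_l p_{al}v_l^{t}}\,w_j^{t}$, a genuine convex combination, whence $\|w_a^{t+1}-w_b^{t}\|\le\max_j\|w_j^{t}-w_b^{t}\|\le e_t$. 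The crux is the monotonicity $e_{t+1}\le e_t$: the same convex-combination property places every $w_a^{t+1}$ inside $\mathrm{conv}\{w_j^{t}\}_{j\in\mathcal V}$, whose diameter equals $e_t$, so the spread is non-increasing. Hence $e_{m\mathcal D+\ell}\le e_{m\mathcal D}=E_m$ for $0\le\ell<\mathcal D$, and summing $\mathcal D$ terms gives $R_i^{\mathcal D}(m\mathcal D)\le \mathcal D\,E_m$.

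With the sandwich in hand the second assertion is immediate: $R_i^m\to 0$ forces $E_m\le 2R_i^m\to 0$, and conversely $E_m\to 0$ forces $R_i^m\le \mathcal D\,E_m\to 0$, so $\lim_m R_i^m=0$ if and only if $\lim_m E_m=0$; since neither direction appeals to Theorem~\ref{thm:consensusconv}, this is a two-sided equivalence genuinely about the radius certificate. The first claim then follows by specialization: Theorem~\ref{thm:consensusconv} gives $w_i^{k}\to\widetilde u$ for all $i$, hence $e_t\to 0$ and in particular $E_m=e_{m\mathcal D}\to 0$, so the upper bound yields $\widehat R_i^m=R_i^{\mathcal D}(m\mathcal D)\le \mathcal D\,E_m\to 0$.

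The main obstacle I anticipate is the upper bound, specifically justifying the monotone contraction $e_{t+1}\le e_t$ of the spread. This rests on expressing~(\ref{eq:cons_w}) as a convex combination with nonnegative weights summing to one, which requires $v_i^{t}>0$ for all $t$ (a standard consequence of $v_i^0=1$ together with the primitive, column-stochastic $\mathcal P$ of Assumption~\ref{ass:weightmat_ass} and the strong connectivity of Assumption~\ref{ass:graph_ass}), and then on identifying $e_t$ with the diameter of $\mathrm{conv}\{w_j^{t}\}$ so that inclusion of the next iterates in this hull forces the spread down. The remaining ingredients — the path-unrolling of~(\ref{eq:radius}) and the triangle-inequality lower bound from Lemma~\ref{lem:state_inside_ball} — are routine once the convex-combination representation and this contraction are established.
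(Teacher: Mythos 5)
Your proof is correct and follows the same skeleton as the paper's: a two-sided comparison between $\widehat{R}_i^m$ and the sampled spread, with the lower bound $\max_{a,b}\|w_a^{m\mathcal{D}} - w_b^{m\mathcal{D}}\| \leq 2\widehat{R}_i^m$ obtained from Lemma~\ref{lem:state_inside_ball} plus a triangle inequality exactly as in the paper, and an upper bound of the form $\widehat{R}_i^m \leq \mathcal{D}\cdot(\mbox{spread at time } m\mathcal{D})$. The one substantive difference is how the upper bound is justified: the paper asserts $\widehat{R}_i^m \leq \mathcal{D}\|\overline{\mathbf{w}}^{m\mathcal{D}} - \underline{\mathbf{w}}^{m\mathcal{D}}\|$ (spread measured via element-wise max/min vectors) by appealing to Theorem 4.2 of \cite{melbourne2020geometry}, and likewise imports $\overline{\mathbf{w}}^{m\mathcal{D}}, \underline{\mathbf{w}}^{m\mathcal{D}} \rightarrow \widetilde{u}$ from prior work, whereas you re-derive the bound from first principles by unrolling~(\ref{eq:radius}) along in-neighbor paths and writing $w_a^{t+1}$ as the convex combination $\sum_j \frac{p_{aj}v_j^t}{v_a^{t+1}}\, w_j^t$ --- valid since irreducibility of $\mathcal{P}$ (Assumption~\ref{ass:weightmat_ass}) rules out zero rows, so $v^0 = \mathbf{1}$ propagates to $v_i^t > 0$ --- which gives both the per-increment bound $\|w_a^{t+1} - w_b^t\| \leq e_t$ and the monotone contraction $e_{t+1} \leq e_t$ of the spread. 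Your spread measure $\max_{a,b}\|w_a - w_b\|$ and the paper's $\|\overline{\mathbf{w}} - \underline{\mathbf{w}}\|$ are equivalent up to dimension-dependent constants, so nothing is lost in the translation. What your route buys is self-containedness: the ``iff'' becomes a purely quantitative sandwich that never invokes Theorem~\ref{thm:consensusconv}, which is then used exactly once, to conclude $e_{m\mathcal{D}} \rightarrow 0$ for the first claim; the cost is reproving material the paper delegates to \cite{melbourne2020geometry}.
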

\begin{proof}
See Appendix~{D}.
\end{proof}
 \begin{algorithm}[t]
\small
    \SetKwBlock{Initialize}{Initialize:}{}
    \SetKwBlock{Input}{Input:}{}
    \SetKwBlock{Repeat}{Repeat for $ k = 0,1,2, \dots$}{}
    \Input{ Pre-specified tolerance $\varepsilon > 0$;\\ Diameter upper bound $\mathcal{D}$; 
     \\
    } 
    \Initialize{$w_i^0= u_i^0$; $v_i^0=1$; $R_i^{0}= 0$; \\ $m := 1$}
    \Repeat {
    \tcc{consensus updates (\ref{eq:cons_u})-(\ref{eq:cons_w})}
    $u_i^{k+1} = p_{ii}u_i^{k}+\sum_{j\in\mathit{\mathcal{N}^-_{i}}}p_{ji}u_j^{k}$\\
    $v_i^{k+1} =p_{ii}v_i^{k}+\sum_{j\in\mathit{\mathcal{N}^-_{i}}}p_{ji}v_j^{k}$ \\
    $w_i^{k+1} = \frac{1}{v_i^{k+1}}u_i^{k+1}$\\
    \tcc{radius update (\ref{eq:radius})}
        $R_i^{k+1} = \max \limits_{j \in \mathcal{N}_i^-} \Big \{ \| w_i^{k+1} - w_i^{k}\| + R_j^{k} \Big \}$\\
        \If {$ k =  m \mathcal{D} - 1$} {
            $\widehat{R}_i^{m-1}=R_i^{k+1}$;\\
            \uIf {$\widehat{R}_i^{m-1} < \varepsilon$} {\textbf{break} \tcp*{$\varepsilon$-consensus achieved}
                }
            \uElse { 
             $R_{i}^{k+1} = 0$;\\
             $m = m + 1$;
            }
        }
    }
    \caption{Finite-time $\varepsilon$-consensus protocol at each node $i \in \mathcal{V}$ \cite{melbourne2020geometry}}
    \label{alg:econs}
\end{algorithm}
\noindent Theorem~\ref{thm:eCons} gives a criterion for termination of the consensus iterations~(\ref{eq:cons_u})-(\ref{eq:cons_w}) by utilizing the radius updates at each agent $i \in \mathcal{V}$ given by~(\ref{eq:radius}). In particular, by tracking the radii $\widehat{R}_i^m, m =0,1,2\dots$ each agent can determine an estimate of the radius of the minimal ball enclosing all the agent states (Lemma~\ref{lem:state_inside_ball}) distributively. Further, monitoring the value of the $\widehat{R}_i^m, m =0,1,2\dots$ each agent can have the knowledge of the state of consensus among all the agents. A protocol to determine $\varepsilon$-consensus among the agents is given in Algorithm~\ref{alg:econs}. Algorithm~\ref{alg:econs} is initialized with $w_i^0 = u_i^0, v_i^0 = 1$ and $R_i^0 = 0$ respectively for all $i \in \mathcal{V}$. Each agent follows rules~(\ref{eq:cons_u})-(\ref{eq:cons_w}) and~(\ref{eq:radius}) to update its state and radius variables respectively. 
$R_i^k$ is reset at each iteration of the form $ k = m \mathcal{D}, m = 1,2,\dots$ to have a value equal to $0$. The sequence of radii $\{ \widehat{R}_i^m \}_{m \geq 0}$ is determined by setting the value $\widehat{R}_i^{m-1}$ equal to $R_i^{k}$ for all iterations of the form $k = m\mathcal{D}, m = 1, 2, \dots$, i.e, $\widehat{R}_i^{m-1}=R_i^{m \mathcal{D}}$ for all $m \geq 1$. From Lemma~\ref{lem:state_inside_ball}, the ball $\mathcal{B}(w_i^{m \mathcal{D}}, \widehat{R}_i^{m-1})$ will contain all estimates $w_i^{(m-1) \mathcal{D}}$. This ball is the estimate of the minimal ball enclosing all agent states $w_i^{(m-1) \mathcal{D}}$. Therefore, as a method to detect $\varepsilon$-consensus, at every iteration of the form $m \mathcal{D}$, for $m = 1, 2, \dots$,
$\widehat{R}_i^{m-1}$ is compared to the parameter $\varepsilon$, if  $\widehat{R}_i^{m-1} < \varepsilon $ then all the agent states $w_i^{(m-1) \mathcal{D}}$ were $\varepsilon$-close to $\widetilde{u}$ (from Theorem~\ref{thm:consensusconv}) and the iterations~(\ref{eq:cons_u})-(\ref{eq:cons_w}) are terminated. Proposition~\ref{prop:finite-time} establishes that the $\varepsilon$-consensus converges in finite number of iterations.

\begin{proposition}\label{prop:finite-time}
Under the Assumption~\ref{ass:graph_ass}, $\varepsilon$-consensus is achieved in finite number of iterations at each agent $i \in \mathcal{V}$.
\end{proposition}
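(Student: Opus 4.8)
The plan is to deduce the claim almost immediately from Theorem~\ref{thm:eCons}, after carefully matching the bookkeeping of Algorithm~\ref{alg:econs} with the indexed radii $\widehat{R}_i^m$ defined in~(\ref{eq:radiushat}). Agent $i$ terminates precisely at the first iteration of the form $k = m\mathcal{D}$ at which the inner test $\widehat{R}_i^{m-1} < \varepsilon$ succeeds, so the entire statement reduces to showing that, for every fixed $\varepsilon > 0$, this test is triggered after finitely many blocks of $\mathcal{D}$ iterations.

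First I would verify that the quantity the algorithm calls $\widehat{R}_i^{m-1}$ coincides with $R_i^{\mathcal{D}}((m-1)\mathcal{D})$ appearing in~(\ref{eq:radiushat}). In the algorithm the radius is reset to $0$ at every iteration that is a multiple of $\mathcal{D}$ and is then propagated by the max-recursion~(\ref{eq:radius}); hence over a block $k \in \{(m-1)\mathcal{D}, \dots, m\mathcal{D}\}$ the running algorithmic radius at iteration $k$ equals $R_i^{\,k-(m-1)\mathcal{D}}((m-1)\mathcal{D})$, under the reset convention $R_i^0(s) = 0$. Evaluating at the end of the block, $k = m\mathcal{D}$, gives $\widehat{R}_i^{m-1} = R_i^{m\mathcal{D}} = R_i^{\mathcal{D}}((m-1)\mathcal{D})$, so the sequence $\{\widehat{R}_i^{m}\}_{m \geq 0}$ monitored inside the If-block is exactly the sequence to which Theorem~\ref{thm:eCons} applies.

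With this identification in place, Theorem~\ref{thm:eCons} yields $\lim_{m \to \infty} \widehat{R}_i^m = 0$ for each $i \in \mathcal{V}$ under Assumptions~\ref{ass:graph_ass} and~\ref{ass:weightmat_ass}. By the definition of the limit, for the prescribed tolerance $\varepsilon > 0$ there is a finite index $m_i$ with $\widehat{R}_i^{m} < \varepsilon$ for all $m \geq m_i$; in particular the test $\widehat{R}_i^{m-1} < \varepsilon$ succeeds no later than $m = m_i + 1$. Consequently agent $i$ executes the break after at most $(m_i+1)\mathcal{D}$ iterations, a finite number, which is exactly the assertion. Since the argument is carried out separately at each node, the per-agent phrasing of the statement is met even though the stopping indices $m_i$ may differ across agents.

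The only genuine content is the convergence $\widehat{R}_i^m \to 0$, which is already supplied by Theorem~\ref{thm:eCons} (itself resting on the asymptotic consensus of Theorem~\ref{thm:consensusconv} together with the enclosing-ball Lemma~\ref{lem:state_inside_ball}); everything downstream is a one-line limit argument. Accordingly, I expect the main obstacle to be purely notational, namely making the reset/relabelling inside the loop rigorously match the fixed-start radius recursion $R_i^k(s)$ of~(\ref{eq:radius}), rather than analytical.
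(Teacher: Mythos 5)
Your proposal is correct and follows essentially the same route as the paper: invoke Theorem~\ref{thm:eCons} to obtain $\widehat{R}_i^m \to 0$ and then apply the definition of the limit to get a finite termination index for each agent. The additional bookkeeping you do---identifying the algorithm's reset radius with $R_i^{\mathcal{D}}((m-1)\mathcal{D})$ from~(\ref{eq:radiushat})---is a careful spelling-out of what the paper handles informally in the text preceding the proposition, not a different argument.
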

\begin{proof}
Note, $\widehat{R}_i^m \rightarrow 0$ as $m \rightarrow \infty$. Thus, given $\varepsilon >0, i \in \mathcal{V}$ there exists finite $\overline{k}_i^\varepsilon$ such that for $ m \geq \overline{k}_i^\varepsilon, \widehat{R}_i^m < \varepsilon$.
\end{proof}

\noindent Note, that the radius estimates $\widehat{R}_i^m$ can be different for some of the agents. Thus, the detection of $\varepsilon$-consensus can happen at different iterations for some nodes. In order to, have a global detection, each agent can generate "converged flag" indicating its own detection. Such a flag signal can then be combined by means of a distributed one-bit consensus updates (see \cite{melbourne2020geometry}), thus allowing the agents to achieve global $\varepsilon$-consensus. 

\end{subsection}

\begin{subsection}{ $\y$ variable updates in $\D$}\label{sec:y_update_sec}
\begin{algorithm}[b]
\small
    \SetKwBlock{Input}{Input:}{}
    \SetKwBlock{Initialize}{Initialize:}{}
    \SetKwBlock{Repeat}{Repeat for $ k = 0,1,2, \dots$}{}
    \SetKwBlock{STEPTWO}{STEP 2:}{}
    \newcommand{\inparallel}{\textbf{In parallel}}
    \newcommand{\until}{\textbf{until}}
    \Input{$ \text{choose} \ \gamma>0, \{ \eta_k \}_{k \geq 0}$ }
    \Initialize{$\textcolor{black}{x_i^0 \in \mathbb{R}^{(m+p)}, y_i^0 \in \mathbb{R}^{(m+p)}, \lambda_i^0 = \mathbf{0}_{m+p} \in \mathbb{R}^{(m+p)}} , \mu_i^0 \in \mathbb{R}^m,  \forall i \in \V $}
    \Repeat {
            \For{$ i = 1,2,\dots n$, (\inparallel)} 
            {$x_i^{k+1} = \textstyle \argmin\limits_{x_i} \Big\{ f_i(x_i) + \lambda_i^{k^\top} (x_i - y_i^{k}) \textstyle + \frac{\gamma}{2} \| x_i - y_i^k\|^2 + \mu_{i}^{k^\top} (\textcolor{black}{A_i} x_i - \textcolor{black}{b_i}) \textstyle + \frac{\gamma}{2} \| \textcolor{black}{A_i} x_i - \textcolor{black}{b_i} \|^2 \Big\},$\\
            $y_i^{k+1} \xleftarrow[]{\text{$\eta_{k+1}$  - consensus protocol}}(x_i^{k+1} + \frac{1}{\gamma}  \lambda_i^{k})$\\
            $\lambda_i^{k+1} = \lambda_i^{k} + \gamma(x_i^{k+1} -  y_i^{k+1}) $ \\
            $\mu_i^{k+1} = \mu_i^{k} + \gamma (\textcolor{black}{A_i} x_i^{k+1} - \textcolor{black}{b_i})$
            }
            \until \ a stopping criterion is met
    }
        \caption{$\D$ Algorithm}
        \label{alg:distADMM}
\end{algorithm}
The finite-time $\varepsilon$-consensus protocol discussed above is utilized to determine an inexact solution to the update~(\ref{eq:dadmm_y}). At any iteration $k \geq 0$ of the $\D$ algorithm, each agent $i \in \mathcal{V}$ runs an $\varepsilon$-consensus protocol with the tolerance $\varepsilon = \eta_{k+1}$ and the following initialization:
\begin{align} \label{eq:cons_opt_initialization}
    \textstyle u_i^0 = x_i^{k+1}+ \frac{1}{\gamma} \lambda_i^{k}, \ v_i^0 = 1, \ \mbox{and} \ w_i^0 = u_i^0. 
\end{align}
From Proposition~\ref{prop:finite-time}, there exists finite $t_k$ such that $\forall i,j \in \mathcal{V}$,
\begin{align}
    \| w_i^{t_{k}}  -  w_j^{t_{k}}\| \leq \eta_{k+1}, \ \ \mbox{and} \ \ \| w_i^{t_{k}} - \overline{y}^{k+1} \| \leq \eta_{k+1}, \label{eq:cons_sol}
\end{align}
where, $\overline{y}^{k+1} = \frac{1}{n}\sum_{i=1}^n [x_i^{k+1} + \frac{1}{\gamma} \lambda_i^k]$. The variable $y_i^{k+1}$ of agent $i$ is then set equal to the consensus variable $w_i^{t_k}$, i.e. $y_i^{k+1} = w_i^{t_k}$. Therefore, the obtained vector $\y^{k+1} = [y_1^{k+1^\top}, \dots, y_n^{k+1^\top}]^\top \in \textcolor{black}{\mathbb{R}^{n(m+p)}}$ is an \textit{inexact} solution to the update~(\ref{eq:dadmm_y}) (see Appendix~{E}). Further, $\y^{k+1}$ is $\sqrt{n}\eta_{k+1}$-close to the exact solution $\overline{\y}^{k+1}$, i.e.,
\begin{align}\label{eq:error}
     \y^{k+1} = \overline{\y}^{k+1} + e^{k+1}, \ \mbox{with} \ \|e^{k+1}\| \leq \sqrt{n}\eta_{k+1}.
\end{align}
The $\D$ algorithm is presented in~Algorithm~\ref{alg:distADMM}.
\end{subsection}
\end{section}

\begin{section}{Convergence Results for $\D$}\label{sec:convgAnalysis}
In this section, convergence result for the proposed $\D$ algorithm is presented. Let \textcolor{black}{$\mathcal{X} := \mathcal{X}_1 \times \dots \times \mathcal{X}_n, ^n\mathbb{R}^m_{\geq 0} := \underbrace{\mathbb{R}_{\geq 0}^m \times \dots \times \mathbb{R}_{\geq 0}^m}_{\text{n \ times}}$} and \textcolor{black}{$x^* \in \mathbb{R}^{(m+p)}$} denote an optimal solution of problem~(\ref{eq:introprob}). Throughout the rest of the paper the following assumptions hold:
\begin{assump}\label{ass:constraint_ass}
Each $\mathcal{X}_i$ is a closed bounded convex set with diameter $\diam_{\|\cdot \|}(\mathcal{X}_i) = \mathcal{M}_i$. 
\end{assump}
\noindent Note that Assumption~\ref{ass:constraint_ass} can be relaxed if the set of optimal solutions of the original distributed problem~(\ref{eq:introprob}) is bounded (see \cite{bertsekas1997nonlinear}); in this case  the existence of a bound on the local variables can be inferred from the bound on optimal solutions. Here this route is not taken, as in many multi-agent distributed optimization applications the variables are required to remain within specified bounds. Assumption~\ref{ass:constraint_ass} is motivated by the constraint set requirement in many practical applications.
\begin{assump}\label{ass:fconvex}
The function $\tilde{f}_i, i \in \{1,\dots,n\}$, is a proper closed convex function, which is not necessarily differentiable.
\end{assump} 
\begin{assump}\label{ass:prob_ass1}
There exists a saddle point $(\x^*,\y^*,\lambda^*,\mu^*)$ for the Lagrangian function $\La$ defined in~(\ref{eq:lag}), i.e., for all $(\x,\y) \in \textcolor{black}{\mathbb{R}^{n(m+p)} \times \mathbb{R}^{n(m+p)} }$, $(\lambda, \mu) \in  \textcolor{black}{ \mathbb{R}^{n(m+p)} \times \mathbb{R}^{nm} }$, where, 
\begin{align*}
    \La(\x^*,\y^*,\lambda,\mu) \leq \La(\x^*,\y^*,\lambda^*,\mu^*) \leq \La(\x,\y,\lambda^*,\mu^*). 
\end{align*}
\end{assump}
\noindent Note that since~(\ref{eq:distOpt_indifunc1}) is a convex optimization problem, the existence of dual optimal solutions is guaranteed if a constraint qualification condition like Slater's CQ holds for problem~(\ref{eq:distOpt_indifunc1}) \cite{rockafellar2015convex}. \textcolor{black}{Note that the indicator function of a set is a convex but not differentiable function.}
\noindent For further analysis of the $\D$ algorithm consider the following relation: for $x_1, x_2, x_3, x_4 \in \mathbb{R}^p$ 
\begin{align}\label{eq:loc}
    (x_1 - x_2)^\top(x_3 - x_4) & = \textstyle \frac{1}{2}\big(\|x_1 - x_4 \|^2 - \|x_1 - x_3 \|^2\big) \nonumber  \\
    & \hspace{-0.2in} + \textstyle \frac{1}{2}\big(\|x_2 - x_3 \|^2 - \|x_2 - x_4 \|^2 \big).
\end{align}

\noindent Under Assumption~\ref{ass:fconvex}, the first-order optimality and primal feasibility conditions for~(\ref{eq:distOpt_indifunc1}) are:
\textcolor{black}{
\begin{align}
   \hspace{-0.2in} - \lambda_i^* - \textcolor{black}{A_i}^\top\mu_{i}^* & \in \partial f_i(x_i^*), \ i = 1,2,\dots,n, \label{eq:xsolnoptcond}\\
    \lambda^* & \in \partial \mathcal{I}_{\mathcal{C}_0}(\y^*),\label{eq:ysolnoptcond} \\
     x_i^* &= y_i^* , \ i  = 1,2,\dots,n, \\
    \textstyle \textcolor{black}{A_i}x_i^* - \textcolor{black}{b_i} &= 0, \ i  = 1,2,\dots,n,\label{eq:eqconst_opt}
\end{align}
where, $\partial f_i(x_i^*)$ and $\partial \mathcal{I}_{\mathcal{C}_0}(\y^*)$ are the set of all the sub-gradients of $f_i$ at $x_i^*$, and $\mathcal{I}_{\mathcal{C}_0}$ at $\y^*$ respectively.} Similarly, for the $\D$ updates~(\ref{eq:dadmm_x}), and~(\ref{eq:dadmm_y}) by the first-order optimality condition t follows that, for all $i \in \{1,\dots, n\}$:
\textcolor{black}{
\begin{align}
    & - (\lambda_i^k + \textcolor{black}{A_i}^\top\mu_{i}^k + \gamma (x_i^{k+1} - y_i^k) + \gamma \textcolor{black}{A_i}^\top(\textcolor{black}{A_i}x_i^{k+1} - \textcolor{black}{b_i})) \nonumber \\
    & \hspace{2in} \in \partial f_i(x_i^{k+1}), \label{eq:iteroptcond1}\\
    & - ( \lambda_i^{k+1} + \gamma (y_i^{k+1} - y_i^k) + \textcolor{black}{A_i}^\top \mu_i^{k+1} )  \in \partial f_i(x_i^{k+1}), \label{eq:iteroptcond2}\\
    & \hspace{0.15in} \gamma (\x^{k+1} - \overline{\y}^{k+1}) + \lambda^k \in \partial \mathcal{I}_{\mathcal{C}_0}(\overline{\y}^{k+1}) \nonumber \\
    & \hspace{0.75in} \implies  \overline{\lambda}^{k+1} \in \partial \mathcal{I}_{\mathcal{C}_0}(\overline{\y}^{k+1}),  \label{eq:iteroptcond3}
\end{align}}where, $\partial \mathcal{I}_{\mathcal{C}_0}(\overline{\y}^{k+1})$ is the set of all sub-gradients of $\mathcal{I}_{\mathcal{C}_{0}}$ at $\overline{\y}^{k+1}$. The relation~(\ref{eq:iteroptcond2}) results from combining~(\ref{eq:iteroptcond1}) with the update~(\ref{eq:dadmm_lam}), and in~(\ref{eq:iteroptcond3})~(\ref{eq:error}) is employed.

\begin{lemma}\label{lem:ylambdabound}
Let $\{\x^{k}\}_{k \geq 1}$, $\{\y^{k}\}_{k \geq 1}$, $\{\lambda^{k}\}_{k \geq 1}$, and $\{\mu^{k}\}_{k \geq 1}$, be the sequences generated by Algorithm~\ref{alg:distADMM}. Let $\{\eta_k\}_{k \geq 1}$ be the sequence of tolerances in Algorithm~\ref{alg:distADMM}. Under assumptions~\ref{ass:graph_ass} and~\ref{ass:constraint_ass}, there exists $\mathbf{Q}$, independent of $k$, such that,
\begin{align*}
 \|\lambda^k\| \leq k \mathbf{Q}, \ \mbox{for all} \ k.  
\end{align*}
\end{lemma}
\begin{proof}
See Appendix~{F}.
\end{proof}
\begin{lemma}[\textbf{Iteration complexity of the $\y^{k+1}$ update~(\ref{eq:cons_sol})}]\label{lem:cons_comm} 
Under assumptions~\ref{ass:graph_ass} and~\ref{ass:constraint_ass} at iteration $k$ of Algorithm~\ref{alg:distADMM}, after $\overline{t_k} := \ceil*{  \textstyle \frac{\log\big(\frac{1}{\eta_{k+1}}\big)}{-\log \alpha} + \frac{\log\big(\frac{8((\mathcal{M} + \textcolor{black}{\|x^0\|}) n + k \sqrt{n} \mathbf{Q}/\gamma)}{\beta} \big)}{-\log \alpha}}$ iterations of the consensus protocol (updates~(\ref{eq:cons_u})-(\ref{eq:cons_w})) with the initial condition, $u_i^0 = x_i^{k+1} + \frac{1}{\gamma} \lambda_i^{k}, \ v_i^0 = 1,$ for all $i \in \mathcal{V}$ and $\varepsilon = \eta_{k+1}$ it follows that: 
\begin{align*}
 \|y_i^{k+1} - \overline{y}_i^{k+1}\|  \leq \textstyle \eta_{k+1}, \ \forall i \in \mathcal{V}.
\end{align*}
Here, $\mathcal{M} := \max_{1 \leq i \leq n} \mathcal{M}_i$, $\eta_{k+1}$ denotes the inaccuracy introduced in solving relaxed version of problem~(\ref{eq:dadmm_y}), $\mathbf{Q}$ is a finite constant as given in Lemma~\ref{lem:ylambdabound}, and $\beta >0$, $\alpha \in (0,1)$ are parameters of the graph $\mathcal{G}(\mathcal{V},\mathcal{E})$ satisfying $
    \beta \geq \frac{1}{n^n}, \alpha \leq \textstyle \left(1 - \frac{1}{n^n} \right)$.
\end{lemma}
\begin{proof}
See Appendix~{G}.
\end{proof}

Let the sequence of accuracy tolerances $\{\eta_k\}_{k \geq 1}$ in Algorithm \ref{alg:distADMM} be such that 
\begin{align}\label{eq:summable}
   \textstyle \mbox{for all}  \ k \geq 0, \eta_{k+1} \leq \eta_{k}, \sum_{k=1}^{\infty} k \eta_k < \infty.
\end{align}
Some examples of sequences that satisfy~(\ref{eq:summable}) include:
\begin{align*}
    \textstyle \mbox{i)} \ \eta_k = \frac{1}{k^{2+q}}, q > 0, \ \mbox{ii)} \ \eta_k = \rho^k, \rho \in (0,1), \ \mbox{iii)} \ \eta_k = \frac{\rho^k}{k!}.
\end{align*}
In the following, the global convergence of the proposed $\D$ algorithm to a solution of problem~(\ref{eq:distOpt_indifunc1}) is established.
\begin{theorem}[\textbf{Convergence of iterates generated by $\D$ algorithm to an optimal solution}]\label{thm:convergence}
Let $\{\x^{k}\}_{k \geq 1}$, $\{\y^{k}\}_{k \geq 1}$, $\{\lambda^{k}\}_{k \geq 1}$, $\{\mu^{k}\}_{k \geq 1}$, be the sequences generated by Algorithm~\ref{alg:distADMM}. Let the consensus tolerance sequence $\{\eta_k\}_{k \geq 1}$ satisfy~(\ref{eq:summable}). Under assumptions~\ref{ass:graph_ass}-\ref{ass:prob_ass1}, $(\x^k,\y^k,\lambda^k,\mu^k)$ converges to a solution $(\x^\infty,\y^\infty,\lambda^\infty,\mu^\infty)$ of~(\ref{eq:distOpt_indifunc1}), i.e., $\F(\x^\infty) = \F(\x^*), \y^\infty \in \mathcal{C}_0, \x^\infty = \y^\infty$, and $\A \x^\infty = \bb$.  
\end{theorem}
\begin{proof}
See Appendix~\ref{sec:proof_convergence}.
\end{proof} 
\noindent Next two estimates of rate of convergence for the proposed $\D$ algorithm are provided. 

\begin{subsection}{Sub-linear Rate of Convergence}
Here, the convergence of the proposed $\D$ algorithm for the case when individual functions $\tilde{f}_i$ are convex but not necessarily differentiable is analyzed.
\begin{theorem}[\textbf{Sub-linear rate of convergence}]\label{thm:order1/k}
Let $\{\x^{k}\}_{k \geq 1}$, $\{\y^{k}\}_{k \geq 1}$, $\{\lambda^{k}\}_{k \geq 1}$, and $\{\mu^{k}\}_{k \geq 1}$, be the sequences generated by Algorithm~\ref{alg:distADMM}. Let $\eta_k = \frac{1}{k^{2+q}}, q \in (0,1)$. Let assumptions~\ref{ass:graph_ass}-\ref{ass:prob_ass1} hold, then for all $k \geq 1$, 
\begin{align*}
    & \textstyle \F(\widehat{\x}^k) - \F(\x^*) = O(1/k), \ \  \|\widehat{\x}^k - \widehat{\y}^k\| = O(1/k), \\
    & \hspace{0.65in} \mbox{and}, \  \textstyle \| \A\widehat{\x}^k - \bb \| = O(1/k).
\end{align*}
where, $\widehat{\x}^k := \frac{1}{k}\sum_{s=0}^{k-1}\x^{s+1}$, $\widehat{\y}^k := \frac{1}{k}\sum_{s=0}^{k-1} \y^{s+1}$.
\end{theorem}
\begin{proof}
See Appendix~\ref{sec:proof_order1/k}.
\end{proof}
\noindent Theorem~\ref{thm:order1/k} establishes that the objective function evaluated at the ergodic average of the optimization variables obtained by the proposed $\D$ algorithm converges to the optimal value. In particular, the objective function value evaluated at the ergodic average converges to the optimal value at a rate of $O(1/k)$.
\begin{remark}
Let, $\hat{x}_i^{k} = \frac{1}{k} \sum_{s=0}^{k-1} x_i^s$ and $\hat{y}_i^{k} = \frac{1}{k} \sum_{s=0}^{k-1} y_i^s$ denote the ergodic averages of the variables $x_i^{k}$ and $y_i^{k}$. Since, $\|\widehat{\x}^k - \widehat{\y}^k \| = O(1/k)$ it implies that for all $i \in \mathcal{V}, \|\hat{x}_i^k - \hat{y}_i^k \| = O(1/k)$. \textcolor{black}{Further, since, for all $k \geq 0$ and $i,j \in \mathcal{V}, \|y_i^k - y_j^k\| \leq \eta_k$ (see~(\ref{eq:cons_sol})) it implies that $\| \hat{y}_i^k - \hat{y}_j^k \| \leq \frac{1}{k} \sum_{s=0}^{k-1} \| y_i^s - y_j^s \| \leq \frac{1}{k} \sum_{s=0}^{k-1} \eta_s \leq \frac{\zeta(2+q)}{k} $.} Thus, $\| \hat{y}_i^k - \hat{y}_j^k \| = O(1/k)$. Therefore, for all $ i,j \in \mathcal{V}$,
\begin{align*}
    \| \hat{x}_i^k - \hat{x}_j^k \| = O(1/k) \ \mbox{and} \  \lim_{k \rightarrow \infty} \| \hat{x}_i^k - \hat{x}_j^k \| = 0.
\end{align*}
Thus, in practice, the $\D$ algorithm can be implemented with an additional variable $\hat{x}_i^k$ tracking the ergodic average of $x_i^{k}$ at each agent $i \in \mathcal{V}$ to achieve a consensual solution at the $O(1/k)$ rate of convergence given in Theorem~\ref{thm:order1/k} in a fully distributed manner. Note, that the variable $\hat{x}_i^{k}$ is computed locally by each agent without any additional cost of communication and a minor addition in cost of computation.
\end{remark}

\begin{remark}\label{rem:1/k_comm}
By Lemma~\ref{lem:cons_comm} the total number of communication iterations performed until iteration $k$ of the $\D$ algorithm in Theorem~\ref{thm:order1/k}, is upper bounded by $\mathcal{K}:= \sum_{s=1}^k \overline{t_s} := O(k \log k).$
\textcolor{black}{The communication complexity is within a factor of $\log 
k$ of the optimal lower bound $O(k)$.}  
\end{remark}
\end{subsection}
\vspace{-0.2in}
\begin{subsection}{\textcolor{black}{Geometric} rate of Convergence}
\textcolor{black}{Here, a geometric rate of convergence for the proposed $\D$ algorithm under the following assumption is established:}

\begin{assump}\label{ass:lipschitzgrad_f_strconv}
Each function $\tilde{f}_i$  is Lipschitz differentiable with constant $L_{f_i}$ and \textcolor{black}{restricted strongly convex with respect to the optimal solution $x^*$ on $\mathcal{X}_i$ with parameter $\sigma_i > 0$}.
\end{assump}
\vspace{-0.25in}
\textcolor{black}{
\begin{remark}
Under Assumption~\ref{ass:lipschitzgrad_f_strconv} the problem~(\ref{eq:introprob}) has a unique optimal solution. However, Assumption~\ref{ass:lipschitzgrad_f_strconv} is less restrictive than the standard global strong convexity assumption and makes the analysis presented here applicable to a bigger class of functions \cite{zhang2015restricted}. For example, the widely used logistic regression objective function is restricted strongly convex but not globally strongly convex \cite{bach2014adaptivity}.
\end{remark}
\begin{theorem}[\textbf{Geometric rate of convergence}]\label{thm:linearrate}
Let $\{\x^{k}\}_{k \geq 1}$, $\{\y^{k}\}_{k \geq 1}$, $\{\lambda^{k}\}_{k \geq 1}$, and $\{\mu^{k}\}_{k \geq 1}$, be the sequences generated by Algorithm~\ref{alg:distADMM}. Let assumptions~\ref{ass:graph_ass},\ref{ass:constraint_ass},\ref{ass:prob_ass1}, and \ref{ass:lipschitzgrad_f_strconv} hold. Let $\Delta:= (1 - \frac{1}{\delta}) \min\{ 1, \nu_{\min}(\A\A^\top)\}$ for any $\delta \in \big(1, 1 + \frac{2\sigma \gamma}{L^2 + \gamma^2}\big],$ where $L:= \max_{1 \leq i \leq n} L_{f_i}, \sigma := \min_{1 \leq i \leq n} \sigma_i$, and $\nu_{\min}(\A\A^\top)$ is the minimum eigenvalue of $\A\A^\top$. Let $\eta_k = \rho^k$, where, $\rho \in \left[\frac{1}{1+\Delta},1\right)$. Let $x_i^{k} \in \relint(\mathcal{X}_i \times \mathbb{R}_{\geq 0}^m), \forall i \in \mathcal{V}, k \geq 0.$ Then the agent solution residual $s_k:= \frac{\gamma}{2}  \|\x^{k} - \x^* \|^2 + \frac{1}{2\gamma} \|\lambda^{k} - \lambda^*\|^2 + \frac{1}{2 \gamma} \| \mu^{k} - \mu^*\|^2$ has the following relation: for any $K \geq 0$ and $\epsilon > 0$,
\begin{align*}
     \textstyle  s_K \leq \textstyle  \Upsilon \rho^K + O(\epsilon ),
\end{align*}
where, $\Upsilon$ is a finite constant defined in (\ref{eq:upslion}).
\end{theorem}
\begin{proof}
See Appendix~\ref{sec:proof_linearrate}.
\end{proof}
}
\begin{remark}\label{rem:rho_interval}
Although, Theorem~\ref{thm:linearrate} provides a tight bound on the requirement of $\rho$, in practice the tolerance sequence parameter $\rho$ can be chosen from the interval $\left[\frac{2}{3}, 1 \right)$  which does not require the knowledge of the restricted strong convexity parameter $\sigma_i$ of the individual functions and the optimal solution $x^*$. We explain this below:\\
    From the definition of $\Delta$ we have, 
    $\Delta \leq (1- 1/\delta) $. Using the condition on $\delta$ in Theorem~\ref{thm:linearrate} we get,
    \begin{align*}
        1 & < \delta \leq \frac{L^2 + \gamma^2 + 2 \sigma \gamma}{L^2 + \gamma^2 } \\
        1 & > \frac{1}{\delta} \geq \frac{L^2 + \gamma^2}{L^2 + \gamma^2 + 2 \sigma \gamma} \\
        -1 & < -\frac{1}{\delta} \leq -\frac{L^2 + \gamma^2}{L^2 + \gamma^2 + 2 \sigma \gamma} \\
        0 & < 1 - \frac{1}{\delta} \leq \frac{2\sigma \gamma}{L^2 + \gamma^2 + 2 \sigma \gamma} =  \frac{1}{\frac{L^2 + \gamma^2}{2\sigma \gamma} + 1} \leq \frac{1}{2},
    \end{align*}
    where, the last inequality follows from the AM-GM inequality and the fact that $L \geq \sigma$. Thus, for $\frac{1}{\Delta + 1} = \left(1 + \left(1 - \frac{1}{\delta}\right) \right)^{-1}$
    \begin{align*}
       \frac{2}{3} \leq \frac{1}{\Delta + 1} < 1.
    \end{align*}
    Note that the $\rho \in \left[\frac{1}{1+\Delta} , 1 \right)$. Thus, $\rho \in \left[\frac{2}{3}, 1 \right)$. We have utilized $\rho = 0.75$ (and hence, $\eta_k = 0.75^k$) in the numerical simulations presented in the article (see Section~\ref{sec:results}). The simulation results demonstrate the theory's applicability and show good performance for the $\D$ algorithm with $\rho = 0.75 \in \left[\frac{2}{3}, 1 \right)$.
\end{remark}
\begin{remark}\label{rem:linear_comm}
By Lemma~\ref{lem:cons_comm} the total number of communication iterations performed until iteration $k$ of the $\D$ algorithm in Theorem~\ref{thm:linearrate}, is upper bounded by $\mathcal{K}:= \sum_{s=1}^k \overline{t_s} := O(k^2).$  Thus, compared to Theorem~\ref{thm:order1/k} the improved rate of convergence leads to an increase in the number of communication iterations (Remark~\ref{rem:1/k_comm}). \textcolor{black}{Moreover, compared to algorithms utilizing multiple communication steps in the literature \cite{berahas2018balancing, berahas2021convergence,chen2012fast, li2018sharp,ye2020multi, jakovetic2014fast, johansson2008subgradient, khatana2020gradient} the $\D$ algorithm has the same communication complexity. In particular, methods in \cite{berahas2018balancing, berahas2021convergence, chen2012fast} have communication complexity, $O(k^2)$, in getting a geometric rate. Schemes proposed in \cite{li2018sharp, ye2020multi, jakovetic2014fast, johansson2008subgradient,khatana2020gradient} have the same, $O(k \log k)$, communication complexity in getting a $O(1/k)$ rate of convergence.}
\end{remark}
\begin{remark}
The geometric rate of convergence for $\D$ algorithm does not follow from the existing centralized results (see \cite{lin2015global}, \cite{rockafellar1976augmented}) as problem~(\ref{eq:distOpt_indifunc1}) does not satisfy the assumptions used in these works. In particular, the function $\mathcal{I}_{{\mathcal{C}}_0}(\y)$ is not differentiable and does not meet the requirements of being global Lipschitz diffferentiable, twice continuously differentiable and strongly convex used in \cite{lin2015global}, \cite{rockafellar1976augmented}. Further, unlike \cite{lin2015global}, \cite{rockafellar1976augmented}, the $\y$ variable update in the ADMM scheme of $\D$ algorithm in an inexact manner  is solved via the $\varepsilon$-consensus protocol that adds additional complexities in the analysis. These reasons have motivated us to present the analysis of the $\D$ given in Theorem~\ref{thm:linearrate}.  
\end{remark}

\end{subsection}

\end{section}

\begin{section}{Numerical Simulations}\label{sec:results}
In this section, \textcolor{black}{three} simulation studies are presented for the proposed $\D$ algorithm. First, a performance \textcolor{black}{comparison} of the $\D$ algorithm \textcolor{black}{with two existing algorithms in the literature for solving constrained distributed optimization problems, \cite{liu2017constrained} and \cite{zhu2011distributed}, in solving a distributed $\ell_1$ regularized logistic regression with a local linear equality and set (norm-ball) constraints is presented.} 

Second, a performance comparison of the $\D$ algorithm \textcolor{black}{is provided for solving an unconstrained $\ell_1$ regularized Huber loss minimization problem with the existing state-of-the art unconstrained distributed optimization algorithms on directed graphs. The algorithms used for comparison with the proposed $\D$ algorithm are the following: (i) EXTRAPush \cite{zeng2015extrapush}, (ii) PushPull \cite{pu2020push}, (iii) PushDIGing  \cite{nedic2017achieving}, and (v) the subgradientPush \cite{nedic2014distributed} algorithm.} 

\textcolor{black}{Third, a comparison between $\D$ algorithm and two existing algorithms utilizing multiple communication steps, \cite{berahas2018balancing} and \cite{jakovetic2014fast} in solving the unconstrained distributed least squares problem is presented.} 

A network of $100$ agents connected via:
(i) an undirected graph generated using an Erdos-Renyi model \cite{erdHos1960evolution} with a connectivity probability of $0.3$ in simulation studies one and three and (ii) a directed graph generated using an Erdos-Renyi model with a connectivity probability of $0.2$ in the simulation study two is considered. The weight matrices for the various algorithms are chosen using the equal neighbor model \cite{olshevsky2009convergence}. To provide a comparison, solution residual plots against the total (computation + communication) iteration counts for all the algorithms unless stated otherwise are presented. Further, a comparison based on the CPU time (the amount of time required by a computer (processor) to execute the instructions) between $\D$ and the other algorithms is provided. 

All the numerical examples in this section are implemented in MATLAB, and run on a desktop computer with 16 GB RAM and an Intel Core i7 processor running at 1.90 GHz. The parameters used in the simulations for all the algorithms are reported in Table~\ref{tab:parameters}. \textcolor{black}{In choosing the step-sizes of the algorithms in Table~\ref{tab:parameters} we followed the approach of hand-optimizing the hyper-parameters that produce good performance for these algorithms while maintaining the stability of the algorithmic estimates for the class of problems under consideration.}

\subsection{Performance on constrained optimization problem}
\textcolor{black}{Consider a $\ell_1$ regularized distributed logistic regression with linear equality constraint and local inequality constraints: 
\begin{align}\label{eq:logit}
   \hspace{-0.1in} \textstyle \minimize \sum_{i=1}^n \sum_{j = 1}^{n_i} \log \big( & 1 + \text{exp} \big( -y_{ij}(a^{\top}_{ij} x) \big) \big)  + \theta \|x\|_1 \nonumber \\
    \mbox{subject to} \ H_i x = h_i,\ & x^{\top} x \leq r_i, \mbox{ for all} \ i \in \mathcal{V},
\end{align}
}
\hspace{-0.1in} where, each agent $i \in \mathcal{V}$ has $n_i$ data samples $\{(a_{ij}, y_{ij})\}_{j = 1}^{n_i}$, with, $a_{ij} \in \mathbb{R}^{50}$ being the feature vector and $y_{ij} \in \{-1,+1\}$ is the binary outcome (or class label) of the feature vector $a_{ij}$. The objective is to learn the weight vector $x \in \mathbb{R}^{50}$ based on the available data $\{(a_{ij}, y_{ij})\}_{j = 1}^{n_i}, i \in \mathcal{V}$ such that $x$ is sparse. The parameter $\theta$ enforces the sparsity in $x$. \textcolor{black}{Further, the solution satisfy the linear equality constraints $H_i x = h_i$, where, $H_i \in \mathbb{R}^{500 \times 50}$. Denote $\mathbf{H} := \mathbf{I} \otimes H_i, \mathbf{h} = [h_1^\top, \dots, h_{n}^\top]^\top$}. Each agent $i \in\mathcal{V}$ also has a local set constraint $x^\top x \leq r_i$. For our simulation we generate an instance of problem~(\ref{eq:logit}) where each agent has $n_i = 10^6$ data samples. Each feature vector $a_{ij} \in \mathbb{R}^{50}$ and the `true' weight vector $x_{\text{true}} \in \mathbb{R}^{50}$ are generated to have approximately $40 \%$ zero entries. The non-zero entries of $a_{ij}$ and $x_{\text{true}}$ are sampled independently from the standard normal distribution. The class labels $y_{ij}$ are generated using the equation: $ y_{ij} = \text{sign} (a_{ij}^\top x_{\text{true}} + \delta_{ij})$, where $\delta_{ij}$ is a normal random variable with zero mean and variance $0.1$. The parameter $\theta$ is set to $0.1 \theta_{\text{max}}$, where $\theta_{\text{max}}$ is the critical value above which the solution $x^* = 0$ (see \cite{MAL-016} section 11.2 for the calculation of $\theta_{\text{max}}$). The value of $r_i$ in the constraints is chosen as follows: the unconstrained version of~(\ref{eq:logit}) with a centralized solver and denote the solution as $x_{\text{c}}^*$. The constraint $r_i$ is set as $r_i = (1 + \xi_i)x_{\text{c}}^*$, where $\xi_i$ is drawn from a uniform distribution on $[0,1]$. A value of $\gamma = 10$ is used for the Augmented Lagrangian. \textcolor{black}{The entries of each $H_i$ are sampled independently from the standard normal distribution. The vector $h_i, i \in \mathcal{V}$ is calculated as $h_i = H_i x_{\text{true}}$.} To solve~(\ref{eq:logit}) using the $\D$ algorithm, the fast iterative shrinkage thresholding algorithm (FISTA) \cite{beck2009fast} is used for the updates~(\ref{eq:dadmm_x}) at each agent $i$. A proximal residue (difference between the output of the proximal minimization step and the base-point of the proximal term) \cite{beck2009fast} is utilized as a stopping criterion for FISTA. The FISTA iterations are terminated when the proximal residue becomes less than $10^{-4}$. A fixed accuracy for FISTA is used at all iterations of $\D$ algorithm. For the updates~(\ref{eq:dadmm_y}) the $\varepsilon$-consensus protocol with the tolerance $\varepsilon$ set equal to a desired level of inaccuracy $\eta$ in the solution of~(\ref{eq:dadmm_y}) is employed. 
\begin{figure}[b] 
    \centering
  \subfloat{%
  \hspace{-0.1in}
       \includegraphics[scale=0.205,trim={0.8cm 0cm 0.4cm 0.5cm},clip] {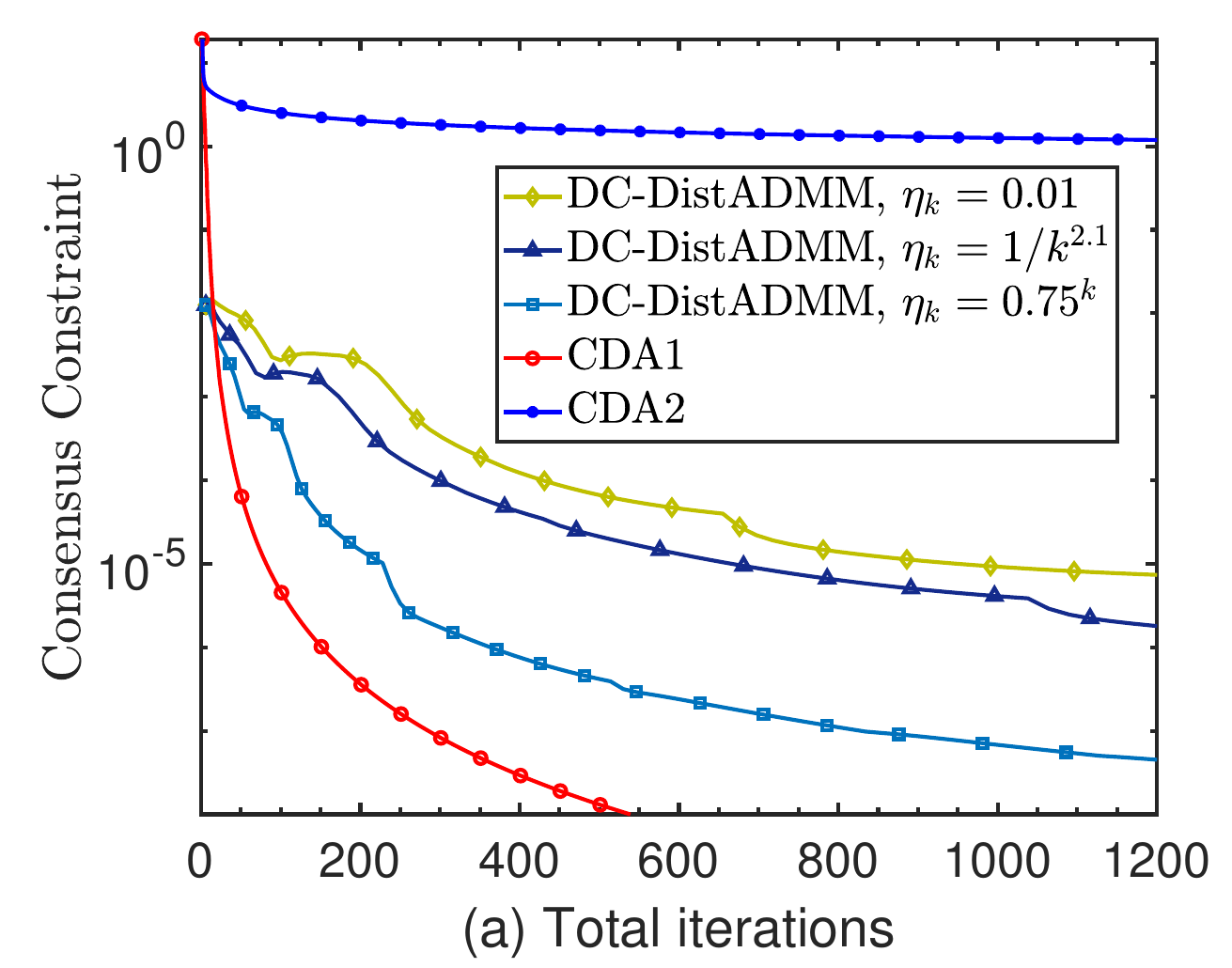}} \hspace{-0.05in}
  \subfloat{%
    \includegraphics[scale=0.205,trim={0.8cm 0cm 0.5cm 0.2cm},clip] {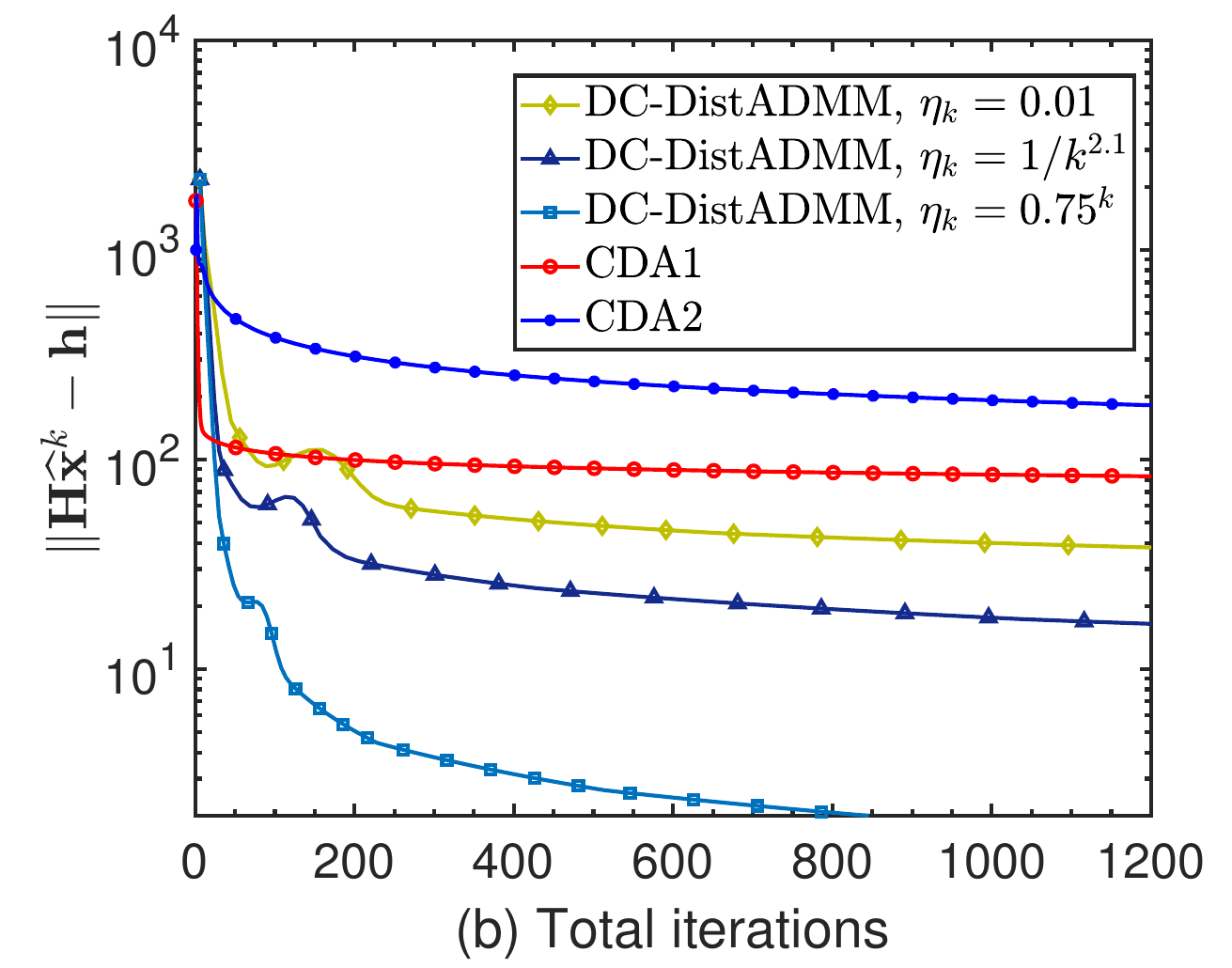}}
  \caption{(a) Comparison of consensus constraint residuals against total iterations (including communication steps) between $\D$ and constrained optimization algorithms (b) Comparison of equality constraint residuals against total iterations between $\D$ and constrained optimization algorithms. $\eta^k=0.75^k$ gives the best performance for the $\D$ algorithm in terms of the equality constraint residual. The $\D$ algorithm is second best in terms of the consensus constraints. }
  \label{fig:logit_constraints} 
\end{figure}
\begin{figure}[t] 
    \centering
  \subfloat{%
  \hspace{-0.1in}
       \includegraphics[scale=0.205,trim={0.2cm 0cm 0cm 0.1cm},clip] {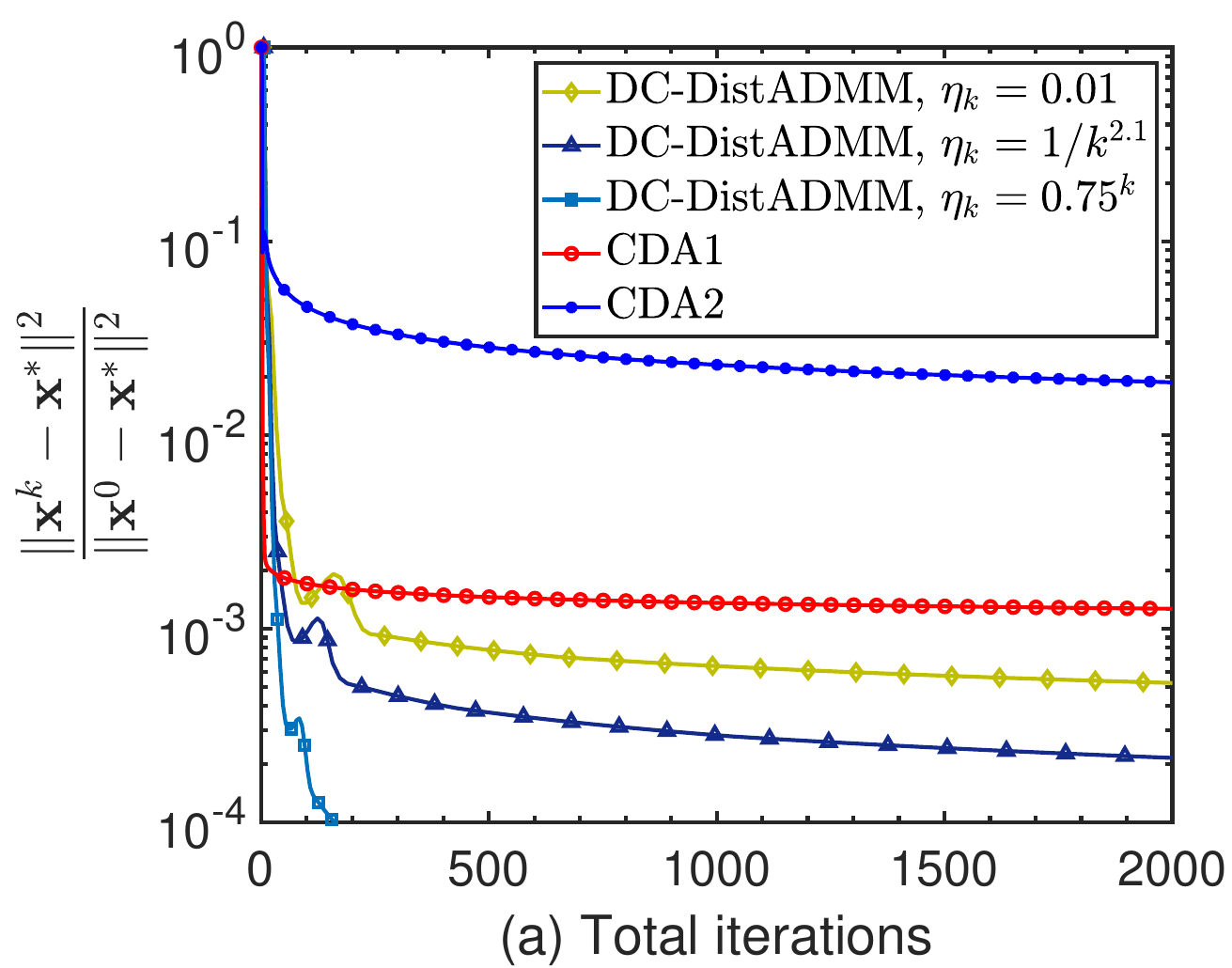}} \hspace{-0.05in}
  \subfloat{%
    \includegraphics[scale=0.205,trim={0.4cm 0cm 0.2cm 0.1cm},clip] {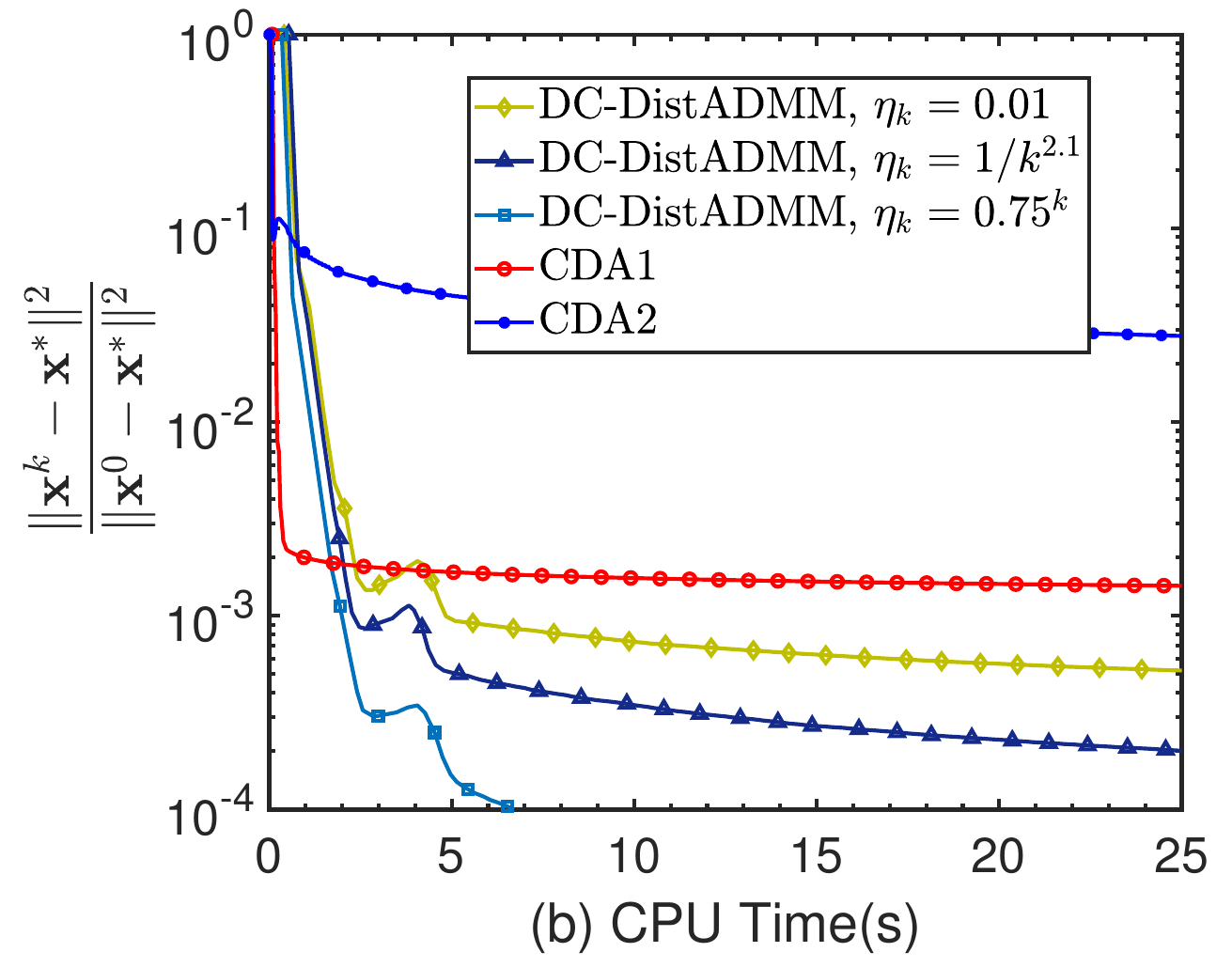}}
  \caption{(a) Comparison of solution residuals against total iterations between $\D$ and constrained optimization algorithms (b) Comparison of solution residuals against CPU time between $\D$ and constrained optimization algorithms. $\D$ algorithm with $\eta^k = 0.75^k$ has the best decrease in the solution residual.}
  \label{fig:logit_sol} 
\end{figure}
The results for three choices of the sequence $\{\eta_k\}_{k \geq 1}$: (i) $\eta_k = 0.01$, (ii) $\eta_k = 1/k^{2.1}$ and (iii) $\eta_k = 0.75^k$, where, $k$ is the iteration counter are presented for Algorithm~\ref{alg:distADMM}. Note, that sequences (ii) and (iii) satisfy condition~(\ref{eq:summable}) and are utilized to derive explicit rate of convergence for the $\D$ algorithm in theorems~\ref{thm:order1/k}, and~\ref{thm:linearrate}. \textcolor{black}{Two existing algorithms \cite{liu2017constrained} and \cite{zhu2011distributed} (termed as Constrained Distributed Algorithm 1 (CDA1) and CDA2 respectively for reference) for solving constrained distributed optimization problems are compared with the $\D$ algorithm. Fig.~\ref{fig:logit_constraints} shows the plots of consensus constraint residual ($\|\widehat{\x}^k - \widehat{\y}^k \|$ for $\D$ algorithm and $\| \x^k - \sum_{i=1}^{n} x_i^k/n\|$ for CDA1 and CDA2) and the equality constraint residual for the three algorithms with respect to the total iterations. The $\D$ performs well as seen in Fig.~\ref{fig:logit_constraints}(a). The consensus residual under $\D$ decreases to a value less than $10^{-4}$ within $200$ iterations for the choice $\eta_k = 0.75^k$. Note, that the constant $\eta_k$ sequence also have good performance with a value less than $10^{-2}$ within $200$ iterations. The plots in Fig.~\ref{fig:logit_constraints}(b) demonstrate the decrease in the equality constraint residual $\|\mathbf{H}\widehat{\x}^k - \mathbf{h} \|$. Fig.~\ref{fig:logit_sol} presents the trajectory of solution residuals $\frac{\|\x^k - \x^*\|^2}{\|\x^0 - \x^*\|^2}$ with respect to the total iterations and the CPU time (secs) for each algorithm. The proposed $\D$ algorithm outperforms the other two algorithms. The solution residual for $\D$ is lesser than CDA1 and CDA2 for all the three choices of the tolerance sequence $\eta_k$ with significant improvement with the choice $\eta_k=0.75^k$. 
Note that both CDA1 and CDA2 cannot handle directed communication topologies and need centralized synthesis for problem~(\ref{eq:logit}) whereas $\D$ does not. }

\subsection{Performance on unconstrained optimization problem}\label{sec:huber_unconstrained}
In the second part of the numerical simulation study a comparison of the proposed $\D$ algorithm is presented with some state-of-the art distributed optimization algorithms \textcolor{black}{for solving unconstrained optimization problems. Consider the following $\ell_1$ regularized Huber loss minimization problem:
\begin{align}\label{eq:huber}
    \minimize\limits \textstyle \ \ \sum_{i=1}^{n} \Phi_i(x) + \theta \|x\|_1,
\end{align}
where, $ \Phi_i(x)$ for all $i$ is the standard Huber loss given by:
\begin{align*}
     \Phi_i(x) = \begin{cases}
     \textstyle \frac{1}{2} \|D_i x - d_i\|^2, & \text{if} \ \|D_i x - d_i\| \leq 1,\\
     \|D_i x - d_i\| - \textstyle \frac{1}{2}, &  \text{if} \ \|D_i x - d_i\| > 1. 
     \end{cases}
\end{align*}
}
\hspace{-0.1in} Each agent $i \in \mathcal{V}$ has a measured data vector $d_i \in \mathbb{R}^{100}$ and a scaling matrix  $D_i \in \mathbb{R}^{100 \times 25}$. The objective is to estimate the unknown signal $x \in \mathbb{R}^{25}$. The entries of the matrices $D_i, i \in \mathcal{V}$ and the observed data $d_i$ are sampled independently from the standard normal distribution. The true solution vector $x^*$ has $70\%$ non-zero entries which are sampled from a standard normal distribution. A value of $\theta = 3$ is chosen for the regularization parameter.
\begin{table}[h]
\centering
\caption{Algorithm parameters}
\begin{tabular}{*{3}{c}} 
\hline\rule{0pt}{2.5ex} \textbf{ALGORITHM} & \textbf{Parameter} \\[0.5ex]
\hline \rule{0pt}{2ex}
\hspace{-0.07in} $\D$ & tolerance $\eta_k$ = 0.01, $1/k^{2.1}$, $0.75^k$, $\gamma = 10$ \\
CDA1 \cite{liu2017constrained}& step-size $\alpha = 0.01$ \\
CDA2 \cite{zhu2011distributed} & step-size $\alpha(k) = 1/k^{1.2}$ \\
EXTRAPush \cite{zeng2015extrapush} & step-size $\alpha = 0.0009 $  \\
PushPull \cite{pu2020push} & step-size $\alpha = 0.05 $  \\
PushDIGing \cite{nedic2017achieving} &  step-size $\alpha = 0.001$ \\
subgradientPush \cite{nedic2014distributed}  & step-size  $\alpha = 0.005 $\\
FDGD \cite{jakovetic2014fast} & step-size $\alpha =  \textcolor{black}{0.01}, \beta(k) =  k/(k+3)$ \\
nearDGD \cite{berahas2018balancing} & step-size $\alpha = \textcolor{black}{0.01} $ \\
\end{tabular}
\label{tab:parameters}
\end{table}
\textcolor{black}{Here, FISTA is employed to solve the sub-problem~(\ref{eq:dadmm_x}) with the same stopping criterion (proximal residue < $10^{-4}$) as in the logistic regression problem.} The update~(\ref{eq:dadmm_y}) is solved using $\varepsilon$-consensus protocol with \textcolor{black}{the tolerance $ \eta_k = 1/k^{2.1}$}. The progression of the solution residual with respect to the total iterations and the CPU time is presented in Fig.~\ref{fig:huber_sol}. In Fig.~\ref{fig:huber_sol}(a) plot of the solution residual with respect to the algorithm iteration $k$ (first curve) for the $\D$ algorithm is also provided. The rationale for this is to provide a comparison between $\D$ and other algorithm both with respect to algorithm iterations and total iterations. \textcolor{black}{It can be seen that the proposed $\D$ algorithm has a significantly better performance compared to other algorithms with respect to the algorithm iterations. In particular, solution residual decreases to a value of $10^{-4}$ in less than $50$ algorithm iterations of $\D$ algorithm whereas the second best method PushDIGing takes around $100$ iterations to reach the same solution residual. In terms of total iterations the $\D$ algorithm has acceptable performance.
\begin{figure}[h] 
    \centering
  \subfloat{%
  \hspace{-0.15in}
       \includegraphics[scale=0.20,trim={0.1cm 0cm 0cm 0.5cm},clip] {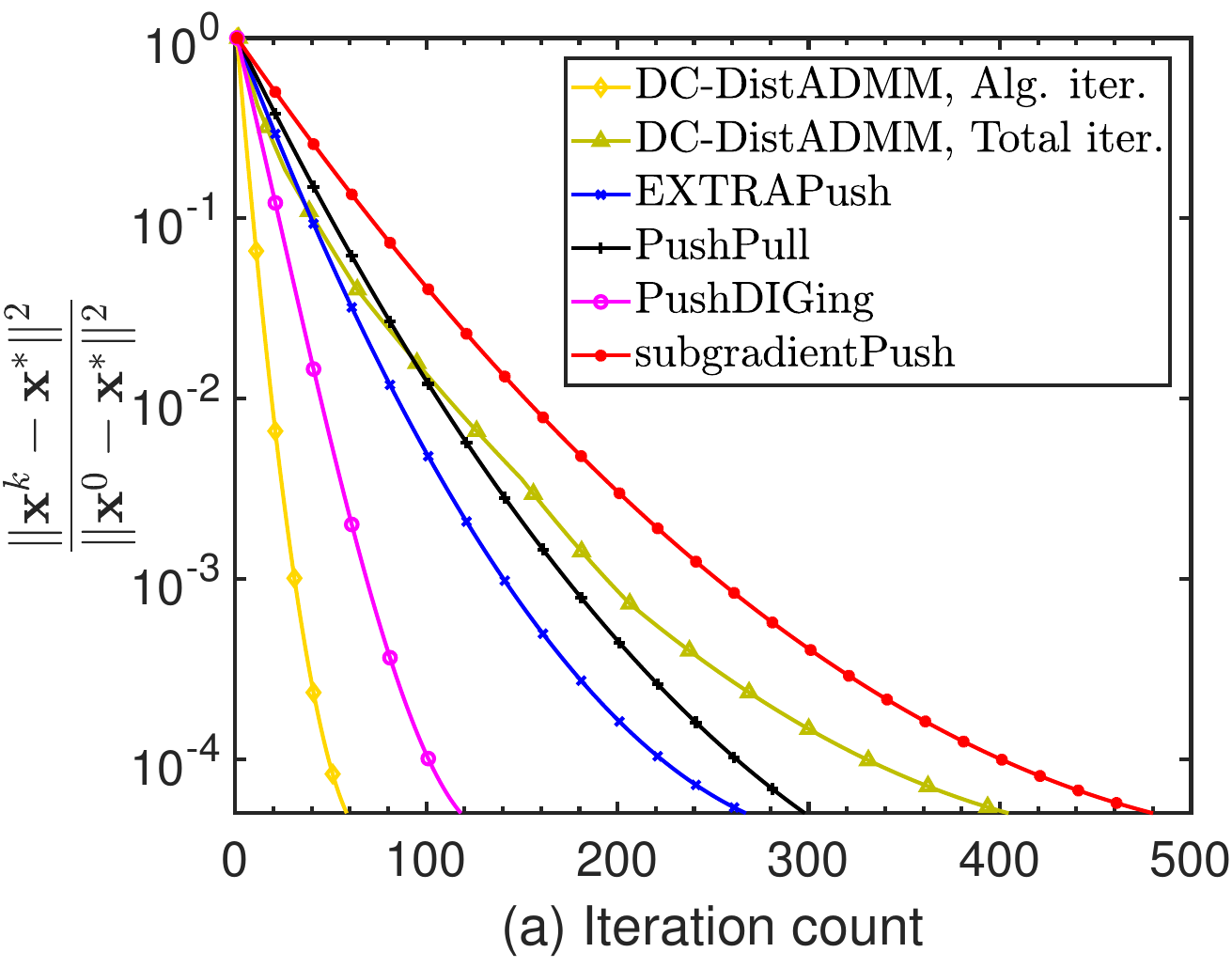}} \hspace{-0.05in}
  \subfloat{%
    \includegraphics[scale=0.201,trim={0.25cm 0cm 0.2cm 0.5cm},clip] {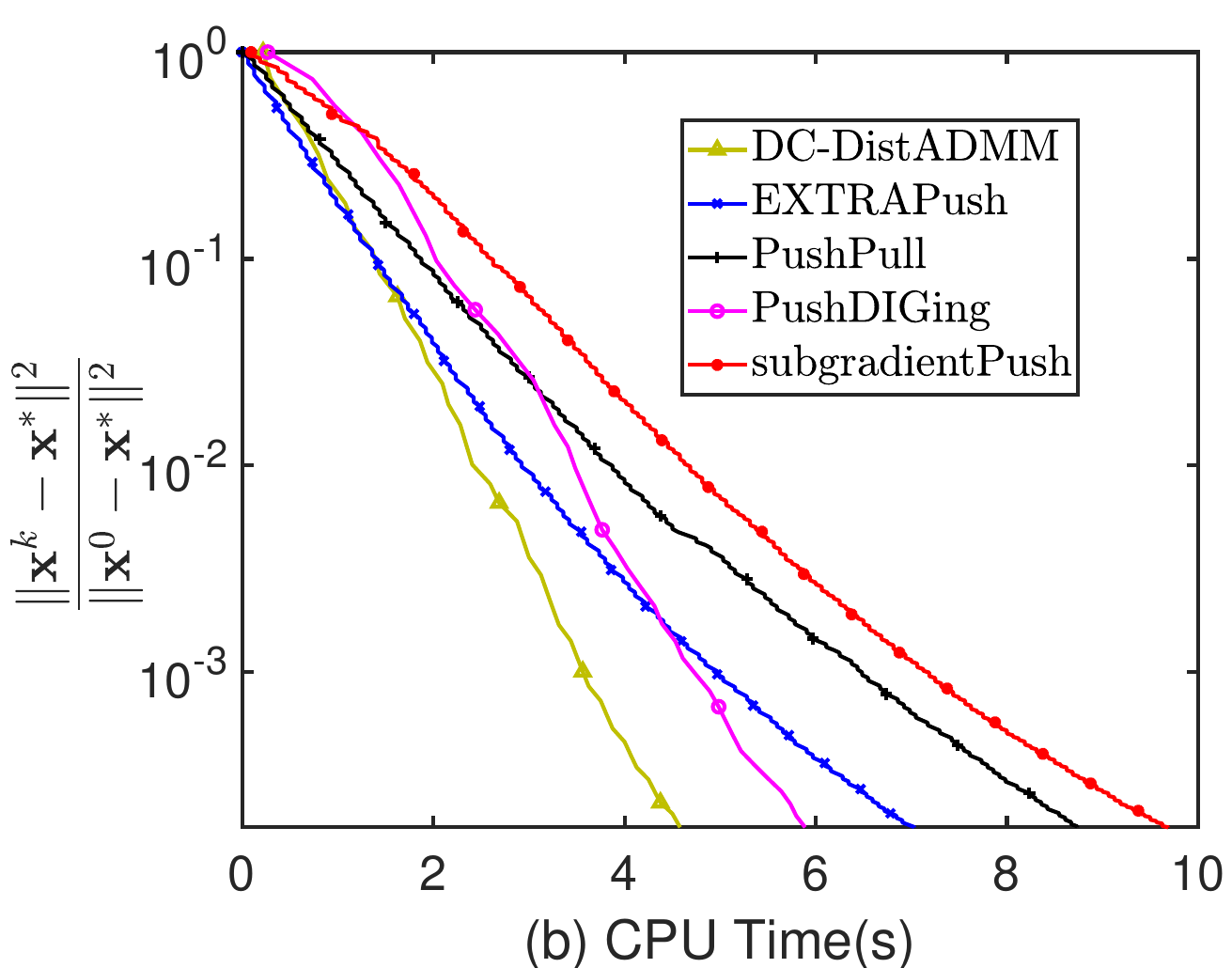}}
  \caption{(a) Comparison of solution residuals against total iterations between $\D$ and unconstrained optimization algorithms (b) Comparison of solution residuals against CPU time between $\D$ and unconstrained optimization algorithms. $\D$ algorithm performs the best in terms of the CPU time while has acceptable performance with respect to total iterations. $\D$ has the best decrease in solution residual with respect to algorithm iterations.}
  \label{fig:huber_sol} 
\end{figure}
The $\varepsilon$-consensus protocol utilized at each iteration of the $\D$ algorithm incurs additional consensus related steps, however, the information mixing steps~(\ref{eq:cons_u})-(\ref{eq:cons_v}) have a low computational footprint and do not significantly increase the overall run time of the $\D$ algorithm. This is illustrated by the comparison between the algorithms based on the required CPU time shown in Fig.~\ref{fig:huber_sol}(b). The residual plots illustrate that $\D$ algorithm performs better in terms of the CPU time requirement compared to the other methods to reach the same level of residual value.} 

\subsection{Comparison with algorithms utilizing multiple communication steps}
Here, a comparison with two unconstrained distributed optimization algorithms that utilize multiple communication steps, nearDGD \cite{berahas2018balancing} and the algorithm in \cite{jakovetic2014fast} (referred here as Fast-DGD (FDGD)) is provided; the reader is directed to \cite{berahas2018balancing, jakovetic2014fast} for the motivations for multiple consensus steps. \textcolor{black}{We consider an unconstrained distributed least squares problem,
\begin{align}\label{eq:least_sq}
    \minimize\limits \textstyle \ \ \frac{1}{2} \sum_{i=1}^{n}\|D_i x - d_i\|^2.
\end{align}
Here, each agent $i \in \mathcal{V}$ has a measured data vector $d_i \in \mathbb{R}^{100}$ and a scaling matrix  $H_i \in \mathbb{R}^{100 \times 25}$. The objective is to estimate the unknown signal $x \in \mathbb{R}^{25}$. The entries of the matrices $D_i, i \in \mathcal{V}$ and the observed data $d_i$ are sampled independently from a standard normal distribution $\mathcal{N}(0,1)$. The true signal $x^*$ also have entries that are sampled from an i.i.d standard normal distribution. In this case, the sub-problem~(\ref{eq:dadmm_x}) takes a closed form solution which can be readily computed by the first order optimality condition of the unconstrained problem. The solution residuals obtained while solving~(\ref{eq:least_sq}) using the three algorithms are presented with respect to the total iterations and CPU time in Fig.~\ref{fig:multi_sol}.
\begin{figure}[h] 
    \centering
  \subfloat{
  \hspace{-0.1in}
       \includegraphics[scale=0.20,trim={0.1cm 0cm 0.2cm 0.8cm},clip] {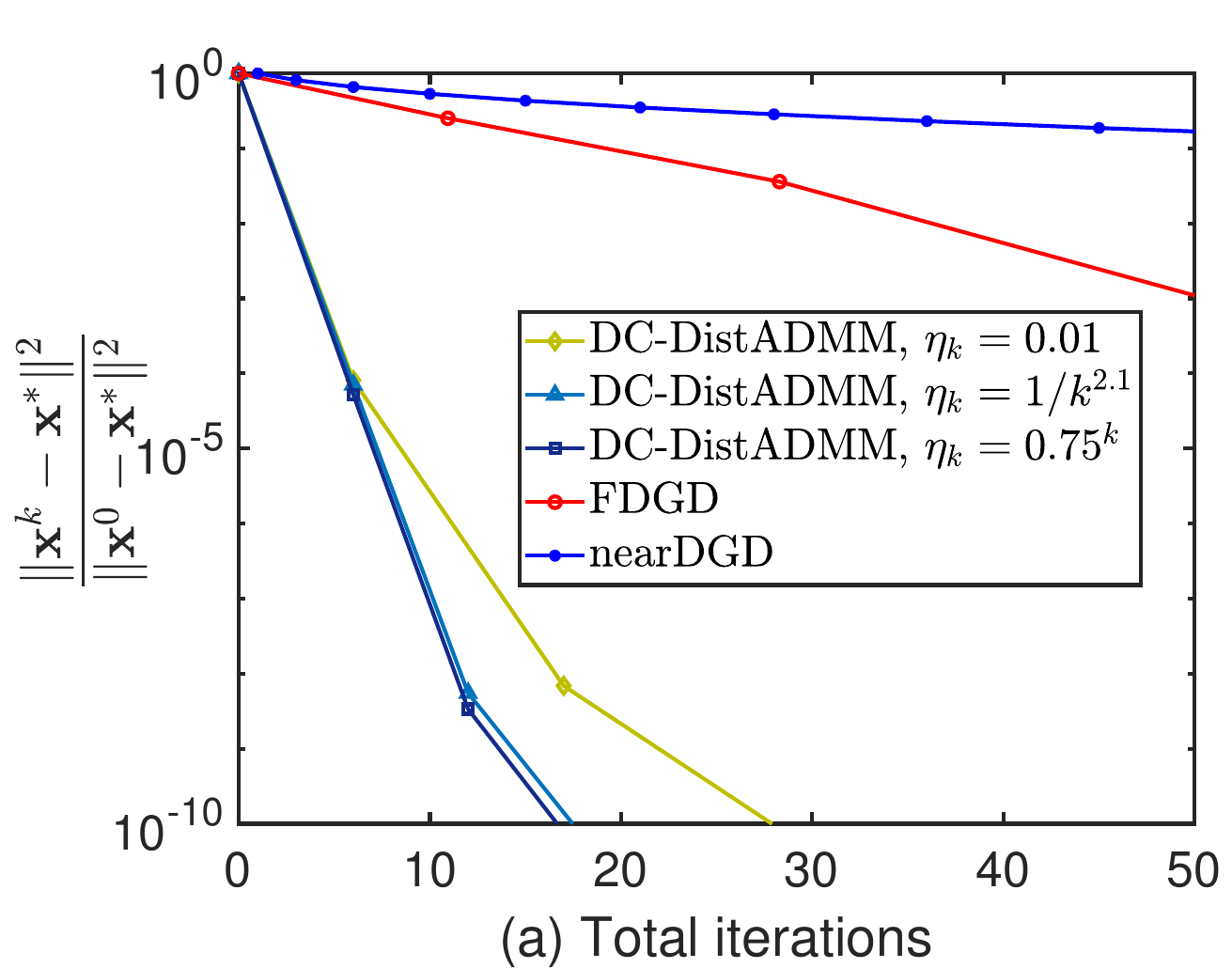}} 
  \subfloat{
    \includegraphics[scale=0.20,trim={0.3cm 0cm 0.2cm 0.8cm},clip] {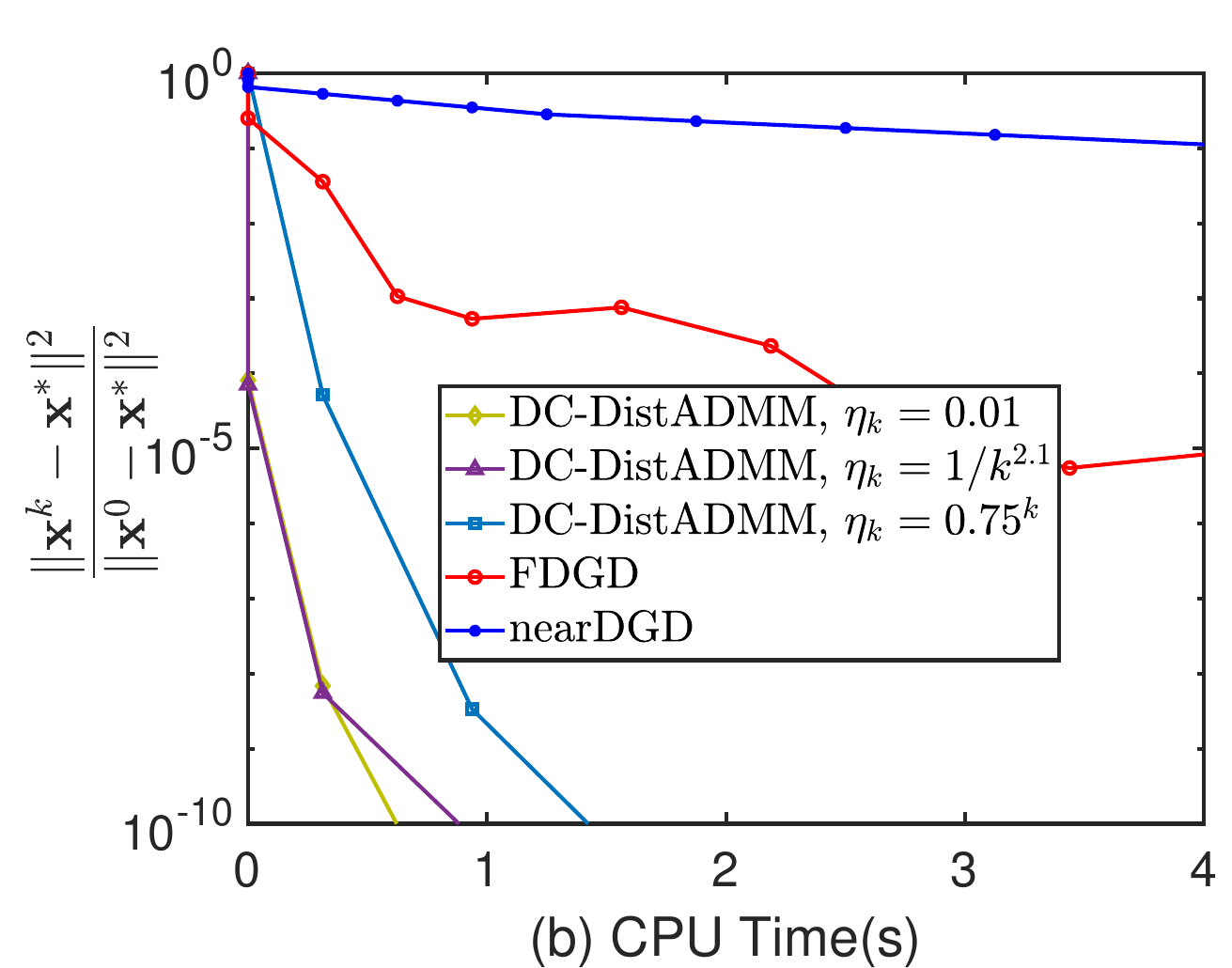}}
  \caption{\textcolor{black}{(a) Comparison of solution residuals against total iterations between $\D$ and algorithms utilizing multiple communication steps (b) Comparison of solution residuals against CPU time between $\D$ and algorithms utilizing multiple communication steps. $\eta_k = 0.75^k$ has the best decrease in solution residual with respect to total iterations. $\D$ with a constant tolerance performs the best with respect to CPU time.} }
  \label{fig:multi_sol} 
\end{figure}
It can be seen that the $\D$ algorithm has the fastest decrease in the solution residual. Fig.~\ref{fig:comm_iterations} presents the number of communication iterations utilized by the $\D$ algorithm while solving~(\ref{eq:least_sq}) for the three error sequences and the FDGD and nearDGD algorithms. The trend suggested by Lemma~\ref{lem:cons_comm} can be seen in the plots in Fig.~\ref{fig:comm_iterations}, where the number of communication iterations for the $\D$ algorithm increase as we tighten the accuracy. Note, the sequence $\eta_k = 1/k^{2.1}$ with convergence rate guarantees provided by Lemma~\ref{thm:order1/k} results in logarithmic increase in the  number of communication iterations and is a suitable choice for getting good performance. The $\D$ algorithm with tolerance $\eta_k = 0.01$ and $\eta_k = 1/k^{2.1}$ has lesser communication cost compared to the FDGD and nearDGD algorithms while the FDGD has better communication complexity than the $\D$ algorithm with the tolerance $\eta_k = 0.75^k$ as the number of iterations increase.}
\begin{figure}[h]
    \centering
    \includegraphics[scale=0.25,trim={0cm 0cm 0cm 0.8cm},clip] {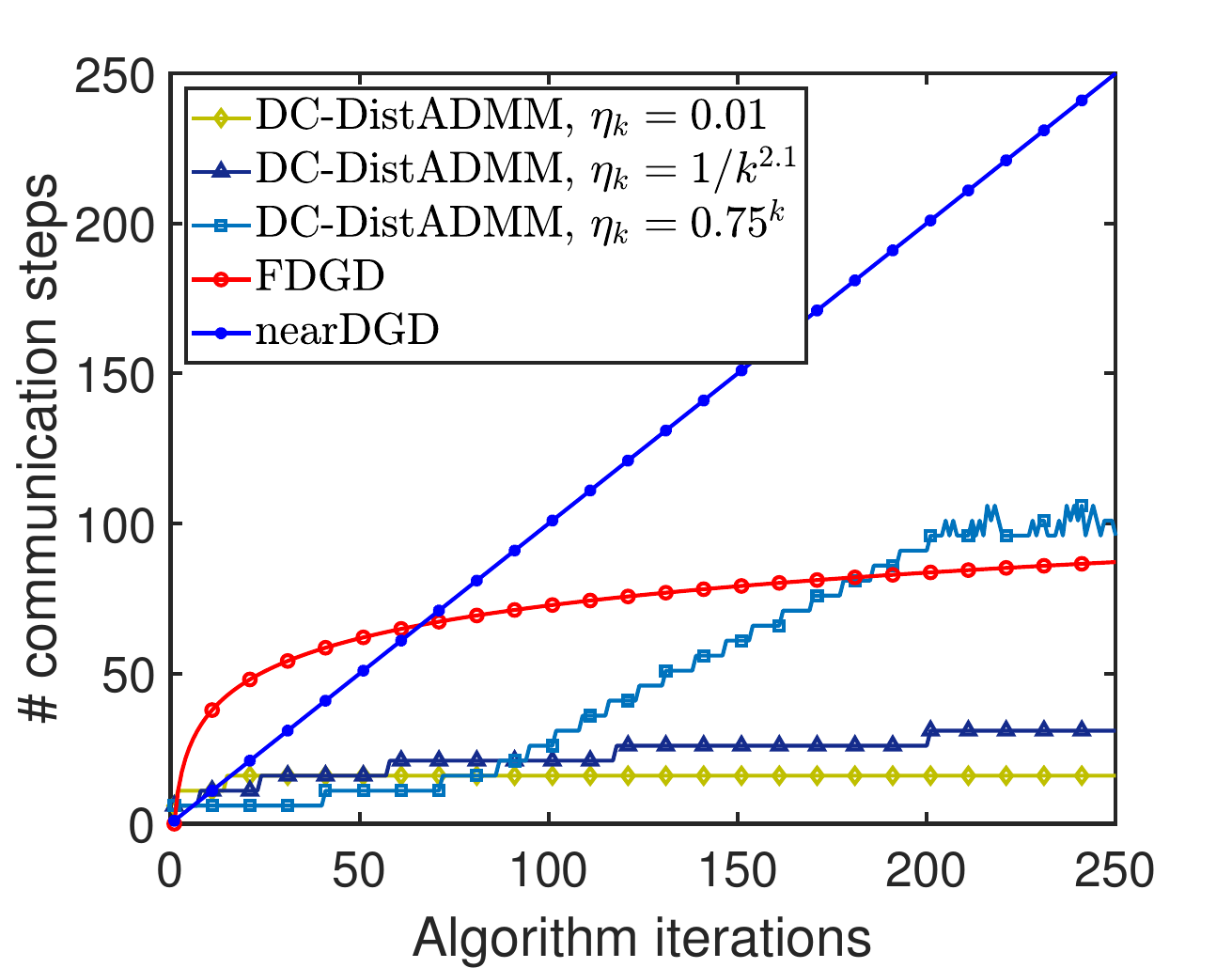}
    \caption{\textcolor{black}{Comparison of the number of communication iterations performed by $\D$ algorithm for the three choices of $\eta_k$ and the FDGD and nearDGD algorithms. $\D$ with $\eta_k = 0.01$ and $\eta_k = 1/k^{2.1}$ has lesser communication cost compared to the FDGD and nearDGD algorithms while the FDGD has better communication complexity than $\D$ with $\eta_k = 0.75^k$ as the number of iterations increase.} }
    \label{fig:comm_iterations}
\end{figure}
Summarizing, the simulation results indicate that with respect to the state-of-the-art \textit{constrained} optimization problems $\D$ provides better performance while encompassing a large class of distributed optimization scenarios with linear equality, inequality and set constraints. Even with respect to the state-of-the-art \textit{unconstrained} optimization frameworks, $\D$ algorithm provides better performance with respect to computations (CPU time) and remains acceptable with respect to total communication steps.
\end{section}

\begin{section}{Conclusion and Future Work}\label{sec:conclusion}
In this article, a novel Directed-Distributed Alternating Direction Method of Multipliers ($\D$) algorithm is presented to solve constrained multi-agent optimization problems with local linear equality, inequality and set constraints over general directed graphs. Moreover, the algorithm is suited for distributed synthesis. The proposed algorithm, to the best of authors' knowledge, is the \textcolor{red}{first} ADMM based algorithm to solve (un)constrained distributed optimization problems over directed graphs. The $\D$ algorithm combines techniques used in \textcolor{black}{Lagrangian} dual based optimization methods along with the ideas in the average consensus literature. In the $\D$ algorithm, each agent solves a local constrained convex optimization problem and utilizes a finite-time $\varepsilon$-consensus algorithm to update its estimate of the optimal solution. The proposed $\D$ algorithm enjoys provable rate of convergence guarantees: (i) a $O(1/k)$ rate of convergence when the individual functions are convex but not-necessarily differentiable, (ii) a \textcolor{black}{geometric decrease to arbitrary small neighborhood of the optimal solution when the objective functions are smooth and restricted strongly convex at the optimal solution}. The proposed $\D$ algorithm eliminates the optimization step involving a primal variable in the standard ADMM setup and replaces it with a less computation intensive $\varepsilon$-consensus protocol which makes it more suitable for distributed multi-agent systems. To show the efficacy of the $\D$ algorithm numerical simulation results \textcolor{black}{comparing the performance of the $\D$ algorithm with the existing state-of-the-art algorithms in the literature of solving constrained and unconstrained distributed optimization problems are presented. A comparison of the $\D$ algorithm and two existing algorithms in the literature utilizing multiple consensus steps is also provided.} Extension of the $\D$ algorithmic framework to networks with time-delays in communication  \cite{prakash2019distributed} between the agents and time-varying connectivity among the agents \cite{saraswat2019distributed} is a future work of this article.
\end{section}

\appendix
\vspace{-0.4in}
\textcolor{black}{\subsection{Distributed Synthesis}\label{sec:distri_synth}
\begin{figure}[b]
    \centering
    \includegraphics[scale=0.25,trim={6cm 4.5cm 6cm 4.2cm},clip] {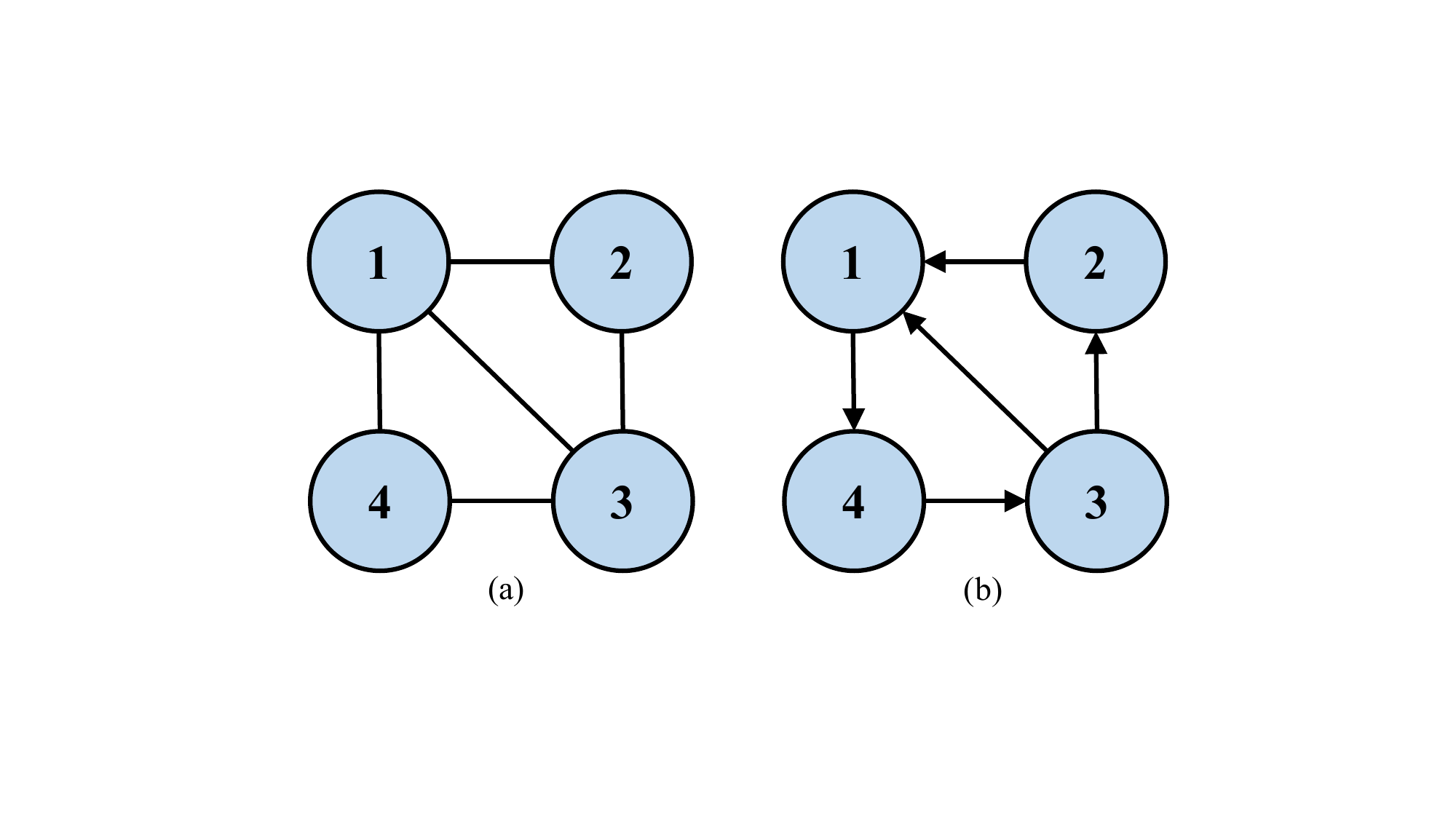}
    \caption{Example directed network}
    \label{fig:graph}
\end{figure}
\noindent To explain the concept of distributed synthesis a simple example in the context of distributed consensus/optimization is presented here. Consider the undirected and directed graphs with four nodes (agents) as shown in Fig.~\ref{fig:graph}(a) and (b). Assume that each agent $i = 1,\dots,4$ updates its estimate $\psi_i$ by taking a weighted combination of its own prior estimate and the information $\psi_j$ received from its neighbors $j$
\begin{align}\label{eq:dist_synthesis}
    \mbox{ i.e.,} \ \ \ \psi_i^+ = p_{ii} \psi_i + \textstyle \sum_{j: \text{neighbors}} p_{ij} \psi_j.
\end{align}
The update protocol~(\ref{eq:dist_synthesis}) is called \textit{ amenable to distributed synthesis} if the weights $p_{ij}$ can be decided by the agents locally and independently without any coordination among them. Let $\mathcal{P} = [p_{ij}], i,j \in \{1,\dots, 4\}$ be the weight matrix with entries $p_{ij}, i,j \in \{1,\dots, 4\} \neq 0$ if and only if there is an  undirected (or a directed) edge between $i$ and $j$ in Fig.~\ref{fig:graph}(a) (or (b)). Hence, due to the structure of the weight matrix $\mathcal{P}$ whether an update scheme of the form~(\ref{eq:dist_synthesis}) is \textit{amenable to distributed synthesis} or not can be inferred by the constraints imposed on the entries of $\mathcal{P}$. In particular, if $\sum_{i=1}^4 p_{ij} = 1 = \sum_{j=1}^4 p_{ij}$, i.e. the matrix is double stochastic then one agent cannot decide the weight it assigns to the information received from its neighbor independently (by only using its \textit{local} neighborhood information). The constraint of sum of the entries of all rows and columns being equal to one necessitates the need for coordination among all the agents (knowledge of \textit{global} network information). It is established in the literature that generating a double stochastic $\mathcal{P}$ matrix is computationally intractable on general directed graphs without any \textit{global} knowledge and coordination \cite{gharesifard2012distributed}. This renders the algorithms using a double stochastic update matrix on directed graphs unsuitable for distributed synthesis. However, a matrix $\mathcal{P}$ with only column (or row) stochasticity constraint can be generated easily with \textit{only} the local information available at each agent. The out-degree based equal neighbor weights rule  \cite{olshevsky2009convergence} is a common choice of weights that results in a distributed synthesis based column stochastic $\mathcal{P}$ matrix. Note that distributed synthesis is a property of an algorithm (update~(\ref{eq:dist_synthesis}) in our example), and has no direct relation to the underlying communication topology of the agent graph.} 

\subsection{Claim: $[\overline{y}^{k+1^\top} \overline{y}^{k+1^\top} \dots \overline{y}^{k+1^\top}]^\top \in \mathbb{R}^{np},  \mbox{where}, \overline{y}^{k+1} = \textstyle \frac{1}{n} \sum_{i=1}^n [x_i^{k+1} + \frac{1}{\gamma} \lambda_i^{k}]$ is the solution to~(\ref{eq:dadmm_y}) }\label{sec:y_exsol}
\begin{proof}
Consider, the update~(\ref{eq:dadmm_y})
\begin{align*}
\overline{\y}^{k+1}  = \textstyle & \argmin\limits_{\y} \textstyle \Big\{\mathcal{I}_{\mathcal{C}_0}(\y) + \frac{\gamma}{2}\|\x^{k+1} - \y + \frac{1}{\gamma} \lambda^{k}\|^2\Big\}.
\end{align*}
Using definition of the set $\mathcal{C}_0$ update~(\ref{eq:dadmm_y}) can be written as:
\begin{align*}
\overline{\y}^{k+1}  = \textstyle \argmin\limits_{\y} \ &\  \textstyle \frac{\gamma}{2}\|\x^{k+1} - \y + \frac{1}{\gamma} \lambda^{k}\|^2 \\
\mbox{subject to} & \ y_i = y_j, \forall i,j \in \V.
\end{align*}
Let $y_i = y_j = u, \forall i,j \in \V$. As the objective is in a block separable form we have the following equivalent problem,
\begin{align*}
    \textstyle \minimize\limits_{u} \frac{\gamma}{2} \sum_{i=1}^n \| x_i^{k+1} + \frac{1}{\gamma} \lambda_i^k - u\|^2.  
\end{align*}
As the above problem is an unconstrained convex optimization problem we can differentiate the objective and set it equal to zero to get the optimal solution. In particular, $\sum_{i=1}^n [x_i^{k+1} + \frac{1}{\gamma} \lambda_i^k - u] = 0$ which implies $u = \frac{1}{n} \sum_{i=1}^n [x_i^{k+1} + \frac{1}{\gamma} \lambda_i^k]$. Therefore, $y_i = y_j = u, \forall i,j \in \V$. Hence, $\overline{\y}^{k+1} = [u^\top \dots u^\top]^\top$ is the solution to~(\ref{eq:dadmm_y}). 
\end{proof}

\begin{subsection}{Proof of Lemma~\ref{lem:state_inside_ball}}\label{sec:proof_state_inside_ball}
\noindent Define, $d_{i,j}$ to be length of the shortest path connecting node $i$ to node $j$. To prove the claim we use an induction argument on $d_{i,j}$. 
Base case: $d_{i,j} = 1$,
Given, $s \geq 0$ for all $i \in \mathcal{V}$,
\begin{align*}
    R_i^{1}(s) = \max\limits_{j \in \mathcal{N}_i^-} \Big \{ \| w_i^{s + 1} - w_j^{s} \| + R_j^{0}(s) \Big \}.
\end{align*}
Therefore, for all $j \in \mathcal{N}_i^-$, $w_j^{s} \in \mathcal{B}( w_i^{s + 1}, R_i^{1}(s))$. Assume the claim holds for $d_{i,j} = k$, that is for all $j$ such that $d_{i,j} \leq k$, $w_j^{s} \in \mathcal{B}( w_i^{s + k}, R_i^{k}(s))$. For any node $\ell \in \mathcal{N}_i^-$, on the path between node $i$ and $j$ with $d_{i,j} \leq k+1$, we have $d_{\ell,j} \leq k$. By the induction hypothesis, $w_j^{s} \in \mathcal{B}( w_\ell^{s + k}, R_\ell^{k}(s))$. It implies,
$\| w_\ell^{{s} + k} - w_j^{s} \| \leq R_\ell^{k}(s)$. From~(\ref{eq:radius}) we have, 
\begin{align*}
    R_i^{k + 1}(s) \geq \| w_i^{s + k + 1} - w_\ell^{s + k } \| + R_\ell^{k}(s).
\end{align*}
Using, the triangle inequality,
\begin{align*}
    \| w_i^{s + k + 1} - w_j^{s} \| & \leq \| w_i^{s + k + 1} - w_\ell^{s + k } \| + \| w_\ell^{s + k} - w_j^{s} \| \\
    & \leq \| w_i^{s + k + 1} - w_\ell^{s + k } \| + R_\ell^{k}(s)\\
    & \leq R_i^{k + 1}(s).
\end{align*}
Therefore, for any $j$ on the path with $d_{i,j} \leq k+1$, $w_j^{s} \in \mathcal{B}( w_i^{s + k+ 1}, R_i^{k+ 1}(s))$. Therefore, induction holds. When $k = \mathcal{D}$, we get the desired result. 
\end{subsection}

\begin{subsection}{Proof of Theorem~\ref{thm:eCons}}\label{sec:proof_eCons}
\noindent Define the element-wise maximum $\overline{\mathbf{w}}^k$ and minimum $\underline{\mathbf{w}}^k$ state variable of the network over all the agents as
\begin{align*}
 \overline{\mathbf{w}}^k & := \Big[ \underset{ 1 \leq j \leq n }{\max} \big\{w_{1[j]}^k \big\} \dots \ \underset{ 1 \leq j \leq n }{\max} \big\{w_{n[j]}^k \big\} \Big]^\top, 
 \\
 \underline{\mathbf{w}}^k & := \Big[ \underset{ 1 \leq j \leq n }{\min} \big\{w_{1[j]}^k \big\} \dots \ \underset{ 1 \leq j \leq n }{\min} \big\{w_{n[j]}^k \big\} \Big]^\top, 
\end{align*}
where, $w_{i[j]}^k$ is the $j^{th}$ entry of $w_i^k$. It is proven in \cite{yadav2007distributed}, \cite{melbourne2020geometry} that $\lim_{m \rightarrow \infty} \overline{\mathbf{w}}^{m \mathcal{D}} = \lim_{m \rightarrow \infty} \underline{\mathbf{w}}^{m \mathcal{D}} = \widetilde{u} = \frac{1}{n}\sum_{i=1}^n u_i^0$. It implies $ \lim_{m \rightarrow \infty}  \|\overline{\mathbf{w}}^{m \mathcal{D}} - \underline{\mathbf{w}}^{m \mathcal{D}} \| = 0$.
Proceeding in the same manner as in \cite{melbourne2020geometry}, Theorem 4.2, we get $\widehat{R}_i^m \leq \mathcal{D} \|\overline{\mathbf{w}}^{m\mathcal{D}} - \underline{\mathbf{w}}^{m \mathcal{D}} \|$. 
Taking $\lim_{m \rightarrow \infty} $ both sides we conclude for all $i \in \mathcal{V}$, $\lim_{m \rightarrow \infty} \widehat{R}_i^{m} = 0$. This gives the first result. 
\noindent From Lemma~\ref{lem:state_inside_ball}, $\max_{i,j \in \mathcal{V}} \| w^{m \mathcal{D}}_i - w_j^{m \mathcal{D}}\| \leq 2\widehat{R}_i^m$. Letting $ m \rightarrow \infty$, gives the second claim.
\end{subsection}

\begin{subsection}{Claim: $\y^{k+1}$ is an inexact solution to~(\ref{eq:dadmm_y})}\label{sec:y_sol}
\begin{proof}
As shown in Appendix~\ref{sec:y_exsol}, the solution to~(\ref{eq:dadmm_y}) is $\overline{\y}^{k+1} = [u^\top \dots u^\top]^\top$, where, $u := \frac{1}{n} \sum_{i=1}^n [x_i^{k+1} + \frac{1}{\gamma} \lambda_i^k]$. 
Recall from~(\ref{eq:cons_sol}) that the $\varepsilon$-consensus algorithm (updates~(\ref{eq:cons_u})-(\ref{eq:cons_w})) at iteration $k$ of the $\D$ algorithm, with the initial condition $u_i^0 = x_i^{k+1} + \frac{1}{\gamma} \lambda_i^{k}, \ v_i^0 = 1, \forall i \in \mathcal{V}$ and $\varepsilon = \eta_{k+1}$ yields a solution of the form $[u^\top u^\top \dots u^\top]^\top + e^{k+1}$, where $\|e^{k+1}\| \leq \sqrt{n} \eta_{k+1}$. Therefore, the consensus step yields an inexact solution of~(\ref{eq:dadmm_y}) of the form $ \y^{k+1}:= [(u+e^{k+1}_1)^\top (u + e^{k+1}_2)^\top \dots (u + e^{k+1}_n)^\top]^\top$, where $\|e^{k+1}\| \leq \sqrt{n} \eta_{k+1}$.
\end{proof}
\end{subsection}
\begin{subsection}{Proof of Lemma~\ref{lem:ylambdabound}}\label{sec:proof_ylambdabound}
\noindent Note that from the initialization in Algorithm~\ref{alg:distADMM} $\lambda_i = 0, \forall i \in \V$. From~(\ref{eq:y_over}), (\ref{eq:dadmm_lam}), (\ref{eq:dadmm_lam_ne}), and,~(\ref{eq:error}), for all $i \in \V$,
\begin{align*}
    \lambda_i^{k+1} &= \lambda_i^k + \gamma (x_i^{k+1} - \overline{y}^{k+1})  - \gamma e^{k+1}_i \\
    & = \textstyle \gamma \sum_{s=0}^{k} (x_i^{s+1} - \overline{y}^{s+1}) - \gamma \sum_{s=0}^{k} e_i^{s+1} \\
    & \hspace{-0.2in}= \textstyle \gamma \sum_{s=0}^{k} \big[ x_i^{s+1} - \frac{1}{n} \sum_{j=1}^n (x_j^{s+1} + \frac{1}{\gamma} \lambda_j^s) \big] - \gamma \sum_{s=0}^{k} e_i^{s+1}\\
    & \hspace{-0.25in} = - \textstyle \frac{1}{n}  
    \sum_{s=0}^{k} \sum_{j=1}^n \lambda_j^s + \gamma \sum_{s=0}^{k} \big[ x_i^{s+1} - \frac{1}{n} \sum_{j=1}^n x_j^{s+1}\big] \\
    & \hspace{-0.15in} \textstyle - \gamma \sum_{s=0}^{k} e_i^{s+1}.
\end{align*}
Let $l_i^{k+1}:= \gamma \sum_{s=0}^{k} \big[ x_i^{s+1} - \frac{1}{n} \sum_{j=1}^n x_j^{s+1}\big] - \gamma \sum_{s=0}^{k} e_i^{s+1}$. Therefore,
\begin{align*}
    \lambda_i^{k+1} = - \textstyle \frac{1}{n}  
    \sum_{s=0}^{k} \sum_{j=1}^n \lambda_j^s + l_i^{k+1}
\end{align*}
Similarly,
\begin{align*}
    \lambda_i^{k} = - \textstyle \frac{1}{n}  
    \sum_{s=0}^{k-1} \sum_{j=1}^n \lambda_j^s + l_i^{k}
\end{align*}
Taking the difference of the above two equations we get,
\begin{align*}
    \lambda_i^{k+1} = \lambda_i^k - \textstyle \frac{1}{n}  
    \sum_{j=1}^n \lambda_j^k + \gamma \big[x_i^{k+1} - \frac{1}{n} \sum_{j=1}^n x_j^{k+1}\big] - \gamma e_i^{k+1}.
\end{align*}
Let $r_i^{k+1} = \gamma \big[x_i^{k+1} - \frac{1}{n} \sum_{j=1}^n x_j^{k+1}\big] - \gamma e_i^{k+1}$. Taking $z$-transform both sides of the above equation we get,
\begin{align*}
    z\hat{\lambda}_i = \hat{\lambda}_i - \textstyle \frac{1}{n}  
    \sum_{j=1}^n \hat{\lambda}_j + \hat{r}_i,
\end{align*}
where, $\hat{(.)}$ denotes the $z$-transform of the corresponding time-domain signal. Let $\hat{\lambda} := [\hat{\lambda}_1^\top \  \hat{\lambda}_2^\top \dots \hat{\lambda}_n^\top]^\top$ and $\hat{r} := [\hat{r}_1 ^\top \  \hat{r}_2^\top \dots \hat{r}_n^\top]^\top$. Writing, the above $n$ equations compactly we get,
\begin{align*}
    \textstyle \underbrace{\begin{bmatrix}
    (z - 1 + \frac{1}{n})\mathbf{I}_{p} & \frac{1}{n}\mathbf{I}_{p} & \dots & \frac{1}{n}\mathbf{I}_{p} \\
     \frac{1}{n}\mathbf{I}_{p}& (z - 1 + \frac{1}{n})\mathbf{I}_{p} & \dots & \frac{1}{n}\mathbf{I}_{p} \\
     \vdots & \vdots  & \ddots & \vdots\\
    \frac{1}{n}\mathbf{I}_{p} & \frac{1}{n}\mathbf{I}_{p} & \dots & (z - 1 + \frac{1}{n})\mathbf{I}_{p}
    \end{bmatrix}}_{T(z)} \hat{\lambda} = \hat{r},
\end{align*}
where $\mathbf{I}_p$ is the $p$-dimensional identity matrix. Note, that the matrix $T(z)$ has eigenvalues $z$ with algebraic multiplicity $p$ and $(z-1)$ with algebraic multiplicity $(n-1)p$. Therefore, there exists a (constant) matrix $Q$ such that 
\begin{align*}
    \hat{\lambda} = T^{-1} \hat{r} = Q^{-1} \begin{bmatrix}
    \frac{1}{z}\mathbf{I}_{p} & \mathbf{0}_{p \times p} & \dots & \mathbf{0}_{p \times p} \\
     \mathbf{0}_{p \times p} & \frac{1}{z-1}\mathbf{I}_{p} & \dots & \mathbf{0}_{p \times p} \\
     \vdots & \vdots  & \ddots & \vdots\\
     \mathbf{0}_{p \times p} & \frac{1}{n}\mathbf{I}_{p} & \dots & \frac{1}{z-1}\mathbf{I}_{p}
    \end{bmatrix} Q \hat{r}.
\end{align*}
Let $ Q \hat{\lambda} =: \hat{q} = [\hat{q}_1^\top \hat{q}_2^\top \dots \hat{q}_n^\top]^\top$, and $ Q\hat{r} =: \hat{t} = [\hat{r}_1^\top \hat{r}_2^\top \dots \hat{r}_n^\top]^\top$. Therefore,
\begin{align*}
    \hat{q}_1 = \textstyle \frac{1}{z} \hat{t}_1, \ \hat{q}_i = \textstyle \frac{1}{z-1} \hat{t}_i, \ \mbox{for} \ i = 2,3,\dots,n.
\end{align*}
Taking inverse $z$-transform on both sides of the above equations we get,
\begin{align*}
    q_1^{k+1} = t_1^{k}, \ q_i^{k+1} = q_i^{k} + t_i^{k}, \ \mbox{for} \ i = 2,3,\dots,n.
\end{align*}
Therefore,
\begin{align*}
    q_1^{k+1} = t_1^{k}, \ q_i^{k+1} = q_i^0 + \textstyle \sum_{s=0}^{k}t_i^{s}, \ \mbox{for} \ i = 2,3,\dots,n.
\end{align*}
Let $Q_i \in \mathbb{R}^{np}$ denote the $i^{th}$ row of $Q$. Note, that $\|t_i^{s}\| = \|Q_i r^{s} \|= \|Q_i \gamma [[ x_1^{s+1} - \frac{1}{n} \sum_{j=1}^n x_j^{s+1} - e_1^{s+1} ]^\top \dots [ x_1^{s+1} - \frac{1}{n} \sum_{j=1}^n x_j^{s+1} - e_1^{s+1} ]^\top ]^\top  ) \| \leq \gamma \sqrt{n} \|Q_i\| (\mathcal{M} + \eta_{s+1}) \leq \gamma \sqrt{n} \|Q_i\| (\mathcal{M} + \eta_{0}), \forall i \in \V$. Let, $\|Q\|_{\max} := \max_{1 \leq i \leq n} \|Q_i\|$.\\
Therefore, $\| \lambda^k\| = \|Q^{-1} q^k\| \leq k n \gamma \|Q^{-1}\|  \|Q\|_{\max} (\mathcal{M} + \eta_{0})$. Note, that here 
\begin{align}\label{eq:q}
    \mathbf{Q}:= n \gamma \|Q^{-1}\| \|Q\|_{\max} (\mathcal{M} + \eta_{0}) < \infty,
\end{align}
associated with, $\|\lambda^k\| \leq k\mathbf{Q}$.
\end{subsection}
\subsection{Proof of Lemma~\ref{lem:cons_comm}} \label{sec:proof_cons_comm}
\noindent To begin, we will present a modification of an existing result (Lemma 1 \cite{nedic2014distributed}). The result in Lemma 1 \cite{nedic2014distributed}, with the perturbation term being zero, reduces to a convergence result for the push-sum protocol. Using this property, we conclude that the updates~(\ref{eq:cons_u})-(\ref{eq:cons_w}) converges at a geometric rate to the average of the initial values. Note that the $\D$ algorithm at every iteration $k$ utilizes $\varepsilon$-consensus protocol (updates~(\ref{eq:cons_u})-(\ref{eq:cons_w})) with the initial values $x_i^{k+1} + \frac{1}{\gamma} \lambda_i^k$. Therefore, we conclude,
\begin{align}\label{eq:nedic_result}
    \hspace{-0.098in} \|y_i^{k+1} - \overline{y}_i^{k+1}\| \leq \textstyle \frac{8 \sqrt{n} \alpha^{t_k}}{\beta} \|\mathbf{z}^k\|, \ \forall i \in \V,
\end{align} 
where, $\mathbf{z}^k := \x^{k+1} + \frac{1}{\gamma} \lambda^k$ and, $\beta >0, \alpha \in (0,1)$ satisfy: $
\beta \geq \frac{1}{n^n}, \alpha \leq \textstyle \left(1 - \frac{1}{n^n} \right)$.
Here, the variables $\alpha$ and $\beta$ are parameters of the graph $\mathcal{G}(\mathcal{V},\mathcal{E})$. The parameter $\alpha$ measures the speed at which the graph $\mathcal{G}(\mathcal{V},\mathcal{E})$ diffuses the information among the agents over time.
Further, the parameter, $\beta$ measures the imbalance of influences in $\mathcal{G}(\mathcal{V},\mathcal{E})$ \cite{nedic2014distributed}. 
Under Assumption~\ref{ass:constraint_ass} using the result in Lemma~\ref{lem:ylambdabound}, $\|\mathbf{z}^k \| \leq \sqrt{n} (\mathcal{M} + \|x^0\|) +  k \frac{\mathbf{Q}}{\gamma}$. If $\frac{\beta \eta_{k+1}}{8 \sqrt{n} \alpha^{t_k} } = \|\mathbf{z}^k\| \leq \sqrt{n} (\mathcal{M} + \|x^0\|) +  k \frac{\mathbf{Q}}{\gamma}$  it implies that, $\frac{\eta_{k+1} \beta}{8((\mathcal{M} + \|x^0\|) n + k \frac{\sqrt{n} \mathbf{Q}}{\gamma})} \leq \alpha^{t_k}$. Therefore, we have,
\begin{align*}
    t_k &\leq \textstyle \frac{-1}{\log \alpha} \Big[\log\big(\frac{1}{\eta_{k+1}}\big) + \log\big(\frac{8((\mathcal{M} + \|x^0\|) n + k \frac{\sqrt{n} \mathbf{Q}}{\gamma})}{\beta} \big)\Big]:= \overline{t_k}.
\end{align*}
Therefore, at the $k^{th}$ iteration of Algorithm~\ref{alg:distADMM}, after $\overline{t_k}$ number of iterations of the consensus protocol we have, $\|y_i^{k+1} - \overline{y}_i^{k+1}\| \leq \eta_{k+1}$, for all $i \in \mathcal{V}$. \qed

\subsection{Proof of Theorem~\ref{thm:convergence}}\label{sec:proof_convergence}
\noindent From~(\ref{eq:xsolnoptcond}),~(\ref{eq:iteroptcond2}),
\begin{align*}
& 0 \leq [\x^{k+1} - \x^*]^\top[d\F(\x^{k+1}) - d\F(\x^*)] \\
& = [\x^{k+1} - \x^*]^\top[-\lambda^{k+1} - \gamma(\y^{k+1} - \y^{k}) - \A^\top \mu^{k+1} \\
& \hspace{1in} + \lambda^* + \A^\top \mu^*] \\
& = [\x^{k+1} - \x^*]^\top[\lambda^* -\lambda^{k+1}] + [\x^{k+1} - \x^*]^\top \A^\top[ \mu^* - \mu^{k+1}] \\
& \hspace{0.2in} - \gamma  [\x^{k+1} - \x^*]^\top [\y^{k+1} - \y^{k}]
\end{align*}
Since, $\mathcal{I}_{\mathcal{C}_0}$ is convex,
\begin{align*}
   \mathcal{I}_{\mathcal{C}_0}(\y_1) & \geq \mathcal{I}_{\mathcal{C}_0}(\y_2) + d^\top (\y_2)(\y_1 - \y_2),  
\end{align*}
where, $d(\y_2)$ is any sub-gradient of $\mathcal{I}_{\mathcal{C}_0}$ at $\y_2$. Using~(\ref{eq:iteroptcond3}) and $\y_1 = \y^*$ and $\y_2 = \overline{\y}^{k+1}$, 
\begin{align}
   \overline{\lambda}^{k+1 ^\top}(\overline{\y}^{k+1} - \y^*) \geq 0. \label{eq:lam_y-ystar1}  
\end{align}
Similarly, for $\y_1 = \overline{\y}^{k+1}, \y_2 = \y^*$ and using~(\ref{eq:ysolnoptcond}),
\begin{align}
   -\lambda^{* ^\top}(\overline{\y}^{k+1} - \y^*) \geq 0. \label{eq:lam_y-ystar2}  
\end{align}
Therefore, using~(\ref{eq:lam_y-ystar1}) and ~(\ref{eq:lam_y-ystar2}) gives,
\begin{align*}
& 0 \leq [\x^{k+1} - \x^*]^\top[\lambda^* -\lambda^{k+1}] + [\x^{k+1} - \x^*]^\top \A^\top[ \mu^* - \mu^{k+1}]  \\
& \hspace{0.1in} - \gamma  [\x^{k+1} - \x^*]^\top [\y^{k+1} - \y^{k}] + [\overline{\lambda}^{k+1} - \lambda^{*}]^\top[\overline{\y}^{k+1} - \y^*] \\
& = [\x^{k+1} - \x^*]^\top[\lambda^* -\lambda^{k+1}] + [\x^{k+1} - \x^*]^\top \A^\top[ \mu^* - \mu^{k+1}]  \\
& \hspace{0.1in} - \gamma  [\x^{k+1} - \x^*]^\top [\y^{k+1} - \y^{k}] + [\lambda^{k+1} - \lambda^{*}]^\top[\y^{k+1} - \y^*] \\
& \hspace{0.1in} + \gamma e^{k+1 ^\top}[\overline{\y}^{k+1} - \y^*] - e^{k+1 ^\top}[\lambda^{k+1} - \lambda^{*}]  \\
& = [\lambda^* -\lambda^{k+1}]^\top[\x^{k+1} - \y^{k+1}] + [\x^{k+1} - \x^*]^\top \A^\top[ \mu^* - \mu^{k+1}]  \\
& \hspace{0.1in} + \gamma[\x^* - \x^{k+1}]^\top [\y^{k+1} - \y^{k}] + \gamma e^{k+1 ^\top}[\overline{\y}^{k+1} - \y^*] \\
& \hspace{0.1in} - e^{k+1 ^\top}[\lambda^{k+1} - \lambda^{*}]\\
& = \textstyle \frac{1}{\gamma}[\lambda^{k+1} - \lambda^*]^\top[\lambda^k - \lambda^{k+1}] + \frac{1}{\gamma}[ \mu^* - \mu^{k+1}]^\top [\mu^{k+1} - \mu^k]  \\
& \hspace{0.1in} + \gamma[\x^* - \x^{k+1}]^\top [\y^{k+1} - \y^{k}] + \gamma e^{k+1 ^\top}[\overline{\y}^{k+1} - \y^*]\\
& \hspace{0.1in} - e^{k+1 ^\top}[\lambda^{k+1} - \lambda^{*}],
\end{align*}
where, the first equality used~(\ref{eq:error}), and the last equality utilized~(\ref{eq:dadmm_lam}) and~(\ref{eq:dadmm_mu}). Using~(\ref{eq:loc}) for $\frac{1}{\gamma}[\lambda^{k+1} - \lambda^*]^\top[\lambda^k - \lambda^{k+1}]$, $\frac{1}{\gamma}[ \mu^* - \mu^{k+1}]^\top [\mu^{k+1} - \mu^k]$ and $\gamma[\x^* - \x^{k+1}]^\top [\x^{k+1} - \x^{k}]$ yields,
\begin{align}
& \textstyle \frac{\gamma}{2}  \|\y^{k+1} - \x^* \|^2 + \frac{1}{2\gamma} \|\lambda^{k+1} - \lambda^*\|^2 + \frac{1}{2\gamma} \|\mu^{k+1} - \mu^*\|^2 \nonumber \\
& \textstyle \leq \frac{\gamma}{2}  \|\y^{k} - \x^* \|^2 + \frac{1}{2\gamma} \|\lambda^k - \lambda^*\|^2 + \frac{1}{2\gamma} \|\mu^k - \mu^*\|^2 \label{eq:for_seq_convg} \\
& \nonumber \hspace{0.1in} \textstyle - \frac{\gamma}{2}\|\x^{k+1} - \y^{k}\|^2 - \frac{1}{2\gamma} \|\mu^{k+1} - \mu^k\|^2 \\
&\hspace{0.1in} + \gamma e^{k+1 ^\top}[\overline{\y}^{k+1} - \y^*] - e^{k+1 ^\top}[\lambda^{k+1} - \lambda^{*}]. \nonumber
\end{align}
Let $R_1:= \gamma e^{k+1 ^\top}[\overline{\y}^{k+1} - \y^*] = \gamma (\y^{k+1} - e^{k+1} - \y^*)^\top e^{k+1} = \gamma \y^{k+1^\top} e^{k+1} - \gamma \|e^{k+1}\|^2 - \gamma \y^{*\top} e^{k+1}$. Therefore, $\|R_1\| \leq \gamma \|\y^{k+1}\|\|e^{k+1}\| + \gamma \|\y^*\| \|e^{k+1}\|$. Using, Lemma~\ref{lem:ylambdabound}, $\|R_1\| \leq n \gamma (\mathcal{M} + \|x^*\|) \eta_{k+1} + 2 \sqrt{n} \mathbf{Q} (k+1)\eta_{k+1} + \gamma \sqrt{n} \|\y^*\| \eta_{k+1}$. 
and $- e^{k+1 ^\top}[\lambda^{k+1} - \lambda^{*}]$. Let $R_2:=  - e^{k+1 ^\top}[\lambda^{k+1} - \lambda^{*}] = -e^{k+1^\top} \lambda^{k+1} + e^{k+1^\top} \lambda^*$. Therefore, using, Lemma~\ref{lem:ylambdabound}, $\|R_2\| \leq \|e^{k+1}\| \|\lambda^{k+1}\| + \|e^{k+1}\|\|\lambda^*\| \leq \sqrt{n} \mathbf{Q} (k+1) \eta_{k+1} + \|\lambda^*\|\eta_{k+1}$.\\
Let $\mathbf{R} := n \gamma (\mathcal{M} + \|x^*\|) + \gamma \sqrt{n} \|\y^*\| + \|\lambda^*\|$. Therefore, using the upper bounds on $R_1$ and $R_2$, 
\begin{align}
a_{k+1} & \leq a_k + \mathbf{R} \eta_{k+1} + 3 \sqrt{n} \mathbf{Q} (k+1) \eta_{k+1} \label{eq:uk_for_convg1} \\ 
     &\leq a_0 + \textstyle \mathbf{R} \sum_{s=0}^\infty \eta_s + 3 \sqrt{n} \mathbf{Q} \sum_{s=0}^\infty s \eta_{s}, \label{eq:uk_for_convg2}
\end{align}
where, $a_k:= \frac{\gamma}{2}  \|\y^{k} - \x^* \|^2 + \frac{1}{2\gamma} \|\lambda^{k} - \lambda^*\|^2 + \frac{1}{2\gamma} \|\mu^{k} - \mu^*\|^2$. The condition~(\ref{eq:summable}) implies that $a_k$ is bounded. Therefore, by the boundedness of $a_k$~(\ref{eq:uk_for_convg2}) there exists sub-sequence $a_{k_t}$ such that, $\lim_{t \to \infty} a_{k_t} < \infty $ exists. Therefore, using~(\ref{eq:for_seq_convg}),  
$\|\x^{k_t+1} - \y^{k_t}\| \to 0, \|\mu^{k_t+1} - \mu^k_t\| \to 0$. Note that, due to~(\ref{eq:uk_for_convg2}) $\|\y^{k_t +1} - \y_{k_t}\| \to 0, \|\x^{k_t+1} - \x_{k_t}\| = \|\x^{k_t+1} - \y^{k_t+1} + \y^{k_t + 1} - \y^{k_t} + \y^{k_t} - \x_{k_t}\| \leq \|\x^{k_t+1} - \y^{k_t+1} \| + \| \y^{k_t + 1} - \y^{k_t} \| + \| \y^{k_t} - \x_{k_t}\| \to 0$. Further, $ \|\lambda^{k_t+1}-\lambda^{k_t}\| = \gamma \|\x^{k_t+1} - \y^{k_t+1}\| \to 0$ and $\|\x^{k_t+1} - \overline{\y}^{k_t+1} \| \leq \|\x^{k_t+1} - \y^{k_t+1} \| + \| \y^{k_t+1} - \overline{\y}^{k_t+1} \| \leq \|\x^{k_t+1} - \y^{k_t+1} \| + \sqrt{n} \eta_{k+1} \to 0$. 
Therefore, there exists a sub-sequence $(\x^{k_t}, \y^{k_t}, \lambda^{k_t}, \mu^{k_t})$ that converges to a limit point $(\x^\infty, \y^\infty, \lambda^\infty, \mu^\infty)$. 
Next, we show that the limit point $(\x^\infty,\y^\infty,\lambda^\infty,\mu^\infty)$ satisfy the constraints in problem~(\ref{eq:distOpt_indifunc1}) and the objective function value $\F(\x^\infty) = \F(\x^*)$. To this end, we will utilize the following Lemma.
\begin{lemma}\label{lem:prox}
Let $\phi: \mathbb{R}^p \rightarrow \mathbb{R}$ be a convex function. Given, $\tilde{z} \in \mathbb{R}^p$ and a positive number $\gamma > 0$, if $\hat{z}$ is a proximal minimization point, i.e., $\hat{z}:= \argmin\limits_{z} \phi(z) + \frac{1}{2\gamma}\|z - \tilde{z}\|^2$, then we have, for any $z \in \mathbb{R}^p$
\begin{align}\label{eq:proxineq}
    \textstyle 2\gamma(\phi(\hat{z})-\phi(z)) \leq \|\tilde{z} - z\|^2 - \|\hat{z} - z\|^2 - \|\hat{z} - \tilde{z}\|^2.
\end{align}
\end{lemma}
\textit{Proof:} Let $\Phi(z) := \phi(z) + \frac{1}{2\gamma} \|z - \tilde{z}\|^2$. Since, $\phi$ is convex it follows that $\Phi(z)$ is strongly convex with modulus $\frac{1}{\gamma}$. By the definition of the proximal minimization step, we have $0 \in \partial \Phi(\hat{z})$. Therefore, 
\begin{align*}
  \Phi(z) &\geq \textstyle \Phi(\hat{z}) + \frac{\gamma}{2} \|z - \hat{z}\|^2 \implies \\
   \textstyle 2\gamma(\phi(\hat{z})-\phi(z)) &\leq \|\tilde{z} - z\|^2 - \|\hat{z} - z\|^2 - \|\hat{z} - \tilde{z}\|^2. \qed
\end{align*}
\noindent Using Lemma~\ref{lem:prox} for the updates~(\ref{eq:dadmm_x}) and~(\ref{eq:dadmm_y}) with $\hat{z} = \x^{k+1}, \tilde{z} = \y^{k}$, $z=\x$ and $\hat{z} = \overline{\y}^{k+1}, \tilde{z} = \x^{k+1}$, $z=\y$ we obtain,
\begin{align*}
    \textstyle \frac{2}{\gamma}&(\La(\x^{k+1},\y^{k},\lambda^k,\mu^{k}) \textstyle + \frac{\gamma}{2} \|\A \x^{k+1} - \bb\|^2 - \La(\x,\y^k,\lambda^k,\mu^{k}) \\
    & \hspace{-0.2in}\textstyle - \frac{\gamma}{2} \|\A \x - \bb\|^2) \leq \|\y^{k} - \x\|^2 - \|\x^{k+1} - \x\|^2 - \|\x^{k+1} - \y^{k}\|^2 \\
    \textstyle \frac{2}{\gamma}&(\La(\x^{k+1},\overline{\y}^{k+1},\lambda^k,\mu^{k}) - \La(\x^{k+1},\y,\lambda^k,\mu^{k})) \\
    & \leq \|\x^{k+1} - \y\|^2 - \|\overline{\y}^{k+1} - \y\|^2 - \|\overline{\y}^{k+1} - \x^{k+1}\|^2.
\end{align*}
Adding the above two inequalities we obtain, 
\begin{align*}
    \textstyle \frac{2}{\gamma}&(\La(\x^{k+1},\overline{\y}^{k+1},\lambda^k,\mu^{k}) + \textstyle \frac{\gamma}{2} \|\A \x^{k+1} - \bb\|^2 - \La(\x,\y,\lambda^k,\mu^{k}) \\
    & \hspace{-0.1in} \textstyle - \frac{\gamma}{2} \|\A \x - \bb\|^2 ) \leq \|\y^{k} - \x\|^2 + \|\x^{k+1} - \y\|^2 - \|\x^{k+1} - \x\|^2 \\
    & \hspace{0.2in} - \|\overline{\y}^{k+1} - \y\|^2 - \|\x^{k+1} - \y^{k}\|^2 - \|\overline{\y}^{k+1} - \x^{k+1}\|^2.
\end{align*}
Taking limit over the appropriate sub-sequences on both sides of the above inequality, we get,
\begin{align}\label{eq:saddle1}
    \hspace{-0.15in} \La(\x^\infty,\y^\infty,\lambda^\infty,\mu^\infty) \leq \La(\x,\y,\lambda^\infty,\mu^\infty) + \textstyle \frac{\gamma}{2} \|\A \x - \bb\|^2
\end{align}
Note that the updates~(\ref{eq:dadmm_mu}) and~(\ref{eq:dadmm_lam_ne}) can be written as the following minimization updates: $\forall i\in \mathcal{V}$,
\begin{align}
    \mu_i^{k+1} &= \textstyle \argmin\limits_{\mu_i} \{\textstyle -\lambda_i^{k^\top} (x_i^{k+1} - y_i^{k+1}) - \mu_{i}^\top(\textcolor{black}{A_i} x_i^{k+1} - \textcolor{black}{b_i})\nonumber \\
   & \hspace{0.7in} \textstyle  + \frac{1}{2\gamma} \|\mu_i - \mu_i^k\|^2 \},  \label{eq:mu_update_min} \\
   & = \mu_i^{k} + \textstyle \gamma (\textcolor{black}{A_i} x_i^{k+1} - \textcolor{black}{b_i}),  \nonumber\\
   & \hspace{-0.35in} \overline{\lambda}_i^{k+1} = \textstyle \argmin\limits_{\lambda_i} \{\textstyle -\lambda_i^{\top} (x_i^{k+1} - \overline{y}_i^{k+1}) \textstyle + \frac{1}{2\gamma} \|\lambda_i - \lambda_i^k\|^2 \} \label{eq:lam_update_min}\\
   & = \lambda_i^{k} + \textstyle \gamma(x_i^{k+1} - \overline{y}_i^{k+1}). \nonumber
\end{align}
Applying Lemma~\ref{lem:prox} to the updates~(\ref{eq:mu_update_min}) and~(\ref{eq:lam_update_min}) with $\hat{z} = \mu^{k+1}, z = \mu, \tilde{z} = \mu^k$ and $\hat{z} = \overline{\lambda}^{k+1}, z = \lambda, \tilde{z} = \lambda^k$ respectively and adding the two inequalities result in,
\begin{align*}
    \textstyle 2\gamma&(\La(\x^{k+1},\overline{\y}^{k+1},\lambda,\mu) - \La(\x^{k+1},\overline{\y}^{k+1},\overline{\lambda}^{k+1},\mu^{k+1})) \\
    & \leq \|\mu^k - \mu\|^2 + \|\lambda^k - \lambda\|^2 - \|\mu^{k+1} - \mu\|^2 \\
    & - \|\mu^{k+1} - \mu^k\|^2- \|\overline{\lambda}^{k+1} - \lambda\|^2 - \|\overline{\lambda}^{k+1} - \lambda^k\|^2.
\end{align*}
Taking limit over the appropriate sub-sequences on both sides of the above inequality, we get,
\begin{align}\label{eq:saddle2}
    \La(\x^\infty,\y^\infty,\lambda,\mu) \leq \La(\x^\infty,\y^\infty,\lambda^\infty,\mu^\infty).
\end{align}
Using~(\ref{eq:saddle1}) with $\x = \x^*, \y=\y^*$,~(\ref{eq:saddle2}) with $\lambda = \lambda^*, \mu = \mu^*$, the saddle point relation in Assumption~\ref{ass:prob_ass1} and~(\ref{eq:eqconst_opt}), we get,
\begin{align}
   &\La(\x^*,\y^*,\lambda^*,\mu^*) \leq  \La(\x^\infty,\y^\infty,\lambda^*, \mu^*) \leq \La(\x^\infty,\y^\infty,\lambda^\infty, \nonumber\\ 
   & \mu^\infty) \leq \La(\x^*,\y^*,\lambda^\infty,\mu^\infty) \leq \La(\x^*,\y^*,\lambda^*,\mu^*).  \nonumber\\
   & \mbox{Therefore, } \ \La(\x^*,\y^*,\lambda^*,\mu^*) = \La(\x^\infty,\y^\infty,\lambda^\infty,\mu^\infty). 
\end{align}
This implies that $\F(\x^\infty) + \mathcal{I}_{\mathcal{C}_0}(\y^\infty) + \lambda^{\infty \top}(\x^\infty - \y^\infty) + \mu^{\infty \top} (\A \x^\infty - \bb) = \F(\x^*) < \infty$. Therefore, $\y^\infty \in \mathcal{C}_0$. Further, since, $\x^\infty = \y^\infty$ as $\|\lambda^{k+1} - \lambda^k\| \to 0$ and $\A \x^\infty = \bb$ since, $\|\mu^{k+1} - \mu^{k} \| \to 0$. Thus, $\F(\x^\infty) = \F(\x^*)$.  To complete the proof, we show that the $(\x^{k},\y^{k},\lambda^{k},\mu^{k})$ has a unique limit point; we utilize the same argument as in \cite{rockafellar1976monotone}, Theorem~{1}. Let $r_1=(\x_1^\infty,\y_1^\infty,\lambda_1^\infty,\mu_1^\infty)$ and $r_2=(\x_2^\infty,\y_2^\infty,\lambda_2^\infty,\mu_2^\infty)$ be any two limit points of $(\x^k,\y^k,\lambda^k,\mu^k)$. As shown above, both $r_1$ and $r_2$ are saddle points of $\La(\x,\y,\lambda,\mu)$. Then, from~(\ref{eq:uk_for_convg2}) for appropriate sub-sequences we have $\lim_{k_t \rightarrow \infty} \|\x^{k_t} - \x_i^\infty\|^2 + \|\y^{k_t} - \y_i^\infty\|^2 + \|\lambda^{k_t} - \lambda_i^\infty\|^2 + \|\mu^{k_t} - \mu_i^\infty\|^2 = \hat{r}_j < 0$, for $j = 1,2.$ Consider,
\begin{align*}
  & \|\x^k - \x_1^\infty\|^2 + \|\y^k - \y_1^\infty\|^2 + \|\lambda^k - \lambda_1^\infty\|^2 + \|\mu^k - \mu_1^\infty\|^2 \\ 
  & - \|\x^k - \x_2^\infty\|^2 - \|\y^k - \y_2^\infty\|^2 - \|\lambda^k - \lambda_2^\infty\|^2 - \|\mu^k - \mu_2^\infty\|^2 \\ 
  & = \|r_1\|^2 - \|r_2\|^2 - 2(r_1 - r_2)^\top(\x^{k\top},\y^{k\top},\lambda^{k\top},\mu^{k\top}).
\end{align*}
Taking the limit both sides for each limit point we obtain,
\begin{align*}
  \hat{r}_1 - \hat{r}_2 &= \|r_1\|^2 - \|r_2\|^2 \\ 
  &- 2 ( \x_1^{\infty \top}(\x_1^\infty - \x_2^\infty) + \y_1^{\infty \top}(\y_1^\infty - \y_2^\infty) \\
   & + \lambda_1^{\infty \top}(\lambda_1^\infty - \lambda_2^\infty) + \mu_1^{\infty \top}(\mu_1^\infty - \mu_2^\infty))\\
   & = -\|r_1-r_2\|^2, \ \mbox{and}, \\
   \hat{r}_1 - \hat{r}_2 &= \|r_1\|^2 - \|r_2\|^2 \\ 
  &- 2 ( \x_2^{\infty \top}(\x_1^\infty - \x_2^\infty) + \y_2^{\infty \top}(\y_1^\infty - \y_2^\infty) \\
   & \hspace{-0.2in} + \lambda_2^{\infty \top}(\lambda_1^\infty - \lambda_2^\infty) + \mu_2^{\infty \top}(\mu_1^\infty - \mu_2^\infty)) = \|r_1- r_2\|^2.
\end{align*}
Therefore, we must have $\|r_1-r_2\| = 0$ and hence, $(\x^\infty,\y^\infty,\lambda^\infty,\mu^\infty)$ is unique. This completes the proof. \qed

\subsection{Proof of Theorem~\ref{thm:order1/k}}\label{sec:proof_order1/k}
\noindent From~(\ref{eq:iteroptcond2}) we have, for $i=1,\dots,n$,
\begin{align*}
    df_i(x_i^{k+1}) + \lambda_i^{k+1} + \textcolor{black}{A_i}^\top \mu_i^{k+1} +  \gamma(y_i^{k+1} - y_i^{k}) = 0, 
\end{align*}
where, $df_i(x_i^{k+1})$ is a sub-gradient of $f_i$ at $x_i^{k+1}$. Writing the above $n$ inequalities compactly,
\begin{align}
    d\F&(\x^{k+1}) + \lambda^{k+1} + \A^\top\mu^{k+1} + \gamma(\y^{k+1} - \y^{k}) = 0 \nonumber,
\end{align}
where, $d\F(\x^{k+1})$ is a vector with subgradients $df_i(x_i^{k+1})$ stacked together respectively. Further, using~(\ref{eq:iteroptcond3}) $d(\overline{\y}^{k+1}) + \overline{\lambda}^{k+1} = 0$, where $d(\overline{\y}^{k+1})$ is a sub-gradient of $\mathcal{I}_{\mathcal{C}_0}$ at $\overline{\y}^{k+1}$. Noticing that $\F$ and $\mathcal{I}_{\mathcal{C}_0}$ are convex functions and using~(\ref{eq:dadmm_lam_ne}) we have,
\begin{align}
    \F(\x&^{k+1}) -  \F(\x^*) + \mathcal{I}_{\mathcal{C}_0}(\overline{\y}^{k+1}) - \mathcal{I}_{\mathcal{C}_0}(\y^*) \nonumber\\
    \leq & - [\x^* - \x^{k+1}]^\top d\F(\x^{k+1}) - [\y^* - \overline{\y}^{k+1}]^\top d(\overline{\y}^{k+1}) \nonumber\\
    \leq & \ [\x^* - \x^{k+1}]^\top[\lambda^{k+1} + \A^\top\mu^{k+1} + \gamma(\y^{k+1} - \y^{k})] \nonumber \\
    &  - [\y^* - \overline{\y}^{k+1}]^\top \overline{\lambda}^{k+1} \nonumber 
\end{align}    
Note that $\mathcal{I}_{\mathcal{C}_0}(\overline{\y}^{k+1}) = \mathcal{I}_{\mathcal{C}_0}(\y^*) = 0$. Using~(\ref{eq:dadmm_lam}) and~(\ref{eq:dadmm_lam_ne}), 
\begin{align*}
  & \F(\x^{k+1}) - \F(\x^*) \\
  & \leq [\x^* - \x^{k+1}]^\top[\lambda^{k+1} + \A^\top\mu^{k+1} + \gamma(\y^{k+1} - \y^{k})] \nonumber \\
    & \hspace{0.2in} - [\y^* - \overline{\y}^{k+1}]^\top[\lambda^{k+1} + \gamma(\y^{k+1} - \overline{\y}^{k+1})] \\ 
    & \leq [\x^* - \x^{k+1}]^\top[\lambda^{k+1} + \A^\top\mu^{k+1} + \gamma(\y^{k+1} - \y^{k})] \nonumber \\
    & \hspace{0.2in} \textstyle - [\y^* - \overline{\y}^{k+1}]^\top \lambda^{k+1} + \gamma e^{k+1^\top}(\overline{\y}^{k+1} - \y^*) \\
    & = [\x^* - \x^{k+1}]^\top[\lambda^{k+1} + \A^\top\mu^{k+1} + \gamma(\y^{k+1} - \y^{k})] \nonumber \\
    & \hspace{0.2in} \textstyle - [\y^* - \y^{k+1}]^\top \lambda^{k+1} + e^{k+1^\top} (-\lambda^{k+1} + \gamma (\overline{\y}^{k+1} - \y^*)) \\
    & = [\y^{k+1} - \x^{k+1}]^\top \lambda^{k+1} +  \gamma[\x^* - \x^{k+1}]^\top[\y^{k+1} - \y^{k}] \nonumber \\
    & \hspace{0.02in} + [\x^* - \x^{k+1}]^\top\A^\top\mu^{k+1} + e^{k+1^\top} (-\lambda^{k+1} + \gamma (\overline{\y}^{k+1} - \y^*)).
\end{align*}
Therefore, for any $\lambda$ and $\mu$, using~(\ref{eq:eqconst_opt}) we have
\begin{align}
    &\F(\x^{k+1}) -  \F(\x^*) + \lambda^\top (\x^{k+1} - \y^{k+1}) + \mu^\top (\A\x^{k+1} - \bb) \leq \nonumber \\
    & [\lambda^{k+1} - \lambda]^\top [\y^{k+1} - \x^{k+1}] +  \gamma[\x^* - \x^{k+1}]^\top[\y^{k+1} - \y^{k}] + \nonumber\\
    & [\mu-\mu^{k+1}]^\top \A (\x^{k+1} - \bb) + e^{k+1^\top} (-\lambda^{k+1} + \gamma (\overline{\y}^{k+1} - \y^*)) \nonumber. 
\end{align}
Using~(\ref{eq:dadmm_lam}) and~(\ref{eq:dadmm_mu}), we get,
\begin{align}
    &\F(\x^{k+1}) -  \F(\x^*) + \lambda^\top (\x^{k+1} - \y^{k+1}) + \mu^\top (\A\x^{k+1} - \bb) \nonumber \\
    & \leq  \textstyle \frac{1}{\gamma}[\lambda - \lambda^{k+1} ]^\top [\lambda^{k+1} - \lambda^{k}] +  \gamma[\x^* - \x^{k+1}]^\top[\y^{k+1} - \y^{k}] \nonumber\\
    & \textstyle \hspace{0.2in} + \frac{1}{\gamma}[\mu-\mu^{k+1}]^\top [\mu^{k+1} - \mu^{k}] \nonumber \\
    &  \hspace{0.2in}  + e^{k+1^\top} (-\lambda^{k+1} + \gamma (\overline{\y}^{k+1} - \y^*)), \label{eq:one}
\end{align} 
Using the identity~(\ref{eq:loc}) for any $s \geq 0$ we get,
\begin{align}
  \textstyle \frac{1}{\gamma}[\lambda - \lambda^{s+1} ]^\top [\lambda^{s+1} - \lambda^{s}] & = \textstyle \frac{1}{2\gamma} (\|\lambda - \lambda^{s}\|^2 \label{eq:two}\\
    - \|\lambda & - \lambda^{s+1}\|^2 - \|\lambda^{s+1} - \lambda^{s}\|^2 ), \nonumber \\
  \gamma [\x^*  -\x^{s+1}]^\top[\y^{s+1} - \y^{s}]  &= \textstyle \frac{\gamma}{2} (\|\x^{s+1} - \y^{s+1}\|^2    \nonumber \\
  - \|\x^{s+1} - \y^{s}\|^2 + \|\x^* & - \y^{s}\|^2 - \|\x^* - \y^{s+1}\|^2 ), \label{eq:three} \\
  & \hspace{-1.65in} \textstyle \frac{1}{\gamma}[\mu-\mu^{k+1}]^\top [\mu^{k+1} - \mu^{k}] = \textstyle \frac{1}{2\gamma} (\|\mu - \mu^k\|^2 - \|\mu - \mu^{k+1}\|^2 \nonumber \\
  & \hspace{0.35in} - \|\mu^{k+1} - \mu^k\|^2) \label{eq:four}.
\end{align}
Note that from Lemma~\ref{lem:ylambdabound}, ~(\ref{eq:dadmm_lam}) and~(\ref{eq:error}), under Assumption~\ref{ass:constraint_ass},
$e^{k+1^\top} (-\lambda^{k+1} + \gamma (\overline{\y}^{k+1} - \y^*)) = - e^{k+1^\top} \lambda^{k+1} + \gamma e^{k+1^\top} \y^{k+1} - \gamma \|e^{k+1}\|^2 - e^{k+1^\top}\y^* \leq \|e^{k+1}\|\|\lambda^{k+1}\| + \gamma \|e^{k+1}\| \|\y^{k+1}\| + \|e^{k+1}\|\|\y^*\| \leq (k+1) \mathbf{Q} \sqrt{n} \eta_{k+1} + (\gamma n (\mathcal{M} + \|x^*\|) + 2(k+1) \sqrt{n} \mathbf{Q})  \eta_{k+1} + \sqrt{n}\|\y^*\| \eta_{k+1}$.  Let $\M:= \gamma n (\mathcal{M} + \|x^*\|) + \sqrt{n}\|\y^*\|$.
From~(\ref{eq:one}),~(\ref{eq:two}),~(\ref{eq:three}) and~(\ref{eq:four}) for any $\lambda, \mu$ and $s \geq 0$ we get,
\begin{align*}
  & \F(\x^{s+1}) -  \F(\x^*) + \lambda^\top (\x^{s+1} - \y^{s+1}) + \mu^\top (\A\x^{k+1} - \bb)\\
  & \leq  \textstyle \frac{1}{2\gamma} (\|\lambda - \lambda^{s}\|^2 - \|\lambda - \lambda^{s+1}\|^2 ) + \frac{\gamma}{2} (\|\x^* - \y^{s}\|^2   \\
  & \textstyle \hspace{0.15in} - \|\x^* - \y^{s+1}\|^2) + \frac{1}{2\gamma} (\|\mu - \mu^s\|^2 - \|\mu - \mu^{s+1}\|^2) \\
  & \textstyle \hspace{0.15in} +  \frac{\M}{s^{2+q}}  + 3\sqrt{n} \mathbf{Q}(s+1) \frac{1}{(s+1)^{2+q}}.
\end{align*}
Summing from $s=0$ to $k - 1$ and dividing by $\frac{1}{k}$ gives,
\begin{align*}
  & \textstyle \frac{1}{k} \sum_{s=0}^{k-1}[ \F(\x^{s+1}) -  \F(\x^*) + \lambda^\top (\x^{s+1} - \y^{s+1}) \\
  & \mu^\top (\A\x^{s+1} - \bb)] \leq \textstyle \frac{\gamma \|\x^* - \y^0\|^2}{2k} + \frac{\|\lambda - \lambda^0\|^2}{2\gamma k} + \frac{\|\mu - \mu^0\|^2}{2\gamma k} \\
  & \textstyle \hspace{.82in} + \frac{\M \sum_{s=0}^{k-1} \frac{1}{s^{2+q}}}{k} + \frac{3\sqrt{n} \mathbf{Q} \sum_{s=0}^{k-1} (s+1) \frac{1}{(s+1)^{2+q}}}{k},
\end{align*}
Let $\widehat{\x}^k := \frac{1}{k} \sum_{s=0}^{k-1}\x^{s+1}$, and $\widehat{\y}^k := \frac{1}{k} \sum_{s=0}^{k-1} \y^{s+1}$. Since, $\F$ is a convex function we have, $\frac{1}{k}\sum_{s=0}^{k-1}\F(\x^{s+1}) \geq \F(\widehat{\x}^k)$. Therefore,
\begin{align}
  & \F(\widehat{\x}^k) -  \F(\x^*) + \lambda^\top (\widehat{\x}^k - \widehat{\y}^k) + \mu^\top (\A\widehat{\x}^k - \bb) \leq \label{eq:reuse_eq} \\
  & \textstyle \frac{\gamma \|\x^* - \y^0\|^2}{2k} + \frac{\|\lambda - \lambda^0\|^2}{2\gamma k} + \frac{\|\mu - \mu^0\|^2}{2\gamma k} + \frac{\M \zeta(2+q)+ 3\sqrt{n} \mathbf{Q} \zeta(1+q)}{k} \nonumber ,
\end{align}
where, $\zeta(.)$ is the Riemann zeta function. 
Let $\mu = 0, \lambda = 0$, therefore, we get,
\begin{align*} 
  & \F(\widehat{\x}^k) -  \F(\x^*) \leq \textstyle \frac{\gamma \|\x^* - \y^0\|^2}{2k} + \frac{\|\lambda^0\|^2 + \|\mu^0\|^2}{2\gamma k} \\
  & \hspace{1.2in} \textstyle + \frac{\M \zeta(2+q)+ 3\sqrt{n} \mathbf{Q} \zeta(1+q)}{k} = O(1/k).
\end{align*}  
Define, $\widehat{\overline{\y}}^k:= \frac{1}{k} \sum_{s=0}^{k-1} \overline{\y}^{s+1}$, where, $\overline{\y}^{s+1}$ is as defined in~(\ref{eq:y_over}). Since, $(\x^*,\y^*,\lambda^*,\mu^*)$ is a saddle point we have, $\La(\x^*,\y^*,\lambda,\mu) \leq \La(\widehat{\x}^k,\widehat{\overline{\y}}^k,\lambda^*,\mu^*)$. Therefore, $\F(\widehat{\x}^k) - \F(\x^*) + \lambda^{*\top}(\widehat{x}^k - \widehat{\y}^k) + \lambda^{*\top}(\widehat{\y}^k - \widehat{\overline{\y}}^k) + \mu^{*\top}(\A \widehat{\x}^k - \bb) \geq 0$. Let $\lambda = \lambda^* + \frac{\widehat{x}^k - \widehat{y}^k}{\|\widehat{x}^k - \widehat{y}^k\|}$ and $\mu = \mu^*$. Therefore,
\begin{align*} 
  & \|\widehat{\x}^k - \widehat{\y}^k\| \leq \F(\widehat{\x}^k) -  \F(\x^*) + \lambda^{*\top} (\widehat{\x}^k - \widehat{\y}^k)  + \nonumber \\
  & \mu^{*\top} (\A\widehat{\x}^k - \bb) + \lambda^{*\top}(\widehat{\y}^k - \widehat{\overline{\y}}^k) + \|\widehat{\x}^k - \widehat{\y}^k\| \leq \textstyle \frac{\gamma \|\x^* - \y^0\|^2}{2k} \nonumber \\
  & \textstyle + \frac{1 + \|\lambda^0 - \lambda^*\|^2}{\gamma k} + \frac{\|\mu^* - \mu^0\|^2}{2\gamma k} + \frac{\M \zeta(2+q)+ 3\sqrt{n} \mathbf{Q} \zeta(1+q)}{k} \\
  & + \lambda^{*\top}(\widehat{\y}^k - \widehat{\overline{\y}}^k).
\end{align*}
Note that, $\|\lambda^{*\top}(\widehat{\y}^k - \widehat{\overline{\y}}^k)\| \leq \|\lambda^*\|\| \frac{1}{k} \sum_{s=0}^{k-1} [\y^{s+1} - \overline{\y}^{s+1}]  \| = \frac{\|\lambda^*\|}{k} \sum_{s=0}^{k-1} \|\y^{s+1} - \overline{\y}^{s+1}\| = \frac{\|\lambda^*\|}{k} \sum_{s=0}^{k-1} \|e^{s+1}\| \leq \frac{\|\lambda^*\|}{k} \sum_{s=0}^{k-1} \sqrt{n} \eta_{s+1} \leq \frac{\sqrt{n}\|\lambda^*\|}{k} \sum_{s=0}^\infty \eta_{s+1} = \frac{\sqrt{n}\|\lambda^*\|}{k} \zeta(2+q).$ Therefore, 
\begin{align*}
    \|\widehat{\x}^k - \widehat{\y}^k\| &\leq \textstyle \frac{\gamma \|\x^* - \y^0\|^2}{2k} + \frac{1 + \|\lambda^0 - \lambda^*\|^2}{\gamma k} + \frac{\|\mu^* - \mu^0\|^2}{2\gamma k} \\
    & \hspace{0.08in} \textstyle + \frac{(\M + \sqrt{n}\|\lambda^*\|)\zeta(2+q)+ 3\sqrt{n} \mathbf{Q} \zeta(1+q)}{k} = O(1/k).
\end{align*}
Alternatively, let, $\mu = \mu^* + \frac{\A\widehat{\x}^k - \bb}{\|\A\widehat{\x}^k - \bb\|}$ and $\lambda = \lambda^*$, we get,
\begin{align*} 
  & \|\A\widehat{\x}^k - \bb\| \leq \F(\widehat{\x}^k) -  \F(\x^*) + \lambda^{*\top} (\widehat{\x}^k - \widehat{\y}^k) + \nonumber \\
  & \mu^{*\top} (\A\widehat{\x}^k - \bb) + \lambda^{*\top}(\widehat{\y}^k - \widehat{\overline{\y}}^k) + \|\A\widehat{\x}^k - \bb\| \leq \textstyle \frac{\gamma \|\x^* - \y^0\|^2}{2k}  \nonumber \\
  & \textstyle + \frac{\|\lambda^0 - \lambda^*\|^2}{2\gamma k} + \frac{1 + \|\mu - \mu^0\|^2}{\gamma k} + \frac{\M \zeta(1+q)}{k} + \lambda^{*\top}(\widehat{\y}^k - \widehat{\overline{\y}}^k) \\
  & \textstyle \leq \frac{\gamma \|\x^* - \y^0\|^2}{2k} + \frac{\|\lambda^0 - \lambda^*\|^2}{2\gamma k} + \frac{1 + \|\mu - \mu^0\|^2}{\gamma k} \\
  & \hspace{0.08in}\textstyle + \frac{(\M + \sqrt{n}\|\lambda^*\|)\zeta(2+q)+ 3\sqrt{n} \mathbf{Q} \zeta(1+q)}{k} = O(1/k). \hspace{0.5in} \qed
\end{align*}

\subsection{Proof of Theorem~\ref{thm:linearrate}}\label{sec:proof_linearrate}
\noindent From the definition of $\F$ and Assumption~\ref{ass:lipschitzgrad_f_strconv} it follows that $\F$ is Lipschitz differentiable with constant $L := \max_{1\leq i \leq n} L_{f_i}$ for all $\tilde{\x} \in \relint(\mathcal{X}), \tilde{\x}_s \in \relint(^n\mathbb{R}^m_{\geq 0})$ and restricted strongly convex with respect to the optimal solution $\x^*$ on the set $\mathcal{X}$ with parameter $\sigma := \min_{1 \leq i \leq n} \sigma_i$. 
For sub-gradients $d\F(\x^{k+1}) \in \partial \F(\x^{k+1})$ and $d\F(\x^*) \in \partial \F(\x^*)$, using~(\ref{eq:xsolnoptcond}) and~(\ref{eq:iteroptcond2}) under Assumption~\ref{ass:lipschitzgrad_f_strconv},
\begin{align*}
& \textstyle \sigma\| \x^{k+1} - \x^* \|^2  \leq [\x^{k+1} - \x^*]^\top[d\F(\x^{k+1}) - d\F(\x^*)] \\
& = [\x^{k+1} - \x^*]^\top[-\lambda^{k+1} - \gamma(\y^{k+1} - \y^{k}) - \A^\top \mu^{k+1} \\
& \hspace{1 in} + \lambda^* + \A^\top \mu^*] \\
& = [\x^{k+1} - \x^*]^\top[\lambda^* - \lambda^{k+1}] - \gamma  [\x^{k+1} - \x^*]^\top [\y^{k+1} - \y^{k}]\\ 
& \hspace{0.1in} + [\x^{k+1} - \x^*]^\top \A^\top [\mu^* - \mu^{k+1}] \\
& = [\x^{k+1} - \x^*]^\top[\lambda^*-\lambda^{k+1}] + [\x^{k+1} - \x^*]^\top \A^\top [\mu^* - \mu^{k+1}]  \\
& \hspace{0.1in} + \gamma[\x^* - \x^{k+1}]^\top [\x^{k+1} - \x^{k}] \\
& \hspace{0.1in} + [\x^* - \x^{k+1}]^\top [2\lambda^k - \lambda^{k+1} - \lambda^{k-1}]. 
\end{align*}
Using,~((\ref{eq:lam_y-ystar1}) and (\ref{eq:lam_y-ystar2}) we get,
\begin{align*}
& \leq [\x^{k+1} - \x^*]^\top[\lambda^* -\lambda^{k+1}] + \gamma[\x^* - \x^{k+1}]^\top [\x^{k+1} - \x^{k}] \\
& \hspace{0.03in} + [\x^{k+1} - \x^*]^\top \A^\top [\mu^* - \mu^{k+1}] + [\overline{\y}^{k+1} - \y^*]^\top [\lambda^{k+1} - \lambda^*] \\
& \hspace{0.1in} - [\overline{\y}^{k+1} - \y^*]^\top \lambda^{k+1} + [\x^* - \x^{k+1}]^\top [2\lambda^k - \lambda^{k+1} - \lambda^{k-1}] \\
& = [\lambda^* - \lambda^{k+1}]^\top[\x^{k+1} - \y^{k+1}] + \gamma[\x^* - \x^{k+1}]^\top [\x^{k+1} - \x^{k}] \\
& \hspace{0.1in} + [\x^{k+1} - \x^*]^\top \A^\top [\mu^* - \mu^{k+1}] - [\overline{\y}^{k+1} - \y^*]^\top \lambda^{k+1}\\
& \hspace{0.1in}  + [\lambda^* - \lambda^{k+1}]^\top  e^{k+1} + [\x^* - \x^{k+1}]^\top [2\lambda^k - \lambda^{k+1} - \lambda^{k-1}]\\
& = \textstyle \frac{1}{\gamma}[\lambda^* - \lambda^{k+1} ]^\top[\lambda^{k+1} - \lambda^{k}] + \gamma[\x^* - \x^{k+1}]^\top [\x^{k+1} - \x^{k}]  \\
& \hspace{0.1in} + \textstyle \frac{1}{\gamma}[\mu^* - \mu^{k+1}]^\top[\mu^{k+1} - \mu^{k}] - [\overline{\y}^{k+1} - \y^*]^\top \lambda^{k+1} \\
& \hspace{0.1in} + [\lambda^* - \lambda^{k+1}]^\top  e^{k+1} + [\x^* - \x^{k+1}]^\top [2\lambda^k - \lambda^{k+1} - \lambda^{k-1}],
\end{align*}
where, in the second equality we used~(\ref{eq:error}), and in the last equality we used~(\ref{eq:dadmm_lam}) and~(\ref{eq:dadmm_mu}). Let $t_k:= 2\lambda^k - \lambda^{k+1} - \lambda^{k-1}$. Using~(\ref{eq:loc}) for $\frac{1}{\gamma}[\lambda^* - \lambda^{k+1} ]^\top[\lambda^{k+1} - \lambda^{k}]$, $\textstyle \frac{1}{\gamma}[\mu^* - \mu^{k+1}]^\top[\mu^{k+1} - \mu^{k}]$, and $\gamma[\x^* - \x^{k+1}]^\top [\x^{k+1} - \x^{k}]$,
\begin{align}
& \textstyle \sigma\| \x^{k+1} - \x^* \|^2 + \frac{\gamma}{2}\|\x^{k+1} - \x^{k}\|^2 + \frac{1}{2\gamma} \|\lambda^{k+1} - \lambda^k\|^2 \nonumber \\
& \hspace{-0.05in} \textstyle \leq \left[ \frac{\gamma}{2}  \|\x^{k} - \x^* \|^2 + \frac{1}{2\gamma} \|\lambda^k - \lambda^*\|^2 + \frac{1}{2\gamma} \|\mu^{k} - \mu^* \|^2 \right] \label{eq:for_lin_rate}\\
& \hspace{-0.05in}  \textstyle - \left[ \frac{\gamma}{2}  \|\x^{k+1} - \x^* \|^2 + \frac{1}{2\gamma} \|\lambda^{k+1} - \lambda^*\|^2 + \frac{1}{2\gamma} \|\mu^{k+1} - \mu^* \|^2  \right] \nonumber  \\ 
& \hspace{-0.05in}  - [\overline{\y}^{k+1} - \y^*]^\top \lambda^{k+1} + [\lambda^* - \lambda^{k+1}]^\top  e^{k+1} + [\x^* - \x^{k+1}]^\top t_k. \nonumber
\end{align}
Observe from~(\ref{eq:xsolnoptcond}),~(\ref{eq:iteroptcond2}), $\nabla \F(\x^{k+1}) - \nabla \F(\x^*) = -\lambda^{k+1} -\gamma(\y^{k+1} - \y^{k}) - \A^\top \mu^{k+1} + \lambda^* + \A^\top \mu^*  = -\lambda^{k+1} + \lambda^* - \A^\top \mu^{k+1} + \A^\top \mu^* + \lambda^{k+1}+\lambda^{k-1} - 2 \lambda^k - \gamma(\x^{k+1} - \x^{k})$, where we used~(\ref{eq:dadmm_lam}) to get a relation between $\y^{k+1}$ and $\y^k$. Therefore, $\gamma (\x^{k+1} - \x^k)$ is the summation of the terms $ \mathbf{q}_1 := \nabla \F(\x^*) - \nabla \F(\x^{k+1})$ and $ \mathbf{r}_1 := \lambda^* -\lambda^{k+1} + \A^\top \mu^* - \A^\top \mu^{k+1} + \lambda^{k+1}+\lambda^{k-1} - 2 \lambda^k$. Therefore, for any $\delta > 1$, we can apply the inequality $\|\mathbf{q}_1+\mathbf{r}_1\|^2 + (\delta- 1)\|\mathbf{q}_1\|^2 \geq (1 - \frac{1}{\delta} )\|\mathbf{r}_1\|^2$,
\begin{align*}
    & \gamma^2 \|\x^{k+1} - \x^k\|^2 + (\delta - 1)
    \|\nabla \F(\x^{k+1}) - \nabla \F(\x^*)\|^2  \\
    & \geq \textstyle (1 - \frac{1}{\delta}) \|\lambda^* -\lambda^{k+1} + \A^\top \mu^* - \A^\top \mu^{k+1} - t_k\|^2 \\
    & \geq \textstyle (1 - \frac{1}{\delta}) \Big[ \| \lambda^{k+1} - \lambda^* \|^2 + \nu_{\min} (\A\A^\top) \| \mu^{k+1} - \mu^*\|^2 \\
    & \hspace{0.45in} + \|t_k\|^2 - 2 \|\lambda^{k+1} -\lambda^* \| \| \A^\top \mu^* - \A^\top \mu^{k+1} - t_k\| \\
    & \hspace{0.45in} - 2 \|t_k\| \|\A^\top \mu^* - \A^\top \mu^{k+1}\|\Big],
\end{align*}
where, $\nu_{\min} (\A\A^\top) > 0$. By Lipschitz differentiablility of $\F$,
\begin{align}
    &\textstyle \frac{\gamma}{2} \|\x^{k+1} - \x^k\|^2 + \frac{(\delta - 1) L^2}{2\gamma} \| \x^{k+1} - \x^*\|^2  \label{eq:sk_1}\\
    & \geq \textstyle \frac{(1 - \frac{1}{\delta})}{2\gamma}  \Big[ \| \lambda^{k+1} - \lambda^*\|^2 + \nu_{\min} (\A\A^\top) \| \mu^{k+1} - \mu^*\|^2 \nonumber \\
    & \hspace{0.62in} - 2 \|\lambda^{k+1} - \lambda^* \| \| \A^\top \mu^* - \A^\top \mu^{k+1} - t_k\| \nonumber \\
    & \hspace{0.62in} - 2 \|t_k\| \|\A^\top \mu^* - \A^\top \mu^{k+1}\| \Big] \nonumber. 
\end{align}
Similarly, $\lambda^{k+1} - \lambda^k = \gamma (\x^{k+1} - \y^{k+1}) = \gamma (\x^{k+1} - \x^*) + \gamma(\y^* - \y^{k+1})$. Therefore, $\lambda^{k+1} - \lambda^k$ is the summation of $\mathbf{q}_2:= \gamma (\x^{k+1} - \x^*)$ and $\mathbf{r}_2:= \gamma (\y^* - \y^{k+1})$. Therefore, for any $\delta > 1$, we can apply the inequality $\|\mathbf{q}_2+\mathbf{r}_2\|^2 + (\delta- 1)\|\mathbf{q}_2\|^2 \geq (1 - \frac{1}{\delta} )\|\mathbf{r}_2\|^2$, and obtain,
\begin{align}
    & \|\lambda^{k+1} - \lambda^k\|^2 + \gamma^2(\delta - 1)
    \|\x^{k+1} - \x^*\|^2 \nonumber \\
    & \geq \textstyle (1 - \frac{1}{\delta}) \|\gamma (\y^{k+1} - \y^*)\|^2 \nonumber \\
    & = \textstyle (1 - \frac{1}{\delta}) \| \lambda^{k+1} - \lambda^k - \gamma (\x^{k+1} - \x^*)\|^2 \nonumber\\
    & \geq \textstyle (1 - \frac{1}{\delta}) \Big[ \| \lambda^{k+1} - \lambda^k\|^2 + \gamma^2 \|\x^{k+1} - \x^*\|^2 \nonumber \\
    & \hspace{0.65in} - 2 \gamma \|\lambda^{k+1} - \lambda^k \| \|\x^{k+1} - \x^*\|\Big] \label{eq:sk_2}.
\end{align}
Therefore, using~(\ref{eq:sk_1}) and~(\ref{eq:sk_2}) we get,
\begin{align}
    & \textstyle \frac{\gamma}{2} \|\x^{k+1} - \x^k\|^2 + \frac{1}{2\gamma}\|\lambda^{k+1} - \lambda^k\|^2 + \frac{(\delta - 1)(\frac{L^2}{\gamma} + \gamma)}{2} 
    \|\x^{k+1} - \x^*\|^2 \nonumber \\
    & \geq \textstyle (1 - \frac{1}{\delta})  \Big[ \frac{1}{2 \gamma}\| \lambda^{k+1} - \lambda^*\|^2 + \frac{\gamma}{2} \|\x^{k+1} - \x^*\|^2 \nonumber \\
    & \textstyle \hspace{0.65in} + \frac{\nu_{\min}(\A\A^\top)}{2 \gamma} \| \mu^{k+1} - \mu^*\|^2 \Big]  \nonumber \\
    & \textstyle - \frac{(1 - 1/\delta)}{\gamma} \Big[  \|\lambda^{k+1} - \lambda^*\| \| \A^\top \mu^* - \A^\top \mu^{k+1} - t_k\|   \nonumber \\
    & \hspace{0.15in} + \|t_k\| \|\A^\top \mu^* - \A^\top \mu^{k+1}\| + \gamma \|\lambda^{k+1} - \lambda^k \| \|\x^{k+1} - \x^*\| \Big]. \nonumber 
\end{align}
Let $\Delta:= (1 - \frac{1}{\delta}) \min\{ 1, \nu_{\min}(\A\A^\top)\}$ and $s_k:= \frac{\gamma}{2}  \|\x^{k} - \x^* \|^2 + \frac{1}{2\gamma} \|\lambda^{k} - \lambda^*\|^2 + \frac{1}{2 \gamma} \| \mu^{k+1} - \mu^*\|^2$. Therefore, for $\delta \in (1, 1 + \frac{2\sigma \gamma}{L^2 + \gamma^2}]$, we have, 
\begin{align}
    &\textstyle \sigma\| \x^{k+1} - \x^* \|^2 + \frac{\gamma}{2}\|\x^{k+1} - \x^{k}\|^2 + \frac{1}{2\gamma} \|\lambda^{k+1} - \lambda^k\|^2 \label{eq:main_1}\\     
    & \geq \textstyle \Delta s_{k+1} - \frac{(1 - 1/\delta)}{\gamma} \big[ \|\lambda^{k+1} - \lambda^* \| \| \A^\top \mu^* - \A^\top \mu^{k+1} - t_k\| \nonumber \\
    & + \|t_k\| \|\A^\top \mu^* - \A^\top \mu^{k+1}\| +  \gamma \|\lambda^{k+1} - \lambda^k \| \|\x^{k+1} - \x^*\| \big]. \nonumber
\end{align}
Therefore, from~(\ref{eq:for_lin_rate}) and~(\ref{eq:main_1}), we have
\begin{align}
    & s_{k+1} \leq \textstyle (\frac{1}{1 + \Delta}) s_k + (\frac{1}{1 + \Delta}) \Big( \|\lambda^{k+1} - \lambda^*\| \|e^{k+1}\| \label{eq:main_12} \\
    & \hspace{0.45in} + \textstyle \|\overline{\y}^{k+1} - \y^*\| \|\lambda^{k+1}\| + \|\x^* - \x^{k+1}\|\|t_k\| \nonumber \\
    & \textstyle + \frac{(1 - 1/\delta)}{\gamma} \big[ \|\lambda^{k+1} - \lambda^* \| \| \A^\top \mu^* - \A^\top \mu^{k+1}  - t_k\| \nonumber  \\
    & \textstyle + \|t_k\| \|\A^\top \mu^* - \A^\top \mu^{k+1}\| +  \gamma \|\lambda^{k+1} - \lambda^k \| \|\x^{k+1} - \x^*\| \big] \Big).\nonumber
\end{align}
Using~(\ref{eq:uk_for_convg2}),
\begin{align}
    \|\lambda^{k} - \lambda^*\| &\leq \textstyle \sqrt{2\gamma (s_0 + \frac{\mathbf{R}}{1-\rho} + \frac{3\sqrt{n} \mathbf{Q} \rho}{(1-\rho)^2})}:= \Xi. \label{eq:xi1_lam}
\end{align}
Let $S_1:= \|\lambda^{k+1} - \lambda^*\| \|e^{k+1}\|, S_2:= \|\overline{\y}^{k+1} - \y^*\|\|\lambda^{k+1}\|, S_3:= \|\x^* - \x^{k+1}\| \|t_k\|, S_4:= \frac{(1 - 1/\delta)}{\gamma} \|\lambda^{k+1} - \lambda^* \| \| \A^\top \mu^* - \A^\top \mu^{k+1} - t_k\|, S_5:= \frac{(1- 1/\delta)}{\gamma} \|t_k\| \|\A^\top \mu^* - \A^\top \mu^{k+1}\|$ and $S_6:= (1- 1/\delta)  \|\lambda^{k+1} - \lambda^k \| \|\x^{k+1} - \x^*\|$. 
\textcolor{black}{Note $S_1 = \|\lambda^{k+1} - \lambda^*\|\|e^{k+1}\| \leq \sqrt{n}\Xi \rho^{k+1}$. Further, from Theorem~\ref{thm:convergence}, there exist $\mathbf{C} < \infty$ and $\mathbf{S}^k$ with $\lim_{k \to \infty} \mathbf{S}^k = 0$ such that $S_2 + S_3 + S_4 + S_5 + S_6 \leq \mathbf{C} \mathbf{S}^k$.} 
\textcolor{black}{Therefore, from~(\ref{eq:main_12}) and using the fact $\rho^{k+1} \leq \rho^{k}, \forall k$
\begin{align}
    & s_{k+1} \leq \textstyle (\frac{1}{1 + \Delta}) s_k + \sqrt{n} \Xi (\frac{1}{1 + \Delta}) \rho^{k} + \mathbf{C}(\frac{1}{1 + \Delta}) \mathbf{S}^k .
\end{align}
By repeated substitution for $k = 0, 1, \dots, K$ we get,
\begin{align*}
   s_K & \leq \textstyle ( \frac{1}{1+\Delta} )^K s_0 + \sqrt{n} \Xi \sum_{t=0}^{K} ( \frac{1}{1+\Delta} )^{t} \rho^{K-t} \\ 
   & \hspace{1.5in} \textstyle + \mathbf{C} \sum_{t=0}^{K} ( \frac{1}{1+\Delta} )^{K-t+1} \mathbf{S}^t  \\
    & = \textstyle ( \frac{1}{1+\Delta} )^K s_0 + \rho^K \sqrt{n} \Xi \sum_{t=1}^{K} ( \frac{1}{1+\Delta} )^{t} \rho^{-t}\\ 
    & \hspace{1.5in} \textstyle + \mathbf{C} \sum_{t=0}^{K} ( \frac{1}{1+\Delta} )^{K-t+1} \mathbf{S}^t  \\
    &\leq \textstyle \rho^K \big(s_0 + \sqrt{n} \Xi \sum_{t=1}^{K} ( \frac{1}{1+\Delta} )^{t} \rho^{-t} \big) \\ 
     & \hspace{1.5in} \textstyle + \mathbf{C} \sum_{t=0}^{K} ( \frac{1}{1+\Delta} )^{K-t+1} \mathbf{S}^t.  
\end{align*}
Given, $\epsilon > 0$, let $K^\prime$ be such that $\mathbf{S}^k < \epsilon$ for all $k \geq K^\prime$ (Theorem~\ref{thm:convergence} guarantees existence of such $K^\prime$). Let $\overline{\mathbf{S}}:= \max_{0 \leq t \leq K^\prime} \mathbf{S}^t$, therefore, 
\begin{align*}
   s_K & \leq \textstyle \rho^K \big(s_0 + \sqrt{n} \Xi \sum_{t=1}^{K} ( \frac{1}{1+\Delta} )^{t} \rho^{-t} + \mathbf{C} \overline{\mathbf{S}} \sum_{t=0}^{K^\prime} \rho^{1-t} \big) \\ 
     & \hspace{1.5in} \textstyle + \epsilon \left(\mathbf{C} \sum_{t={K^\prime}}^{K} \rho^{K-t+1}\right)\\
     & \leq \textstyle \rho^K \big(s_0 +  \frac{\sqrt{n} \Xi\rho(1+\Delta)}{\rho(1+\Delta) - 1}  + \frac{\mathbf{C} \overline{\mathbf{S}} (\rho^{-K^\prime +1} - \rho^2 )}{1-\rho}\big) + \epsilon \left(\frac{\mathbf{C}}{1 - \rho}\right)\\
     & = \Upsilon \rho^K + O(\epsilon).  
\end{align*}
\begin{align} \label{eq:upslion}
    \hspace{-1.75in} \text{where, } \textstyle \Upsilon := s_0 + \frac{\mathbf{S}_2\rho(1+\Delta)}{\rho(1+\Delta) - 1}.
\end{align}
This completes the proof. \qed
}

\bibliography{references}

\begin{thebibliography}{10}
\providecommand{\url}[1]{#1}
\csname url@samestyle\endcsname
\providecommand{\newblock}{\relax}
\providecommand{\bibinfo}[2]{#2}
\providecommand{\BIBentrySTDinterwordspacing}{\spaceskip=0pt\relax}
\providecommand{\BIBentryALTinterwordstretchfactor}{4}
\providecommand{\BIBentryALTinterwordspacing}{\spaceskip=\fontdimen2\font plus
\BIBentryALTinterwordstretchfactor\fontdimen3\font minus
  \fontdimen4\font\relax}
\providecommand{\BIBforeignlanguage}[2]{{%
\expandafter\ifx\csname l@#1\endcsname\relax
\typeout{** WARNING: IEEEtran.bst: No hyphenation pattern has been}%
\typeout{** loaded for the language `#1'. Using the pattern for}%
\typeout{** the default language instead.}%
\else
\language=\csname l@#1\endcsname
\fi
#2}}
\providecommand{\BIBdecl}{\relax}
\BIBdecl

\bibitem{nedic2018distributed}
A.~Nedi{\'c} and J.~Liu, ``Distributed optimization for control,'' \emph{Annual
  Review of Control, Robotics, and Autonomous Systems}, vol.~1, pp. 77--103,
  2018.

\bibitem{nedic2020distributed}
A.~Nedic, ``Distributed gradient methods for convex machine learning problems
  in networks: Distributed optimization,'' \emph{IEEE Signal Processing
  Magazine}, vol.~37, no.~3, pp. 92--101, 2020.

\bibitem{wu2021distributed}
X.~Wu, H.~Wang, and J.~Lu, ``Distributed optimization with coupling
  constraints,'' \emph{arXiv preprint arXiv:2102.12989}, 2021.

\bibitem{falsone2017dual}
A.~Falsone, K.~Margellos, S.~Garatti, and M.~Prandini, ``Dual decomposition for
  multi-agent distributed optimization with coupling constraints,''
  \emph{Automatica}, vol.~84, pp. 149--158, 2017.

\bibitem{notarnicola2019constraint}
I.~Notarnicola and G.~Notarstefano, ``Constraint-coupled distributed
  optimization: a relaxation and duality approach,'' \emph{IEEE Transactions on
  Control of Network Systems}, vol.~7, no.~1, pp. 483--492, 2019.

\bibitem{tsitsiklis1984problems}
J.~N. Tsitsiklis, ``Problems in decentralized decision making and
  computation.'' Massachusetts Inst of Tech Cambridge Lab for Information and
  Decision Systems, Tech. Rep., 1984.

\bibitem{bertsekas1989parallel}
D.~P. Bertsekas and J.~N. Tsitsiklis, \emph{Parallel and distributed
  computation: numerical methods}.\hskip 1em plus 0.5em minus 0.4em\relax
  Prentice hall Englewood Cliffs, NJ, 1989, vol.~23.

\bibitem{nedic2014distributed}
A.~Nedi{\'c} and A.~Olshevsky, ``Distributed optimization over time-varying
  directed graphs,'' \emph{IEEE Transactions on Automatic Control}, vol.~60,
  no.~3, pp. 601--615, 2014.

\bibitem{xin2020gradient}
R.~Xin, S.~Kar, and U.~A. Khan, ``Gradient tracking and variance reduction for
  decentralized optimization and machine learning,'' \emph{arXiv preprint
  arXiv:2002.05373}, 2020.

\bibitem{zeng2015extrapush}
J.~Zeng and W.~Yin, ``Extrapush for convex smooth decentralized optimization
  over directed networks,'' \emph{arXiv preprint arXiv:1511.02942}, 2015.

\bibitem{pu2020push}
S.~Pu, W.~Shi, J.~Xu, and A.~Nedic, ``Push-pull gradient methods for
  distributed optimization in networks,'' \emph{IEEE Transactions on Automatic
  Control}, 2020.

\bibitem{khatana2020gradient}
\BIBentryALTinterwordspacing
V.~Khatana, G.~Saraswat, S.~Patel, and M.~V. Salapaka, ``Gradient-consensus
  method for distributed optimization in directed multi-agent networks,''
  \emph{arXiv preprint arXiv:1909.10070}, 2019. [Online]. Available:
  \url{https://arxiv.org/pdf/1909.10070v7.pdf}
\BIBentrySTDinterwordspacing

\bibitem{MAL-016}
\BIBentryALTinterwordspacing
S.~Boyd, N.~Parikh, E.~Chu, B.~Peleato, and J.~Eckstein, ``Distributed
  optimization and statistical learning via the alternating direction method of
  multipliers,'' \emph{Foundations and Trends® in Machine Learning}, vol.~3,
  no.~1, pp. 1--122, 2011. [Online]. Available:
  \url{http://dx.doi.org/10.1561/2200000016}
\BIBentrySTDinterwordspacing

\bibitem{wei2012distributed}
E.~Wei and A.~Ozdaglar, ``Distributed alternating direction method of
  multipliers,'' in \emph{2012 IEEE 51st IEEE Conference on Decision and
  Control (CDC)}.\hskip 1em plus 0.5em minus 0.4em\relax IEEE, 2012, pp.
  5445--5450.

\bibitem{makhdoumi2017convergence}
A.~Makhdoumi and A.~Ozdaglar, ``Convergence rate of distributed admm over
  networks,'' \emph{IEEE Transactions on Automatic Control}, vol.~62, no.~10,
  pp. 5082--5095, 2017.

\bibitem{shi2014linear}
W.~Shi, Q.~Ling, K.~Yuan, G.~Wu, and W.~Yin, ``On the linear convergence of the
  admm in decentralized consensus optimization,'' \emph{IEEE Transactions on
  Signal Processing}, vol.~62, no.~7, pp. 1750--1761, 2014.

\bibitem{wei20131}
E.~Wei and A.~Ozdaglar, ``On the $\large{O}(1/k)$ convergence of asynchronous
  distributed alternating direction method of multipliers,'' in \emph{2013 IEEE
  Global Conference on Signal and Information Processing}.\hskip 1em plus 0.5em
  minus 0.4em\relax IEEE, 2013, pp. 551--554.

\bibitem{chang2014multi}
T.-H. Chang, M.~Hong, and X.~Wang, ``Multi-agent distributed optimization via
  inexact consensus admm,'' \emph{IEEE Transactions on Signal Processing},
  vol.~63, no.~2, pp. 482--497, 2014.

\bibitem{iutzeler2015explicit}
F.~Iutzeler, P.~Bianchi, P.~Ciblat, and W.~Hachem, ``Explicit convergence rate
  of a distributed alternating direction method of multipliers,'' \emph{IEEE
  Transactions on Automatic Control}, vol.~61, no.~4, pp. 892--904, 2015.

\bibitem{mansoori2019flexible}
F.~Mansoori and E.~Wei, ``A flexible framework of first-order primal-dual
  algorithms for distributed optimization,'' \emph{arXiv preprint
  arXiv:1912.07526}, 2019.

\bibitem{gabay1983chapter}
D.~Gabay, ``Chapter ix applications of the method of multipliers to variational
  inequalities,'' in \emph{Studies in mathematics and its applications}.\hskip
  1em plus 0.5em minus 0.4em\relax Elsevier, 1983, vol.~15, pp. 299--331.

\bibitem{chen2020distributed}
G.~Chen, Q.~Yang, Y.~Song, and F.~L. Lewis, ``A distributed continuous-time
  algorithm for nonsmooth constrained optimization,'' \emph{IEEE Transactions
  on Automatic Control}, vol.~65, no.~11, pp. 4914--4921, 2020.

\bibitem{chen2021fixed}
------, ``Fixed-time projection algorithm for distributed constrained
  optimization on time-varying digraphs,'' \emph{IEEE Transactions on Automatic
  Control}, 2021.

\bibitem{zhu2011distributed}
M.~Zhu and S.~Mart{\'\i}nez, ``On distributed convex optimization under
  inequality and equality constraints,'' \emph{IEEE Transactions on Automatic
  Control}, vol.~57, no.~1, pp. 151--164, 2011.

\bibitem{tian2019distributed}
F.~Tian, W.~Yu, J.~Fu, W.~Gu, and J.~Gu, ``Distributed optimization of
  multiagent systems subject to inequality constraints,'' \emph{IEEE
  transactions on cybernetics}, 2019.

\bibitem{yang2016multi}
S.~Yang, Q.~Liu, and J.~Wang, ``A multi-agent system with a
  proportional-integral protocol for distributed constrained optimization,''
  \emph{IEEE Transactions on Automatic Control}, vol.~62, no.~7, pp.
  3461--3467, 2016.

\bibitem{liu2015second}
Q.~Liu and J.~Wang, ``A second-order multi-agent network for bound-constrained
  distributed optimization,'' \emph{IEEE Transactions on Automatic Control},
  vol.~60, no.~12, pp. 3310--3315, 2015.

\bibitem{zhou2019adaptive}
H.~Zhou, X.~Zeng, and Y.~Hong, ``Adaptive exact penalty design for constrained
  distributed optimization,'' \emph{IEEE Transactions on Automatic Control},
  vol.~64, no.~11, pp. 4661--4667, 2019.

\bibitem{lin2021angle}
P.~Lin, J.~Xu, W.~Ren, C.~Yang, and W.~Gui, ``Angle-based analysis approach for
  distributed constrained optimization,'' \emph{IEEE Transactions on Automatic
  Control}, 2021.

\bibitem{liu2017constrained}
Q.~Liu, S.~Yang, and Y.~Hong, ``Constrained consensus algorithms with fixed
  step size for distributed convex optimization over multiagent networks,''
  \emph{IEEE Transactions on Automatic Control}, vol.~62, no.~8, pp.
  4259--4265, 2017.

\bibitem{nedic2010constrained}
A.~Nedic, A.~Ozdaglar, and P.~A. Parrilo, ``Constrained consensus and
  optimization in multi-agent networks,'' \emph{IEEE Transactions on Automatic
  Control}, vol.~55, no.~4, pp. 922--938, 2010.

\bibitem{li2020distributed}
H.~Li, Q.~L{\"u}, G.~Chen, T.~Huang, and Z.~Dong, ``Distributed constrained
  optimization over unbalanced directed networks using asynchronous
  broadcast-based algorithm,'' \emph{IEEE Transactions on Automatic Control},
  vol.~66, no.~3, pp. 1102--1115, 2020.

\bibitem{xie2018distributed}
P.~Xie, K.~You, R.~Tempo, S.~Song, and C.~Wu, ``Distributed convex optimization
  with inequality constraints over time-varying unbalanced digraphs,''
  \emph{IEEE Transactions on Automatic Control}, vol.~63, no.~12, pp.
  4331--4337, 2018.

\bibitem{mota2013d}
J.~F. Mota, J.~M. Xavier, P.~M. Aguiar, and M.~P{\"u}schel, ``D-admm: A
  communication-efficient distributed algorithm for separable optimization,''
  \emph{IEEE Transactions on Signal Processing}, vol.~61, no.~10, pp.
  2718--2723, 2013.

\bibitem{yuan2015regularized}
D.~Yuan, D.~W. Ho, and S.~Xu, ``Regularized primal--dual subgradient method for
  distributed constrained optimization,'' \emph{IEEE transactions on
  cybernetics}, vol.~46, no.~9, pp. 2109--2118, 2015.

\bibitem{lei2016primal}
J.~Lei, H.-F. Chen, and H.-T. Fang, ``Primal--dual algorithm for distributed
  constrained optimization,'' \emph{Systems \& Control Letters}, vol.~96, pp.
  110--117, 2016.

\bibitem{patel2017distributed}
S.~Patel, S.~Attree, S.~Talukdar, M.~Prakash, and M.~V. Salapaka, ``Distributed
  apportioning in a power network for providing demand response services,'' in
  \emph{2017 IEEE International Conference on Smart Grid Communications
  (SmartGridComm)}.\hskip 1em plus 0.5em minus 0.4em\relax IEEE, 2017, pp.
  38--44.

\bibitem{patel2020distributed}
S.~Patel, V.~Khatana, G.~Saraswat, and M.~V. Salapaka, ``Distributed detection
  of malicious attacks on consensus algorithms with applications in power
  networks,'' 2020.

\bibitem{shi2015extra}
W.~Shi, Q.~Ling, G.~Wu, and W.~Yin, ``Extra: An exact first-order algorithm for
  decentralized consensus optimization,'' \emph{SIAM Journal on Optimization},
  vol.~25, no.~2, pp. 944--966, 2015.

\bibitem{jiang2021fully}
W.~Jiang and T.~Charalambous, ``Fully distributed alternating direction method
  of multipliers in digraphs via finite-time termination mechanisms,''
  \emph{arXiv preprint arXiv:2107.02019}, 2021.

\bibitem{jiang2021distributed}
------, ``Distributed alternating direction method of multipliers using
  finite-time exact ratio consensus in digraphs,'' in \emph{2021 European
  Control Conference (ECC)}.\hskip 1em plus 0.5em minus 0.4em\relax IEEE, 2021,
  pp. 2205--2212.

\bibitem{jiang2021asynchronous}
W.~Jiang, A.~Grammenos, E.~Kalyvianaki, and T.~Charalambous, ``An asynchronous
  approximate distributed alternating direction method of multipliers in
  digraphs,'' \emph{arXiv preprint arXiv:2104.11866}, 2021.

\bibitem{rokade2020distributed}
K.~Rokade and R.~K. Kalaimani, ``Distributed admm over directed graphs,''
  \emph{arXiv preprint arXiv:2010.10421}, 2020.

\bibitem{khatana2020cdc}
V.~Khatana and M.~V. Salapaka, ``D-distadmm: A $\large{O}(1/k)$ distributed
  admm for distributed optimization in directed graph topologies,'' in
  \emph{59th IEEE Conference on Decision and Control (CDC)}.\hskip 1em plus
  0.5em minus 0.4em\relax IEEE, 2020, pp. 2992--2997.

\bibitem{Die06}
R.~Diestel, \emph{Graph Theory}.\hskip 1em plus 0.5em minus 0.4em\relax Berlin,
  Germany: Springer-Verlag, 2006.

\bibitem{horn2012matrix}
R.~A. Horn and C.~R. Johnson, \emph{Matrix analysis}.\hskip 1em plus 0.5em
  minus 0.4em\relax Cambridge university press, 2012.

\bibitem{rockafellar2015convex}
R.~T. Rockafellar, \emph{Convex analysis}.\hskip 1em plus 0.5em minus
  0.4em\relax Princeton university press, 2015.

\bibitem{wu2020second}
X.~Wu, Z.~Qu, and J.~Lu, ``A second-order proximal algorithm for consensus
  optimization,'' \emph{IEEE Transactions on Automatic Control}, vol.~66,
  no.~4, pp. 1864--1871, 2020.

\bibitem{melbourne2020geometry}
\BIBentryALTinterwordspacing
J.~Melbourne, G.~Saraswat, V.~Khatana, S.~Patel, and M.~V. Salapaka, ``On the
  geometry of consensus algorithms with application to distributed termination
  in higher dimension,'' \emph{in the proceedings of International Federation
  of Automatic Control (IFAC)}, 2020. [Online]. Available:
  \url{http://box5779.temp.domains/~jamesmel/wp-content/uploads/2019/11/Vector_Consensus__IFAC_-1.pdf}
\BIBentrySTDinterwordspacing

\bibitem{kempe2003gossip}
D.~Kempe, A.~Dobra, and J.~Gehrke, ``Gossip-based computation of aggregate
  information,'' in \emph{44th Annual IEEE Symposium on Foundations of Computer
  Science, 2003. Proceedings.}\hskip 1em plus 0.5em minus 0.4em\relax IEEE,
  2003, pp. 482--491.

\bibitem{dominguez2011distributed}
A.~D. Dom{\'\i}nguez-Garc{\'\i}a and C.~N. Hadjicostis, ``Distributed
  strategies for average consensus in directed graphs,'' in \emph{2011 50th
  IEEE Conference on Decision and Control and European Control
  Conference}.\hskip 1em plus 0.5em minus 0.4em\relax IEEE, 2011, pp.
  2124--2129.

\bibitem{benezit2010weighted}
F.~B{\'e}n{\'e}zit, V.~Blondel, P.~Thiran, J.~Tsitsiklis, and M.~Vetterli,
  ``Weighted gossip: Distributed averaging using non-doubly stochastic
  matrices,'' in \emph{2010 IEEE International Symposium on Information
  Theory}.\hskip 1em plus 0.5em minus 0.4em\relax IEEE, 2010, pp. 1753--1757.

\bibitem{bertsekas1997nonlinear}
D.~P. Bertsekas, ``Nonlinear programming,'' \emph{Journal of the Operational
  Research Society}, vol.~48, no.~3, pp. 334--334, 1997.

\bibitem{zhang2015restricted}
H.~Zhang and L.~Cheng, ``Restricted strong convexity and its applications to
  convergence analysis of gradient-type methods in convex optimization,''
  \emph{Optimization Letters}, vol.~9, no.~5, pp. 961--979, 2015.

\bibitem{bach2014adaptivity}
F.~Bach, ``Adaptivity of averaged stochastic gradient descent to local strong
  convexity for logistic regression,'' \emph{The Journal of Machine Learning
  Research}, vol.~15, no.~1, pp. 595--627, 2014.

\bibitem{berahas2018balancing}
A.~S. Berahas, R.~Bollapragada, N.~S. Keskar, and E.~Wei, ``Balancing
  communication and computation in distributed optimization,'' \emph{IEEE
  Transactions on Automatic Control}, vol.~64, no.~8, pp. 3141--3155, 2018.

\bibitem{berahas2021convergence}
A.~S. Berahas, R.~Bollapragada, and E.~Wei, ``On the convergence of nested
  decentralized gradient methods with multiple consensus and gradient steps,''
  \emph{IEEE Transactions on Signal Processing}, vol.~69, pp. 4192--4203, 2021.

\bibitem{chen2012fast}
A.~I.-A. Chen, ``Fast distributed first-order methods,'' Ph.D. dissertation,
  Massachusetts Institute of Technology, 2012.

\bibitem{li2018sharp}
H.~Li, C.~Fang, W.~Yin, and Z.~Lin, ``A sharp convergence rate analysis for
  distributed accelerated gradient methods,'' \emph{arXiv preprint
  arXiv:1810.01053}, 2018.

\bibitem{ye2020multi}
H.~Ye, L.~Luo, Z.~Zhou, and T.~Zhang, ``Multi-consensus decentralized
  accelerated gradient descent,'' \emph{arXiv preprint arXiv:2005.00797}, 2020.

\bibitem{jakovetic2014fast}
D.~Jakoveti{\'c}, J.~Xavier, and J.~M. Moura, ``Fast distributed gradient
  methods,'' \emph{IEEE Transactions on Automatic Control}, vol.~59, no.~5, pp.
  1131--1146, 2014.

\bibitem{johansson2008subgradient}
B.~Johansson, T.~Keviczky, M.~Johansson, and K.~H. Johansson, ``Subgradient
  methods and consensus algorithms for solving convex optimization problems,''
  in \emph{2008 47th IEEE Conference on Decision and Control}.\hskip 1em plus
  0.5em minus 0.4em\relax IEEE, 2008, pp. 4185--4190.

\bibitem{lin2015global}
T.~Lin, S.~Ma, and S.~Zhang, ``On the global linear convergence of the admm
  with multiblock variables,'' \emph{SIAM Journal on Optimization}, vol.~25,
  no.~3, pp. 1478--1497, 2015.

\bibitem{rockafellar1976augmented}
R.~T. Rockafellar, ``Augmented lagrangians and applications of the proximal
  point algorithm in convex programming,'' \emph{Mathematics of operations
  research}, vol.~1, no.~2, pp. 97--116, 1976.

\bibitem{nedic2017achieving}
A.~Nedic, A.~Olshevsky, and W.~Shi, ``Achieving geometric convergence for
  distributed optimization over time-varying graphs,'' \emph{SIAM Journal on
  Optimization}, vol.~27, no.~4, pp. 2597--2633, 2017.

\bibitem{erdHos1960evolution}
P.~Erd{\H{o}}s and A.~R{\'e}nyi, ``On the evolution of random graphs,''
  \emph{Publ. Math. Inst. Hung. Acad. Sci}, vol.~5, no.~1, pp. 17--60, 1960.

\bibitem{olshevsky2009convergence}
A.~Olshevsky and J.~N. Tsitsiklis, ``Convergence speed in distributed consensus
  and averaging,'' \emph{SIAM Journal on Control and Optimization}, vol.~48,
  no.~1, pp. 33--55, 2009.

\bibitem{beck2009fast}
A.~Beck and M.~Teboulle, ``A fast iterative shrinkage-thresholding algorithm
  for linear inverse problems,'' \emph{SIAM journal on imaging sciences},
  vol.~2, no.~1, pp. 183--202, 2009.

\bibitem{prakash2019distributed}
M.~Prakash, S.~Talukdar, S.~Attree, V.~Yadav, and M.~V. Salapaka, ``Distributed
  stopping criterion for consensus in the presence of delays,'' \emph{IEEE
  Transactions on Control of Network Systems}, 2019.

\bibitem{saraswat2019distributed}
G.~Saraswat, V.~Khatana, S.~Patel, and M.~V. Salapaka, ``Distributed
  finite-time termination for consensus algorithm in switching topologies,''
  \emph{arXiv preprint arXiv:1909.00059}, 2019.

\bibitem{gharesifard2012distributed}
B.~Gharesifard and J.~Cort{\'e}s, ``Distributed strategies for generating
  weight-balanced and doubly stochastic digraphs,'' \emph{European Journal of
  Control}, vol.~18, no.~6, pp. 539--557, 2012.

\bibitem{yadav2007distributed}
V.~Yadav and M.~V. Salapaka, ``Distributed protocol for determining when
  averaging consensus is reached,'' in \emph{45th Annual Allerton Conf}, 2007,
  pp. 715--720.

\bibitem{rockafellar1976monotone}
R.~T. Rockafellar, ``Monotone operators and the proximal point algorithm,''
  \emph{SIAM journal on control and optimization}, vol.~14, no.~5, pp.
  877--898, 1976.

\end{thebibliography}
\begin{IEEEbiography}[
{
\includegraphics[width=1in,height=1.25in,clip,keepaspectratio]{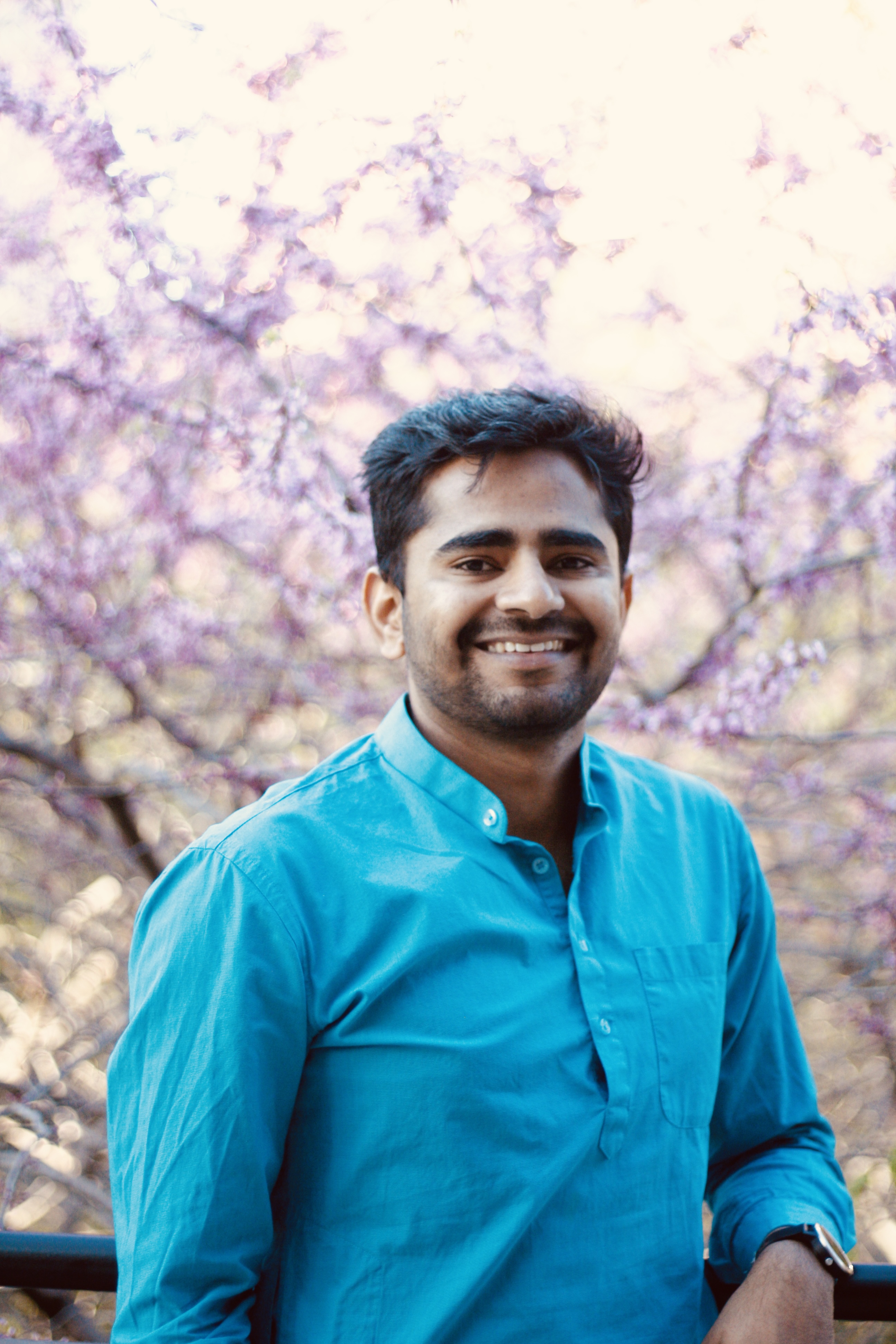}
}
]
{Vivek Khatana} received the B.Tech degree in Electrical Engineering from the Indian Institute of Technology, Roorkee, in 2018. Currently, he is working towards a Ph.D. degree at the department of Electrical Engineering at University of Minnesota. His research interests include distributed optimization, consensus algorithms, distributed control and stochastic calculus, power system analysis and control.
\end{IEEEbiography}
\begin{IEEEbiography}[
{
\includegraphics[width=1in,height=1.25in,clip,keepaspectratio]{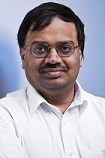}
}
]
{Murti V. Salapaka}
received the B.Tech. degree in Mechanical Engineering from the Indian Institute of Technology, Madras, in 1991 and the M.S. and Ph.D. degrees in Mechanical Engineering from the University of California at Santa Barbara, in 1993 and 1997, respectively. He was a faculty member in the Electrical and Computer Engineering Department, Iowa State University, Ames, from 1997 to 2007. Currently, he is the Director of Graduate Studies and the Vincentine Hermes Luh Chair Professor in the Electrical and Computer Engineering Department, University of Minnesota, Minneapolis. His research interests include control and network science, nanoscience and single molecule physics. Dr. Salapaka was the recipient of the NSF CAREER Award and the ISU Young Engineering Faculty Research Award for the years 1998 and 2001, and is an IEEE fellow. 
\end{IEEEbiography}

\end{document}